\newtheorem{Thm}{\bf Theorem}[section]
\newtheorem{Ass}{\bf Assumption}
\newtheorem{Def}[Thm]{\bf Definition}
\newtheorem{Cor}[Thm]{\bf Corollary}
\newtheorem{Lem}[Thm]{\bf Lemma}
\newtheorem{Prop}[Thm]{\bf Proposition}
\newtheorem*{Prob}{\bf Problem}
\theoremstyle{remark}
\newtheorem{Rem}[Thm]{\bf Remark}
\newtheorem*{Ex}{\bf Example}
\newcommand{\be}{\begin{equation}}
\newcommand{\ee}{\end{equation}}
\newcommand{\ba}{\begin{aligned}}
\newcommand{\ea}{\end{aligned}}
\newcommand{\R}{\mathbb{R}}
\newcommand{\N}{\mathbb{N}}
\newcommand{\E}{E}
\newcommand{\ind}{1{\hskip -3 pt}\hbox{I}}
\newcommand{\F}{\mathcal{F}}
\newcommand{\FF}{\mathbb{F}}
\DeclareMathOperator{\diag}{diag}
\DeclareMathOperator{\sign}{sign}
\renewcommand*\@fnsymbol[1]{\the#1}
\newcommand{\Keywords}[1]{\par\noindent{\small{\textbf{Key words\/:} #1}}}
\title{Diffusion-based models for financial markets \\ without martingale measures} 
\date{}
\author{Claudio Fontana\footnote{INRIA Paris - Rocquencourt, Domaine de Voluceau, Rocquencourt, BP 105, 78153 Le Chesnay Cedex (France), e-mail: \texttt{claudio.fontana@inria.fr}}
\and
Wolfgang J. Runggaldier\footnote{University of Padova, Department of Pure and Applied Mathematics, via Trieste 63, I-35121 Padova (Italy), e-mail: \texttt{runggal@math.unipd.it}}}
\begin{document}

\maketitle

\begin{abstract}
\noindent
In this paper we consider a general class of diffusion-based models and show that, even in the absence of an \emph{Equivalent Local Martingale Measure}, the financial market may still be viable, in the sense that strong forms of arbitrage are excluded and portfolio optimisation problems can be meaningfully solved. Relying partly on the recent literature, we provide necessary and sufficient conditions for market viability in terms of the \emph{market price of risk} process and \emph{martingale deflators}. Regardless of the existence of a martingale measure, we show that the financial market may still be complete and contingent claims can be valued under the original (\emph{real-world}) probability measure, provided we use as numeraire the \emph{Growth-Optimal Portfolio}.
\end{abstract}

\vspace{1cm}

\Keywords{arbitrage, hedging, contingent claim valuation, market price of risk, martingale deflator, growth-optimal portfolio, numeraire portfolio, market completeness, utility indifference valuation, benchmark approach.}

\clearpage

\setcounter{section}{-1}
\section{Introduction}

The concepts of \emph{Equivalent (Local) Martingale Measure} (E(L)MM), \emph{no-arbitrage} and \emph{risk-neutral pricing} can be rightfully considered as the cornerstones of modern mathematical finance. It seems to be almost folklore that such concepts can be regarded as mutually equivalent. In fact, most practical applications in quantitative finance are directly formulated under suitable assumptions which ensure that those concepts are indeed equivalent. 

In recent years, maybe due to the dramatic turbulences raging over financial markets, an increasing attention has been paid to models that allow for financial market anomalies. More specifically, several authors have studied market models where stock price bubbles may occur (see e.g. \cite{CH}, \cite{HLW}, \cite{H}, \cite{JPS1}, \cite{JPS2}). It has been shown that bubble phenomena are consistent with the classical no-arbitrage theory based on the notion of \emph{No Free Lunch with Vanishing Risk} (NFLVR), as developed in \cite{DS} and \cite{DS2}. However, in the presence of a bubble, discounted prices of risky assets are, under a risk-neutral measure, \emph{strict} local martingales, i.e. local martingales which are not true martingales. This fact already implies that several well-known and classical results (for instance the \emph{put-call parity} relation, see e.g. \cite{CH}) of mathematical finance do not hold anymore and must be modified accordingly.

A decisive step towards enlarging the scope of financial models has been represented by the study of models which do not fit at all into the classical no-arbitrage theory based on (NFLVR). Indeed, several authors (see e.g. \cite{CL}, \cite{DS3}, \cite{H}, \cite{KK}, \cite{LW}) have studied instances where an ELMM may fail to exist. More specifically, financial models that do not admit an ELMM appear in the context of \emph{Stochastic Portfolio Theory} (see \cite{FK} for a recent overview) and in the \emph{Benchmark Approach} (see the monograph \cite{PH} for a detailed account).
In the absence of a well-defined ELMM, many of the classical results of mathematical finance seem to break down and one is led to ask whether there is still a meaningful way to proceed in order to solve the fundamental problems of portfolio optimisation and contingent claim valuation. It is then a remarkable result that a satisfactory theory can be developed even in the absence of an ELMM, especially in the case of a complete financial market model, as we are going to illustrate.

The present paper aims at carefully analysing a general class of diffusion-based financial models, without relying on the existence of an ELMM. More specifically, we discuss several notions of no-arbitrage that are weaker than the traditional (NFLVR) condition and we study necessary and sufficient conditions for their validity. We show that the financial market may still be viable, in the sense that strong forms of arbitrage are banned from the market, even in the absence of an ELMM. In particular, it turns out that the viability of the financial market is fundamentally linked to a square-integrability property of the \emph{market price of risk} process. Some of the results that we are going to present have already been obtained, also in more general settings (see e.g. \cite{CL}, Chapter 4 of \cite{Fo}, \cite{HS}, \cite{KK}, \cite{Ka1} and \cite{Ka2}). However, by exploiting the It\^o-process structure, we are able to provide simple and transparent proofs, highlighting the key ideas behind the general theory. We also discuss the connections to the \emph{Growth-Optimal Portfolio} (GOP), which is shown to be the unique portfolio possessing the \emph{numeraire} property. In similar diffusion-based settings, related works that study the question of market viability in the absence of an ELMM include \cite{FK}, \cite{RG}, \cite{HLW}, \cite{LW}, \cite{Lo}, \cite{P0} and \cite{Ru}.

\clearpage
Besides studying the question of market viability, a major focus of this paper is on the valuation and hedging of contingent claims in the absence of an ELMM. In particular, we argue that the concept of \emph{market completeness}, namely the capability to replicate every contingent claim, must be kept distinct from the existence of an ELMM. Indeed, we prove that the financial market may be viable and complete regardless of the existence of an ELMM. We then show that, in the context of a complete financial market, there is a unique natural candidate for the price of an arbitrary contingent claim, given by its GOP-discounted expected value under the original (\emph{real-world}) probability measure. To this effect, we revisit some ideas originally appeared in the context of the \emph{Benchmark Approach}, providing more careful proofs and extending some previous results.

The paper is structured as follows. 
Section \ref{S1.1} introduces the general setting, which consists of a class of It\^o-process models satisfying minimal technical conditions. We introduce a basic standing assumption and we carefully describe the set of admissible trading strategies. The question of whether (properly defined) arbitrage opportunities do exist or not is dealt with in Section \ref{S1.1bis}. In particular, we explore the notions of \emph{increasing profit} and \emph{arbitrage of the first kind}, giving necessary and sufficient conditions for their absence from the financial market. In turn, this lead us to introduce the concept of \emph{martingale deflators}, which can be regarded as weaker counterparts to the traditional (density processes of) martingale measures. Section \ref{S2} proves the existence of an unique \emph{Growth-Optimal} strategy, which admits an explicit characterization and also generates the \emph{numeraire portfolio}. In turn, the latter is shown to be the reciprocal of a martingale deflator, thus linking the numeraire portfolio to the no-arbitrage criteria discussed in Section \ref{S1.1bis}. Section \ref{S1.2} starts with the hedging and valuation of contingent claims, showing that the financial market may be complete even in the absence of an ELMM. Section \ref{S4} deals with contingent claim valuation according to three alternative approaches: \emph{real-world pricing}, \emph{upper-hedging pricing} and \emph{utility indifference valuation}. In the particular case of a complete market, we show that they yield the same valuation formula. Section 6 concludes by pointing out possible extensions and further developments.

\section{The general setting}	\label{S1.1}

Let $\left(\Omega,\F,P\right)$ be a complete probability space. For a fixed time horizon $T\in\left(0,\infty\right)$, let $\FF=\left(\F_t\right)_{0\leq t \leq T}$ be a filtration on $\left(\Omega,\F,P\right)$ satisfying the usual conditions of right-continuity and completeness. Let $W=\left(W_t\right)_{0\leq t \leq T}$ be an $\R^d$-valued Brownian motion on the filtered probability space $\left(\Omega,\F,\FF,P\right)$. To allow for greater generality we do not assume from the beginning that $\FF=\FF^W$, meaning that the filtration $\FF$ may be strictly larger than the $P$-augmented Brownian filtration $\FF^W$. Also, the initial $\sigma$-field $\F_0$ may be strictly larger than the trivial $\sigma$-field.

We consider a \emph{financial market} composed of $N+1$ securities $S^i$, for $i=0,1,\ldots,N$, with $N\leq d$. As usual, we let $S^0$ represent a locally riskless asset, which we name \emph{savings account}, and we define the process $S^0=\left(S_t^0\right)_{0\leq t \leq T}$ as follows:
\be	\label{E1-1}
S_t^0 := \exp\left(\int_0^t r_u\,du\right)
\qquad \text{ for } t\in\left[0,T\right]
\ee
where the \emph{interest rate} process $r=\left(r_t\right)_{0\leq t \leq T}$ is a real-valued progressively measurable process such that $\int_0^T\left|r_t\right|dt<\infty$ $P$-a.s.
The remaining assets $S^i$, for $i=1,\ldots,N$, are supposed to be \emph{risky assets}. For $i=1,\ldots,N$, the process $S^i=\left(S^i_t\right)_{0\leq t \leq T}$ is given by the solution to the following SDE:
\be	\label{E1-2}
dS_t^i = S_t^i\,\mu_t^i\,dt + \sum_{j=1}^d S_t^i\,\sigma_t^{i,j}\,dW_t^j	
\qquad\quad
S_0^i = s^i 
\ee 
where:
\begin{itemize}
\item[\emph{(i)}] 
$s^i\in\left(0,\infty\right)$ for all $i=1,\ldots,N$;
\item[\emph{(ii)}] 
$\mu=\left(\mu_t\right)_{0\leq t \leq T}$ is an $\R^N$-valued progressively measurable process with $\mu_t=\left(\mu_t^1,\ldots,\mu_t^N\right)'$ and satisfying $\sum_{i=1}^N\int_0^T\left|\mu_t^i\right|dt<\infty$ $P$-a.s.;
\item[\emph{(iii)}] 
$\sigma=\left(\sigma_t\right)_{0\leq t \leq T}$ is an $\R^{N\times d}$-valued progressively measurable process with $\sigma_t=\left\{\sigma_t^{i,j}\right\}_{\substack{i=1,\ldots,N\\j=1,\ldots,d}}$ and satisfying $\sum_{i=1}^N\sum_{j=1}^d\int_0^T\left(\sigma_t^{i,j}\right)^2dt<\infty$ $P$-a.s.
\end{itemize}
The SDE \eqref{E1-2} admits the following explicit solution, for every $i=1,\ldots,N$ and $t\in\left[0,T\right]$:
\be	\label{E1-3}
S_t^i = s^i\exp\left(\int_0^t\biggl(\mu_u^i-\frac{1}{2}\sum_{j=1}^d\left(\sigma_u^{i,j}\right)^2\biggr)du
+\sum_{j=1}^d\int_0^t\sigma_u^{i,j}\,dW_u^j\right)
\ee
Note that conditions \emph{(ii)}-\emph{(iii)} above represent minimal conditions in order to have a meaningful definition of the ordinary and stochastic integrals appearing in \eqref{E1-3}. Apart from these technical requirements, we leave the stochastic processes $\mu$ and $\sigma$ fully general. For $i=0,1,\ldots,N$, we denote by $\bar{S}^i=\left(\bar{S}_t^i\right)_{0\leq t \leq T}$ the \emph{discounted price} process of the $i$-th asset, defined as $\bar{S}_t^i:=S^i_t/S^0_t$ for $t\in\left[0,T\right]$.

Let us now introduce the following standing Assumption, which we shall always assume to be satisfied without any further mention. 

\begin{Ass}	\label{A1-1}
For all $t\in\left[0,T\right]$, the $\left(N\times d\right)$-matrix $\sigma_t$ has $P$-a.s. full rank.
\end{Ass}

\begin{Rem}
From a financial perspective, Assumption \ref{A1-1} means that the financial market does not contain redundant assets, i.e. there does not exist a non-trivial linear combination of $\left(S^1,\ldots,S^N\right)$ that is locally riskless, in the sense that its dynamics are not affected by the Brownian motion $W$. However, we want to point out that Assumption \ref{A1-1} is only used in the following for proving uniqueness properties of trading strategies and, hence, could also be relaxed.
\end{Rem}

In order to rigorously describe the activity of trading in the financial market, we now introduce the concepts of \emph{trading strategy} and \emph{discounted portfolio process}. In the following Definition we only consider \emph{self-financing} trading strategies which generate positive portfolio processes.

\begin{Def}	\label{D1-1} $ $
\begin{itemize}
\item[(a)]
An $\R^N$-valued progressively measurable process $\pi=\left(\pi_t\right)_{0\leq t \leq T}$ is an \emph{admissible trading strategy} if $\int_0^T\left\|\sigma_t'\,\pi_t\right\|^2dt<\infty$ $P$-a.s. and $\int_0^T\left|\pi_t'\left(\mu_t-r_t\mathbf{1}\right)\right|dt<\infty$ $P$-a.s., where $\mathbf{1}:=\left(1,\ldots,1\right)'\in\R^N$.
We denote by $\mathcal{A}$ the set of all admissible trading strategies.
\item[(b)]
For any $\left(v,\pi\right)\in\R_+\times\mathcal{A}$, the associated \emph{discounted portfolio process} $\bar{V}^{\,v,\pi}=\left(\bar{V}_t^{\,v,\pi}\right)_{0\leq t \leq T}$ is defined by:
\be	\label{E1-5}	\ba
\bar{V}_t^{\,v,\pi} &:=
v\,\mathcal{E}\left(\sum_{i=1}^N\int\pi^i\,\frac{d\bar{S}^i}{\bar{S}^i}\right)_t	\\
&= v\,\exp\left(\int_0^t\pi_u'\left(\mu_u-r_u\mathbf{1}\right)du-\frac{1}{2}\int_0^t\left\|\sigma_u'\,\pi_u\right\|^2du
+\int_0^t\pi_u'\,\sigma_u\,dW_u\right)
\ea	\ee
for all $t\in\left[0,T\right]$, where $\mathcal{E}\left(\cdot\right)$ denotes the stochastic exponential (see e.g. \cite{RY}, Section IV.3).
\end{itemize}
\end{Def}

The integrability conditions in part \emph{(a)} of Definition \ref{D1-1} ensure that both the ordinary and the stochastic integrals appearing in \eqref{E1-5} are well-defined.
For all $i=1,\ldots,N$ and $t\in\left[0,T\right]$, $\pi_t^i$ represents the proportion of wealth invested in the $i$-th risky asset $S^i$ at time $t$. Consequently, $1-\pi_t'\mathbf{1}$ represents the proportion of wealth invested in the savings account $S^0$ at time $t$. Note that part \emph{(b)} of Definition \ref{D1-1} corresponds to requiring the trading strategy $\pi$ to be \emph{self-financing}. 
Observe that Definition \ref{D1-1} implies that, for any $\left(v,\pi\right)\in\R_+\times\mathcal{A}$, we have $V_t^{\,v,\pi}=v\,V_t^{\,1,\pi}$, for all $t\in\left[0,T\right]$. Due to this scaling property, we shall often let $v=1$ without loss of generality, denoting $V^{\,\pi}:=V^{\,1,\pi}$ for any $\pi\in\mathcal{A}$. 
By definition, the discounted portfolio process $\bar{V}^{\,\pi}$ satisfies the following dynamics:
\be	\label{E1-8}
d\bar{V}_t^{\,\pi} = 
\bar{V}_t^{\,\pi}\sum_{i=1}^N\,\pi_t^i\,\frac{d\bar{S}_t^i}{\bar{S}_t^i}
= \bar{V}_t^{\,\pi}\,\pi_t'\left(\mu_t-r_t\mathbf{1}\right)dt + \bar{V}_t^{\,\pi}\,\pi_t'\,\sigma_t\,dW_t
\ee

\begin{Rem}
The fact that admissible portfolio processes are uniformly bounded from below by zero excludes pathological \emph{doubling strategies} (see e.g. \cite{KS2}, Section 1.1.2). Moreover, an economic motivation for focusing on positive portfolios only is given by the fact that market participants have \emph{limited liability} and, therefore, are not allowed to trade anymore if their total tradeable wealth reaches zero. See also Section 2 of \cite{CL}, Section 6 of \cite{P2} and Section 10.3 of \cite{PH} for an amplification of the latter point.
\end{Rem}

\section{No-arbitrage conditions and the market price of risk}	\label{S1.1bis}

In order to ensure that the model introduced in the previous Section represents a viable financial market, in a sense to be made precise (see Definition \ref{arb}), we need to carefully answer the question of whether properly defined arbitrage opportunities are excluded. We start by giving the following Definition.

\begin{Def}	\label{incrp}
A trading strategy $\pi\in\mathcal{A}$ is said to yield an \emph{increasing profit} if the corresponding discounted portfolio process $\bar{V}^{\,\pi}=\left(\bar{V}_t^{\,\pi}\right)_{0\leq t \leq T}$ satisfies the following two conditions:
\begin{itemize}
\item[(a)] $\bar{V}^{\,\pi}$ is $P$-a.s. increasing, in the sense that $P\left(\bar{V}_s^{\,\pi}\leq \bar{V}_t^{\,\pi} \text{ for all } s,t\in\left[0,T\right] \text{ with } s\leq t\right)=1$;
\item[(b)] $P\left(\bar{V}_T^{\,\pi}>1\right)>0$.
\end{itemize}
\end{Def}

The notion of increasing profit represents the most glaring type of arbitrage opportunity and, hence, it is of immediate interest to know whether it is allowed or not in the financial market. As a preliminary, the following Lemma gives an equivalent characterization of the notion of increasing profit. We denote by $\ell$ the Lebesgue measure on $\left[0,T\right]$.

\begin{Lem}	\label{Lem1}
There exists an increasing profit if and only if there exists a trading strategy $\pi\in\mathcal{A}$ satisfying the following two conditions:
\begin{itemize}
\item[(a)] $\pi'_t\,\sigma_t=0$ $P\otimes\ell$-a.e.;
\item[(b)] $\pi'_t\left(\mu_t-r_t\mathbf{1}\right)\neq 0$ on some subset of $\,\Omega\times\left[0,T\right]$ with positive $P\otimes\ell$-measure.
\end{itemize}
\end{Lem}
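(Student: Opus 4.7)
The plan is to exploit the explicit representation of the discounted portfolio process given in \eqref{E1-5}, reading off the finite-variation and continuous local martingale parts of $\log \bar{V}^{\,\pi}$, and using the uniqueness of the semimartingale decomposition.

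For the ``only if'' direction, suppose $\pi\in\mathcal{A}$ yields an increasing profit. From \eqref{E1-5}, $\bar{V}^{\,\pi}$ is a strictly positive continuous semimartingale whose dynamics, by \eqref{E1-8}, split into the drift $\bar{V}^{\,\pi}_t\,\pi_t'(\mu_t-r_t\mathbf{1})\,dt$ and the local martingale part $\int_0^\cdot \bar{V}_u^{\,\pi}\,\pi_u'\sigma_u\,dW_u$. Since $\bar{V}^{\,\pi}$ is $P$-a.s. increasing, it is a finite-variation process; by the uniqueness of the canonical decomposition of a continuous semimartingale, the continuous local martingale part must vanish identically. Equivalently, its quadratic variation satisfies $\int_0^T (\bar{V}^{\,\pi}_u)^2\,\|\sigma_u'\pi_u\|^2\,du=0$ $P$-a.s., and since $\bar{V}^{\,\pi}>0$ this forces $\sigma_t'\pi_t=0$ $P\otimes\ell$-a.e., giving (a). Substituting this back, $\bar{V}^{\,\pi}_t=\exp\bigl(\int_0^t\pi_u'(\mu_u-r_u\mathbf{1})\,du\bigr)$; the monotonicity requirement forces $\pi_u'(\mu_u-r_u\mathbf{1})\geq 0$ $P\otimes\ell$-a.e., while the condition $P(\bar{V}_T^{\,\pi}>1)>0$ prevents this integrand from vanishing $P\otimes\ell$-a.e., which is exactly (b).

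For the ``if'' direction, given $\pi\in\mathcal{A}$ satisfying (a) and (b), I would define $\tilde{\pi}_t:=\sign\bigl(\pi_t'(\mu_t-r_t\mathbf{1})\bigr)\,\pi_t$. Admissibility is preserved: since $|\sign(\cdot)|\leq 1$, the integrability conditions of Definition \ref{D1-1}(a) carry over from $\pi$ to $\tilde{\pi}$, and we still have $\sigma_t'\tilde{\pi}_t=0$ $P\otimes\ell$-a.e. Then \eqref{E1-5} reduces to $\bar{V}_t^{\,\tilde{\pi}}=\exp\bigl(\int_0^t|\pi_u'(\mu_u-r_u\mathbf{1})|\,du\bigr)$, which is manifestly $P$-a.s.\ increasing, while condition (b) on $\pi$ guarantees that $\int_0^T|\pi_u'(\mu_u-r_u\mathbf{1})|\,du>0$ on a set of positive $P$-measure, hence $P(\bar{V}_T^{\,\tilde{\pi}}>1)>0$.

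The conceptually delicate step is the ``only if'' implication, where one needs to legitimately conclude that the stochastic integral in \eqref{E1-5} vanishes from the mere monotonicity of $\bar{V}^{\,\pi}$; this is where the uniqueness of the semimartingale decomposition (and the fact that a continuous local martingale of finite variation starting at zero is identically zero) plays the key role. The ``if'' direction is routine once one notices that multiplying by a bounded progressively measurable sign process preserves admissibility.
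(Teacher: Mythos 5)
Your proposal is correct and follows essentially the same route as the paper's proof: the monotonicity of $\bar{V}^{\,\pi}$ forces the continuous local martingale part in \eqref{E1-8} to vanish (you justify this via the quadratic variation, the paper by citing the finite-variation property of continuous local martingales), yielding $\sigma_t'\pi_t=0$ $P\otimes\ell$-a.e.\ and then condition \emph{(b)} from $P\bigl(\bar{V}_T^{\,\pi}>1\bigr)>0$. Your converse construction $\tilde{\pi}_t=\sign\bigl(\pi_t'\left(\mu_t-r_t\mathbf{1}\right)\bigr)\pi_t$ is exactly the strategy $\bar{\pi}$ used in the paper, with the same admissibility and monotonicity arguments.
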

\begin{proof}
Let $\pi\in\mathcal{A}$ be a trading strategy yielding an increasing profit. Due to Definition \ref{incrp}, the process $\bar{V}^{\,\pi}$ is $P$-a.s. increasing, hence of finite variation. Equation \eqref{E1-8} then implies that the continuous local martingale $\left(\int_0^t\bar{V}_u^{\,\pi}\,\pi_u'\,\sigma_u\,dW_u\right)_{0\leq t \leq T}$ is also of finite variation. This fact in turn implies that $\pi_t'\,\sigma_t=0$ $P\otimes\ell$-a.e. (see e.g. \cite{KS1}, Section 1.5). Since $P\left(\bar{V}_T^{\,\pi}>1\right)>0$, we must have $\pi'_t\left(\mu_t-r_t\mathbf{1}\right)\neq 0$ on some subset of $\Omega\times\left[0,T\right]$ with non-zero $P\otimes\ell$-measure.

Conversely, let $\pi\in\mathcal{A}$ be a trading strategy satisfying conditions \emph{(a)}-\emph{(b)}. Define then the process $\bar{\pi}=\left(\bar{\pi}_t\right)_{0\leq t \leq T}$ as follows, for $t\in\left[0,T\right]$:
$$
\bar{\pi}_t:=\sign\bigl(\pi_t'\left(\mu_t-r_t\mathbf{1}\right)\bigr)\pi_t
$$
It is clear that $\bar{\pi}\in\mathcal{A}$ and $\bar{\pi}_t'\,\sigma_t=0$ $P\otimes\ell$-a.e. and hence, due to \eqref{E1-5}, for all $t\in\left[0,T\right]$: 
$$
\bar{V}^{\,\bar{\pi}}_t = \exp\left(\int_0^t\bar{\pi}_u'\left(\mu_u-r_u\mathbf{1}\right)du\right)
$$
Furthermore, we have that $\bar{\pi}_t'\left(\mu_t-r_t\mathbf{1}\right)\geq 0$, with strict inequality holding on some subset of $\Omega\times\left[0,T\right]$ with non-zero $P\otimes\ell$-measure. This implies that the process $\bar{V}^{\,\bar{\pi}}=\left(\bar{V}^{\,\bar{\pi}}_t\right)_{0\leq t \leq T}$ is $P$-a.s. increasing and satisfies $P\left(\bar{V}_T^{\,\bar{\pi}}>1\right)>0$, thus showing that $\bar{\pi}$ yields an increasing profit.
\end{proof}

\begin{Rem}
According to Definition 3.9 in \cite{KK}, a trading strategy satisfying conditions \emph{(a)}-\emph{(b)} of Lemma \ref{Lem1} is said to yield an \emph{immediate arbitrage opportunity} (see \cite{DS4} and Section 4.3.2 of \cite{Fo} for a thorough analysis of the concept). In a general semimartingale setting, Proposition 3.10 of \cite{KK} extends our Lemma \ref{Lem1} and shows that the absence of (unbounded) increasing profits is equivalent to the absence of immediate arbitrage opportunities. 
\end{Rem}

The following Proposition gives a necessary and sufficient condition in order to exclude the existence of increasing profits.

\begin{Prop}	\label{Prop1}
There are no increasing profits if and only if there exists an $\R^d$-valued progressively measurable process $\gamma=\left(\gamma_t\right)_{0\leq t \leq T}$ such that the following condition holds:
\be	\label{Prop1-1}
\mu_t-r_t\mathbf{1} = \sigma_t\gamma_t
\qquad \text{$P\otimes\ell$-a.e.}
\ee
\end{Prop}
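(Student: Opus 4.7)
The plan is to combine Lemma \ref{Lem1} with the explicit Moore--Penrose pseudo-inverse construction that the full-rank Assumption \ref{A1-1} makes available. Lemma \ref{Lem1} reduces the question to whether, whenever $\pi'_t\sigma_t=0$, one must also have $\pi'_t(\mu_t-r_t\mathbf{1})=0$; and this is equivalent to the assertion that $\mu_t-r_t\mathbf{1}$ lies in the range of $\sigma_t$, which under Assumption \ref{A1-1} is all of $\R^N$.

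For the \emph{if} direction, I would assume such a $\gamma$ exists and check directly that no $\pi\in\mathcal{A}$ can simultaneously satisfy conditions (a) and (b) of Lemma \ref{Lem1}: if $\pi'_t\sigma_t=0$ $P\otimes\ell$-a.e., then $\pi'_t(\mu_t-r_t\mathbf{1})=\pi'_t\sigma_t\gamma_t=0$ $P\otimes\ell$-a.e., directly contradicting condition (b). Hence there are no increasing profits.

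For the \emph{only if} direction, since $\sigma_t$ is $N\times d$ with $N\leq d$ and $P$-a.s. of full rank $N$, the matrix $\sigma_t\sigma_t'$ is $P$-a.s. an invertible $N\times N$ matrix, so the natural candidate is
\be
\gamma_t:=\sigma_t'(\sigma_t\sigma_t')^{-1}(\mu_t-r_t\mathbf{1}),
\ee
for which the identity $\sigma_t\gamma_t=\mu_t-r_t\mathbf{1}$ is an immediate algebraic verification. The one point requiring care, and the main obstacle in the proof, is the progressive measurability of $\gamma$, which I would derive from that of $\sigma$, $\mu$, $r$ together with the continuity (hence Borel measurability) of matrix inversion on the open set of invertible $N\times N$ matrices. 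I would also note that under Assumption \ref{A1-1} both sides of the equivalence are in fact automatically true (the left-hand side because $\sigma_t'$ is then injective, forcing any $\pi$ with $\pi'_t\sigma_t=0$ to vanish and thus killing condition (b) of Lemma \ref{Lem1}); the genuine content of the proposition is the \emph{measurable} construction of $\gamma$, which the pseudo-inverse delivers transparently, and in a more general setting without Assumption \ref{A1-1} the necessity direction would instead require a measurable selection argument.
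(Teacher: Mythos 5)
Your proof is correct, and the sufficiency direction coincides with the paper's argument verbatim; the difference lies in the necessity direction. There you invoke the standing Assumption \ref{A1-1} to write down $\gamma_t=\sigma_t'\left(\sigma_t\sigma_t'\right)^{-1}\left(\mu_t-r_t\mathbf{1}\right)$ (which is precisely the market price of risk $\theta$ of Definition \ref{D1-2}), so that \eqref{Prop1-1} holds identically and the hypothesis of no increasing profits is never used --- as you yourself observe, under full rank both sides of the equivalence are automatically true. The paper instead proves necessity \emph{without} using Assumption \ref{A1-1}: it projects $\mu_t-r_t\mathbf{1}$ orthogonally onto $\mathcal{K}\left(\sigma_t'\right)$, builds from the normalized projection an admissible strategy $\hat{\pi}$ satisfying condition \emph{(a)} of Lemma \ref{Lem1}, uses the absence of increasing profits via Lemma \ref{Lem1} to conclude that the projection vanishes $P\otimes\ell$-a.e. (so that $\mu_t-r_t\mathbf{1}\in\mathcal{R}\left(\sigma_t\right)$), and then appeals to a measurable-selection argument from Karatzas and Shreve for the progressive measurability of $\gamma$. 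What your route buys is brevity and a transparent measurability argument (continuity of matrix inversion), and it is perfectly legitimate given that Assumption \ref{A1-1} is in force throughout the paper; what the paper's route buys is that the equivalence is genuinely two-sided and survives when the full-rank assumption is dropped, which is what the authors intend (see the remark that Assumption \ref{A1-1} is only needed for uniqueness of strategies and the footnote on the Moore--Penrose pseudoinverse), and it is the only version in which the no-increasing-profit hypothesis does real work --- exactly the point you flag at the end of your proposal.
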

\begin{proof}
Suppose there exists an $\R^d$-valued progressively measurable process $\gamma=\left(\gamma_t\right)_{0\leq t \leq T}$ such that condition \eqref{Prop1-1} is satisfied and let $\pi\in\mathcal{A}$ be such that $\pi_t'\,\sigma_t=0$ $P\otimes\ell$-a.e. Then we have:
$$
\pi_t'\left(\mu_t-r_t\mathbf{1}\right)
=\pi_t'\,\sigma_t\gamma_t = 0
\qquad \text{$P\otimes\ell$-a.e.}
$$
meaning that there cannot exist a trading strategy $\pi\in\mathcal{A}$ satisfying conditions \emph{(a)}-\emph{(b)} of Lemma \ref{Lem1}. Due to the equivalence result of Lemma \ref{Lem1}, this implies that there are no increasing profits.

Conversely, suppose that there exists no trading strategy in $\mathcal{A}$ yielding an increasing profit. Let us first introduce the following linear spaces, for every $t\in\left[0,T\right]$:
$$
\mathcal{R}\left(\sigma_t\right):=\left\{\sigma_t y:y\in\R^d\right\}
\qquad
\mathcal{K}\left(\sigma_t'\right):=\left\{y\in\R^N:\sigma_t'y=0\right\}
$$
Denote by $\Pi_{\mathcal{K}\left(\sigma_t'\right)}$ the orthogonal projection on $\mathcal{K}\left(\sigma_t'\right)$. As in Lemma 1.4.6 of \cite{KS2}, we define the process $p=\left(p_t\right)_{0\leq t \leq T}$ by:
$$
p_t:=\Pi_{\mathcal{K}\left(\sigma_t'\right)}\left(\mu_t-r_t\mathbf{1}\right)
$$
Define then the process $\hat{\pi}=\left(\hat{\pi}_t\right)_{0\leq t \leq T}$ by:
$$
\hat{\pi}_t:=\begin{cases}
\frac{p_t}{\left\|p_t\right\|} & \text{if } p_t\neq 0,	\\
0	& \text{if } p_t=0.
\end{cases}
$$
Since the processes $\mu$ and $r$ are progressively measurable, Corollary 1.4.5 of \cite{KS2} ensures that $\hat{\pi}$ is progressively measurable. Clearly, we have then $\hat{\pi}\in\mathcal{A}$ and, by construction, $\hat{\pi}$  satisfies condition \emph{(a)} of Lemma \ref{Lem1}. Since there are no increasing profits, Lemma \ref{Lem1} implies that the following identity holds $P\otimes\ell$-a.e.:
\be	\label{normp}
\left\|p_t\right\|
=\frac{p_t'}{\left\|p_t\right\|}
\left(\mu_t-r_t\mathbf{1}\right)\ind_{\left\{p_t\neq 0\right\}}
=\hat{\pi}_t'
\left(\mu_t-r_t\mathbf{1}\right)\ind_{\left\{p_t\neq 0\right\}}=0
\ee
where the first equality uses the fact that $\mu_t-r_t\mathbf{1}-p_t\in\mathcal{K}^{\perp}\left(\sigma_t'\right)$, for all $t\in\left[0,T\right]$, with the superscript $\perp$ denoting the orthogonal complement. From \eqref{normp} we have $p_t=0$ $P\otimes\ell$-a.e., meaning that $\mu_t-r_t\mathbf{1}\in\mathcal{K}^{\perp}\left(\sigma_t'\right)=\mathcal{R}\left(\sigma_t\right)$ $P\otimes\ell$-a.e. This amounts to saying that we have:
$$
\mu_t-r_t\mathbf{1}=\sigma_t\gamma_t
\qquad \text{$P\otimes\ell$-a.e.}
$$
for some $\gamma_t\in\R^d$.
Taking care of the measurability issues, it can be shown that we can take $\gamma=\left(\gamma_t\right)_{0\leq t \leq T}$ as a progressively measurable process (compare \cite{KS2}, proof of Theorem 1.4.2).
\end{proof}

Let us now introduce one of the crucial objects in our analysis: the \emph{market price of risk} process.

\begin{Def}	\label{D1-2}
The $\R^d$-valued progressively measurable \emph{market price of risk} process $\theta=\left(\theta\right)_{0\leq t \leq T}$ is defined as follows, for $t\in\left[0,T\right]$:
$$
\theta_t:=\sigma_t'\left(\sigma_t\,\sigma_t'\right)^{-1}\left(\mu_t-r_t\mathbf{1}\right)
$$
\end{Def}

The standing Assumption \ref{A1-1} ensures that the market price of risk process $\theta$ is well-defined\footnote{It is worth pointing out that, if Assumption \ref{A1-1} does not hold but condition \eqref{Prop1-1} is satisfied, i.e. we have $\mu_t-r_t\mathbf{1}\in\mathcal{R}\left(\sigma_t\right)$ $P\otimes\ell$-a.e., then the market price of risk process $\theta$ can still be defined by replacing $\sigma_t'\left(\sigma_t\,\sigma_t'\right)^{-1}$ with the \emph{Moore-Penrose pseudoinverse} of the matrix $\sigma_t$.}. From a financial perspective, $\theta_t$ measures the excess return $\left(\mu_t-r_t\mathbf{1}\right)$ of the risky assets (with respect to the savings account) in terms of their volatility.

\begin{Rem}[\textbf{\emph{Absence of increasing profits}}]	\label{Rem-D1-2}
Note that, by definition, the market price of risk process $\theta$ satisfies condition \eqref{Prop1-1}. Proposition \ref{Prop1} then implies that, under the standing Assumption \ref{A1-1}, there are no increasing profits. Note however that $\theta$ may not be the unique process satisfying condition \eqref{Prop1-1}.
\end{Rem}

Let us now introduce the following integrability condition on the market price of risk process.

\begin{Ass}	\label{A1-2}
The market price of risk process $\theta=\left(\theta_t\right)_{0\leq t \leq T}$ belongs to $L^2_{loc}\left(W\right)$, meaning that $\int_0^T\left\|\theta_t\right\|^2dt<\infty$ $P$-a.s.
\end{Ass}

\begin{Rem}	\label{minimal}
Let $\gamma=\left(\gamma_t\right)_{0\leq t \leq T}$ be an $\R^d$-valued progressively measurable process satisfying condition \eqref{Prop1-1}. Letting $\mathcal{R}\left(\sigma_t'\right)=\left\{\sigma_t'\,x:x\in\R^N\right\}$ and $\mathcal{R}^{\perp}\left(\sigma_t'\right)=\mathcal{K}\left(\sigma_t\right)=\left\{x\in\R^d:\sigma_t\,x=0\right\}$, we get the orthogonal decomposition  $\gamma_t=\Pi_{\mathcal{R}\left(\sigma_t'\right)}\left(\gamma_t\right)+\Pi_{\mathcal{K}\left(\sigma_t\right)}\left(\gamma_t\right)$, for $t\in\left[0,T\right]$. Under Assumption \ref{A1-1}, elementary linear algebra gives that $\Pi_{\mathcal{R}\left(\sigma_t'\right)}\left(\gamma_t\right)=\sigma_t'\left(\sigma_t\sigma_t'\right)^{-1}\sigma_t\gamma_t=\sigma_t'\left(\sigma_t\sigma_t'\right)^{-1}\left(\mu_t-r_t\mathbf{1}\right)=\theta_t$, thus giving $\left\|\gamma_t\right\|=\left\|\theta_t\right\|+\left\|\Pi_{\mathcal{K}\left(\sigma_t\right)}\left(\gamma_t\right)\right\|\geq\left\|\theta_t\right\|$, for all $t\in\left[0,T\right]$. This implies that, as soon as there exists \emph{some} $\R^d$-valued progressively measurable process $\gamma$ satisfying \eqref{Prop1-1} and such that $\gamma\in L^2_{loc}\left(W\right)$, then the market price of risk process $\theta$ satisfies Assumption \ref{A1-2}. In other words, the risk premium process $\theta$ introduced in Definition \ref{D1-2} enjoys a \emph{minimality} property among all progressively measurable processes $\gamma$ which satisfy condition \eqref{Prop1-1}.
\end{Rem}

Many of our results will rely on the key relation existing between Assumption \ref{A1-2} and no-arbitrage, which has been first examined in \cite{AS1} and \cite{S1} and also plays a crucial role in \cite{DS4} and \cite{LS}.
We now introduce a fundamental local martingale associated to the market price of risk process $\theta$. Let us define the process $\widehat{Z}=\bigl(\widehat{Z}_t\bigr)_{0\leq t \leq T}$ as follows, for all $t\in\left[0,T\right]$:
\be	\label{E1-9}
\widehat{Z}_t:=\mathcal{E}\left(-\int\theta'dW\right)_t 
=\exp\left(-\sum_{j=1}^d\int_0^t\theta_u^j\,dW_u^j-\frac{1}{2}\sum_{j=1}^d\int_0^t\left(\theta_u^j\right)^2du\right)
\ee
Note that Assumption \ref{A1-2} ensures that the stochastic integral $\int\theta'dW$ is well-defined as a continuous local martingale. It is well-known that $\widehat{Z}=\bigl(\widehat{Z}_t\bigr)_{0\leq t \leq T}$ is a strictly positive continuous local martingale with $\widehat{Z}_0=1$. Due to Fatou's Lemma, the process $\widehat{Z}$ is also a supermartingale (see e.g. \cite{KS1}, Problem 1.5.19) and, hence, we have $\E\bigl[\widehat{Z}_T\bigr]\leq\E\bigl[\widehat{Z}_0\bigr]=1$. It is easy to show that the process $\widehat{Z}$ is a true martingale, and not only a local martingale, if and only if $E\bigl[\widehat{Z}_T\bigr]=E\bigl[\widehat{Z}_0\bigr]=1$. However, it may happen that the process $\widehat{Z}$ is a \emph{strict} local martingale, i.e. a local martingale which is not a true martingale.
In any case, the following Proposition shows the basic property of the process $\widehat{Z}$.
\begin{Prop}	\label{P1-1}
Suppose that Assumption \ref{A1-2} holds and let $\widehat{Z}=\bigl(\widehat{Z}_t\bigr)_{0\leq t \leq T}$ be defined as in \eqref{E1-9}. Then the following hold:
\begin{itemize}
\item[(a)] for all $i=1,\ldots,N$, the process $\widehat{Z}\,\bar{S}^i=\bigl(\widehat{Z}_t\,\bar{S}^i_t\bigr)_{0\leq t \leq T}$ is a local martingale;
\item[(b)] for any trading strategy $\pi\in\mathcal{A}$ the process $\widehat{Z}\,\bar{V}^{\,\pi}=\bigl(\widehat{Z}_t\,\bar{V}^{\,\pi}_t\bigr)_{0\leq t \leq T}$ is a local martingale.
\end{itemize} 
\end{Prop}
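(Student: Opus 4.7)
The plan is to verify both claims by a direct application of It\^o's product rule, exploiting the algebraic identity $\sigma_t\theta_t = \mu_t - r_t\mathbf{1}$, which follows immediately from Definition \ref{D1-2} since $\sigma_t\sigma_t'(\sigma_t\sigma_t')^{-1} = I_N$ under Assumption \ref{A1-1}. The structure of the two proofs is identical; part (a) is essentially the special case of part (b) obtained with the (non-admissible, but equally well-defined) "buy-and-hold" choice of investing entirely in the $i$-th asset, so it is natural to prove (b) in detail and read off (a) from the same calculation.

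For part (a), first I would note that $\bar S^i = S^i/S^0$ has dynamics
\begin{equation*}
d\bar S^i_t = \bar S^i_t\bigl(\mu^i_t - r_t\bigr)dt + \bar S^i_t\,\sigma^i_t\,dW_t,
\end{equation*}
where $\sigma^i_t$ denotes the $i$-th row of $\sigma_t$; this follows from \eqref{E1-2} together with $d(1/S^0_t) = -(r_t/S^0_t)\,dt$ (no covariation, since $S^0$ is of finite variation). From \eqref{E1-9} one has $d\widehat Z_t = -\widehat Z_t\,\theta_t'\,dW_t$. Integration by parts then yields
\begin{equation*}
d\bigl(\widehat Z_t\bar S^i_t\bigr) = \widehat Z_t\bar S^i_t\bigl[(\mu^i_t - r_t) - \sigma^i_t\theta_t\bigr]dt + \widehat Z_t\bar S^i_t\bigl(\sigma^i_t - \theta_t'\bigr)dW_t,
\end{equation*}
and the bracketed drift vanishes by the $i$-th component of $\mu_t - r_t\mathbf{1} = \sigma_t\theta_t$. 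Hence $\widehat Z\bar S^i$ equals its initial value plus a stochastic integral against $W$, which is a continuous local martingale provided the integrand belongs to $L^2_{\text{loc}}(W)$; this is immediate from the continuity of $\widehat Z\bar S^i$, the square-integrability requirement \emph{(iii)} on $\sigma$, and Assumption \ref{A1-2}.

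For part (b), repeating the same calculation with $\bar V^\pi$ in place of $\bar S^i$, using the SDE \eqref{E1-8}, gives
\begin{equation*}
d\bigl(\widehat Z_t\bar V^\pi_t\bigr) = \widehat Z_t\bar V^\pi_t\bigl[\pi_t'(\mu_t - r_t\mathbf{1}) - \pi_t'\sigma_t\theta_t\bigr]dt + \widehat Z_t\bar V^\pi_t\bigl(\pi_t'\sigma_t - \theta_t'\bigr)dW_t.
\end{equation*}
Again the drift is zero because $\mu_t - r_t\mathbf{1} = \sigma_t\theta_t$, so $\widehat Z\bar V^\pi$ is the sum of its initial value and a stochastic integral against $W$. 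To conclude that this is a well-defined local martingale I would check that the integrand $\widehat Z_t\bar V^\pi_t(\pi_t'\sigma_t - \theta_t')$ lies in $L^2_{\text{loc}}(W)$: continuity of $\widehat Z\bar V^\pi$ reduces the question to $\int_0^T\|\sigma_t'\pi_t\|^2 dt < \infty$ and $\int_0^T\|\theta_t\|^2 dt < \infty$, which hold by Definition \ref{D1-1}(a) and Assumption \ref{A1-2} respectively.

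The computation itself is routine once the identity $\sigma_t\theta_t = \mu_t - r_t\mathbf{1}$ is in hand; the only point that requires genuine care is the integrability bookkeeping needed to localise the stochastic integrals, and that is precisely what the standing admissibility definition and Assumption \ref{A1-2} have been tailored to supply.
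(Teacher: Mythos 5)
Your proposal is correct and follows essentially the same route as the paper: It\^o's product rule, cancellation of the drift via the identity $\mu_t-r_t\mathbf{1}=\sigma_t\theta_t$, and the local martingale property from $\sigma'\pi\in L^2_{loc}(W)$ and $\theta\in L^2_{loc}(W)$. One small remark: the constant strategy $\pi^i\equiv 1$, $\pi^j\equiv 0$ for $j\neq i$ is in fact admissible (conditions \emph{(ii)}--\emph{(iii)} on $\mu,\sigma$ and the integrability of $r$ give exactly the requirements of Definition \ref{D1-1}(a)), which is how the paper deduces part (a) from part (b); your direct computation for (a) is of course also fine.
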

\begin{proof}
Part \emph{(a)} follows from part \emph{(b)} by taking $\pi\in\mathcal{A}$ with $\pi^i\equiv 1$ and $\pi^j\equiv 0$ for $j\neq i$, for any $i=1,\ldots,N$. Hence, it suffices to prove part \emph{(b)}. Recalling equation \eqref{E1-8}, an application of the product rule gives:
\be	\label{E1-10}	\ba
d\bigl(\widehat{Z}_t\,\bar{V}_t^{\,\pi}\bigr) &=
\bar{V}_t^{\,\pi}\,d\widehat{Z}_t+\widehat{Z}_t\,d\bar{V}_t^{\,\pi}+d\bigl\langle\bar{V}^{\,\pi},\widehat{Z}\bigr\rangle_t	\\
&= -\bar{V}_t^{\,\pi}\,\widehat{Z}_t\,\theta_t'\,dW_t
+\widehat{Z}_t\,\bar{V}_t^{\,\pi}\,\pi_t'\left(\mu_t-r_t\mathbf{1}\right)dt
+\widehat{Z}_t\,\bar{V}_t^{\,\pi}\,\pi_t'\,\sigma_t\,dW_t
-\widehat{Z}_t\,\bar{V}_t^{\,\pi}\,\pi_t'\,\sigma_t\,\theta_t\,dt	\\
&= \widehat{Z}_t\,\bar{V}_t^{\,\pi}\left(\pi_t'\,\sigma_t-\theta_t'\right)dW_t
\ea	\ee
Since $\sigma'\pi\in L^2_{loc}\left(W\right)$ and $\theta\in L^2_{loc}\left(W\right)$, this shows the local martingale property of $\widehat{Z}\,\bar{V}^{\,\pi}$.
\end{proof}

Under the standing Assumption \ref{A1-1}, we have seen that the diffusion-based financial market described in Section \ref{S1.1} does not allow for increasing profits (see Remark \ref{Rem-D1-2}). However, the concept of increasing profit represents an almost pathological notion of arbitrage opportunity. Hence, we would like to know whether weaker and more economically meaningful types of arbitrage opportunities can exist. To this effect, let us give the following Definition, adapted from \cite{Ka2}.

\begin{Def}	\label{arb}
An $\F$-measurable non-negative random variable $\xi$ is called an \emph{arbitrage of the first kind} if $P\left(\xi>0\right)>0$ and, for all $v\in\left(0,\infty\right)$, there exists a trading strategy $\pi^v\in\mathcal{A}$ such that $\bar{V}^{\,v,\pi^v}_T\geq\xi$ $P$-a.s.
We say that the financial market is \emph{viable} if there are no arbitrages of the first kind.
\end{Def}

The following Proposition shows that the existence of an increasing profit implies the existence of an arbitrage of the first kind. Due to the It\^o-process framework considered in this paper, we are able to provide a simple proof.

\begin{Prop}	\label{ip-arb}
Let $\pi\in\mathcal{A}$ be a trading strategy yielding an increasing profit. Then there exists an arbitrage of the first kind.
\end{Prop}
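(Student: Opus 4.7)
The plan is to construct an arbitrage of the first kind directly from the increasing-profit strategy $\pi$. By Lemma \ref{Lem1}, $\pi_t'\,\sigma_t = 0$ $P\otimes\ell$-a.e., and so \eqref{E1-5} reduces to
\[
\bar V^{\,\pi}_t = \exp\biggl(\int_0^t \pi_u'(\mu_u - r_u\mathbf{1})\,du\biggr).
\]
Since this process is $P$-a.s.\ non-decreasing, its integrand must satisfy $\pi_t'(\mu_t - r_t\mathbf{1}) \geq 0$ $P\otimes\ell$-a.e., and hence the $\F_T$-measurable random variable $Y := \int_0^T \pi_u'(\mu_u - r_u\mathbf{1})\,du$ is non-negative. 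The requirement $P\bigl(\bar V^{\,\pi}_T > 1\bigr) > 0$ translates into $P(Y > 0) > 0$, so continuity of $P$ from below along $\{Y \geq 1/n\}\uparrow\{Y>0\}$ yields a constant $\delta > 0$ with $P(Y \geq \delta) > 0$.

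I would then set $\xi := \ind_{\{Y \geq \delta\}}$, which is $\F_T$-measurable, non-negative, and satisfies $P(\xi > 0) > 0$; this is the candidate arbitrage of the first kind, fixed independently of $v$. For each $v \in (0,\infty)$, the rescaled process $\pi^v := \lambda_v\,\pi$, with the deterministic constant $\lambda_v := \max\bigl(1,\,-\log v/\delta\bigr) > 0$, inherits admissibility from $\pi$ (both integrability conditions in Definition \ref{D1-1}(a) scale by constants) and still annihilates $\sigma$, so \eqref{E1-5} gives $\bar V^{\,v,\pi^v}_T = v\,\exp(\lambda_v Y)$. On the event $\{Y \geq \delta\}$ one has $v\,e^{\lambda_v Y} \geq v\,e^{\lambda_v \delta} \geq v\cdot e^{-\log v} = 1 = \xi$, while on its complement $\xi = 0 \leq \bar V^{\,v,\pi^v}_T$ is immediate; hence $\bar V^{\,v,\pi^v}_T \geq \xi$ $P$-a.s., as required by Definition \ref{arb}.

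The argument is essentially frictionless. Its only content is that, in the present It\^o-process framework, multiplying an admissible strategy by an arbitrarily large \emph{constant} preserves admissibility and scales the exponent in \eqref{E1-5} linearly, so an increasing profit can be amplified beyond any prescribed deterministic level on a set of positive probability. The one point requiring care --- and where a careless argument could go wrong --- is the order of quantifiers: the threshold $\delta$, and hence $\xi$, must be chosen before $v$ is specified.
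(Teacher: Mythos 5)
Your proof is correct and takes essentially the same route as the paper's: both extract from Lemma \ref{Lem1} that the increasing-profit strategy satisfies $\pi'\sigma=0$, so that scaling it by a $v$-dependent deterministic constant amplifies the riskless exponent in \eqref{E1-5} without affecting admissibility. The only difference is cosmetic: you take $\xi=\ind_{\{Y\geq\delta\}}$ as the arbitrage of the first kind, whereas the paper takes $\xi=\bar{V}^{\,\pi}_T-1$ and verifies a slightly more delicate elementary inequality; your bounded choice of $\xi$ makes the final comparison immediate.
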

\begin{proof}
Let $\pi\in\mathcal{A}$ yield an increasing profit and define $\xi:=\bar{V}_T^{\,\pi}-1$. Due to Definition \ref{incrp}, we have $P\left(\xi\geq 0\right)=1$ and $P\left(\xi>0\right)>0$. Then, for any $v\in\left[1,\infty\right)$, we have $\bar{V}_T^{\,v,\pi}=v\bar{V}_T^{\,\pi}>v\,\xi\geq\xi$ $P$-a.s. For any $v\in\left(0,1\right)$, let us define $\pi^v_t:=-\frac{\log\left(v\right)+\log\left(1-v\right)}{v}\,\pi_t$. Clearly, for any $v\in\left(0,1\right)$, the process $\pi^v=\left(\pi^v_t\right)_{0\leq t \leq T}$ satisfies $\pi^v\in\mathcal{A}$ and, due to Lemma \ref{Lem1}, $\left(\pi^v_t\right)'\sigma_t=0$ $P\otimes\ell$-a.e. We have then:
$$
\bar{V}_T^{\,v,\pi^v}
= v\exp\left(\int_0^T\left(\pi^v_t\right)'\left(\mu_t-r_t\mathbf{1}\right)dt\right)
= v\left(\bar{V}_T^{\,\pi}\right)^{-\frac{\log\left(v\right)+\log\left(1-v\right)}{v}}
> \bar{V}_T^{\,\pi}-1 = \xi
\qquad \text{$P$-a.s.}
$$
where the second equality follows from the elementary identity $\exp\left(\alpha x\right)=\left(\exp x\right)^{\alpha}$ and the last inequality follows since $vx^{-\frac{\log\left(v\right)+\log\left(1-v\right)}{v}}>x-1$ for $x\geq 1$ and for every $v\in\left(0,1\right)$.
We have thus shown that, for every $v\in\left(0,\infty\right)$, there exists a trading strategy $\pi^v\in\mathcal{A}$ such that $\bar{V}_T^{\,v,\pi^v}\geq\xi$ $P$-a.s., meaning that the random variable $\xi=\bar{V}_T^{\,\pi}-1$ is an arbitrage of the first kind.
\end{proof}

\begin{Rem}	\label{Rem-NSA}
As we shall see by means of a simple example after Corollary \ref{NOarb}, there are instances of models where there are no increasing profits but there are arbitrages of the first kind, meaning that the absence of arbitrages of the first kind is a \emph{strictly} stronger no-arbitrage-type condition than the absence of increasing profits. Furthermore, there exists a notion of arbitrage opportunity lying between the notion of increasing profit and that of arbitrage of the first kind, namely the notion of \emph{strong arbitrage opportunity}, which consists of a trading strategy $\pi\in\mathcal{A}$ such that $\bar{V}^{\,\pi}_t\geq 1$ $P$-a.s. for all $t\in\left[0,T\right]$ and $P\left(\bar{V}^{\,\pi}_T>1\right)>0$. It can be shown that there are no strong arbitrage opportunities if and only if there are no increasing profits \emph{and} the process $\bigl(\int_0^t\left\|\theta_u\right\|^2\!du\bigr)_{0\leq t \leq T}$ does not jump to infinity on $\left[0,T\right]$. For simplicity of presentation, we omit the details and refer instead the interested reader to Theorem 3.5 of \cite{Str} (where the absence of strong arbitrage opportunities is denoted as condition NA$_+$) and Section 4.3.2 of \cite{Fo}. We want to point out that the notion of \emph{strong arbitrage opportunity} plays an important role in the context of the \emph{benchmark approach}, see e.g. Section 6 of \cite{P2}, Section 10.3 of \cite{PH} and Remark 4.3.9 of \cite{Fo}.
\end{Rem}

We now proceed with the question of whether arbitrages of the first kind are allowed in our financial market model. To this effect, let us first give the following Definition.

\begin{Def}	\label{defl}
A \emph{martingale deflator} is a real-valued non-negative adapted process $D=\left(D_t\right)_{0\leq t \leq T}$ with $D_0=1$ and $D_T>0$ $P$-a.s. and such that the process $D\bar{V}^{\,\pi}=\left(D_t\bar{V}^{\,\pi}_t\right)_{0\leq t \leq T}$ is a local martingale for every $\pi\in\mathcal{A}$. We denote by $\mathcal{D}$ the set of all martingale deflators.
\end{Def}

\begin{Rem}	\label{positive}
Let $D\in\mathcal{D}$. Then, taking $\pi\equiv 0$, Definition \ref{defl} implies that $D$ is a non-negative local martingale and hence, due to Fatou's Lemma, also a supermartingale. Since $D_T>0$ $P$-a.s., the minimum principle for non-negative supermartingales (see e.g. \cite{RY}, Proposition II.3.4) implies that $P\bigl(D_t>0, D_{t-}>0 \text{ for all }t\in\left[0,T\right]\bigr)=1$.
\end{Rem}

Note that part \emph{(b)} of Proposition \ref{P1-1} implies that, as soon as Assumption \ref{A1-2} is satisfied, the process $\widehat{Z}=\bigl(\widehat{Z}_t\bigr)_{0\leq t \leq T}$ introduced in \eqref{E1-9} is a martingale deflator, in the sense of Definition \ref{defl}. 
The following Lemma describes the general structure of martingale deflators. Related results can also be found in \cite{AS1}, \cite{AS2} and \cite{S2}.

\begin{Lem}	\label{strdefl}
Let $D=\left(D_t\right)_{0\leq t \leq T}$ be a martingale deflator. Then there exist an $\R^d$-valued progressively measurable process $\gamma=\left(\gamma_t\right)_{0\leq t \leq T}$ in $L^2_{loc}\left(W\right)$ satisfying condition \eqref{Prop1-1} and a real-valued local martingale $N=\left(N_t\right)_{0\leq t \leq T}$ with $N_0=0$, $\Delta N>-1$ $P$-a.s. and $\langle N,W^i\rangle\equiv 0$, for all $i=1,\ldots,d$, such that the following hold, for all $t\in\left[0,T\right]$:
\be	\label{strdefl-1}
D_t = \mathcal{E}\left(-\int\gamma\,dW+N\right)_t
\ee
\end{Lem}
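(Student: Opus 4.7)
The plan is as follows. First, taking $\pi \equiv 0 \in \mathcal{A}$ in Definition \ref{defl}, we see that $D = D\bar{V}^{\,0}$ is itself a local martingale, and Remark \ref{positive} then guarantees that $D$ and $D_{-}$ are strictly positive on $[0,T]$. Since $D$ is a strictly positive local martingale with $D_0 = 1$, the stochastic logarithm $L_t = \int_0^t D_{s-}^{-1}\,dD_s$ produces a unique local martingale $L = (L_t)_{0 \leq t \leq T}$ with $L_0 = 0$ and $\Delta L > -1$ such that $D = \mathcal{E}(L)$.

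Next, I would invoke the Galtchouk--Kunita--Watanabe decomposition of $L$ relative to the $\R^d$-valued Brownian motion $W$: every local martingale on $(\Omega, \mathcal{F}, \FF, P)$ admits a decomposition $L = -\int \gamma'\,dW + N$, where $\gamma$ is an $\R^d$-valued progressively measurable process in $L^2_{loc}(W)$ and $N$ is a local martingale satisfying $\langle N, W^i\rangle \equiv 0$ for every $i = 1, \ldots, d$. Since the stochastic integral $-\int \gamma'\,dW$ is continuous, we obtain $N_0 = 0$ and $\Delta N = \Delta L > -1$, producing the representation \eqref{strdefl-1}.

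It remains to prove that $\gamma$ satisfies condition \eqref{Prop1-1}. For arbitrary $\pi \in \mathcal{A}$, using that $\bar{V}^{\,\pi}$ is continuous, integration by parts applied to $D\bar{V}^{\,\pi}$, together with $dD_t = D_{t-}(-\gamma_t'\,dW_t + dN_t)$ and the dynamics \eqref{E1-8}, gives
\[
d\bigl(D_t\bar{V}^{\,\pi}_t\bigr) = D_t\,\bar{V}^{\,\pi}_t\,\pi_t'\bigl[(\mu_t - r_t\mathbf{1}) - \sigma_t\gamma_t\bigr]\,dt + dM_t,
\]
where $M$ collects the local martingale terms; the drift correction $-D\bar{V}^{\,\pi}\pi'\sigma\gamma\,dt$ arises from $\langle D^c, \bar{V}^{\,\pi}\rangle$, with no contribution from $N$ because $\langle N, W^i\rangle \equiv 0$. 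The martingale deflator property forces this drift to vanish $P\otimes\ell$-a.e. Applying this conclusion with $\pi = e_i \in \mathcal{A}$ for $i = 1, \ldots, N$ (each standard basis vector is admissible by conditions \emph{(ii)}--\emph{(iii)} and the hypothesis on $r$) yields $\mu_t - r_t\mathbf{1} = \sigma_t\gamma_t$ $P\otimes\ell$-a.e., as required.

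The principal obstacle is the Galtchouk--Kunita--Watanabe step: since $\FF$ may be strictly larger than $\FF^W$, the Brownian martingale representation is unavailable, and one must invoke the general orthogonal decomposition of local martingales with respect to the stable subspace generated by $W$. Care is also needed to ensure that $\gamma$ belongs to $L^2_{loc}(W)$ globally and that $N$ inherits the jump bound $\Delta N > -1$ from $L$, which follows because $-\int \gamma'\,dW$ is continuous and therefore shares no jumps with $L$.
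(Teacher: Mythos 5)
Your proposal is correct and follows essentially the same route as the paper: stochastic logarithm $L=\int D_-^{-1}dD$ with $D=\mathcal{E}(L)$, the Kunita--Watanabe (orthogonal) decomposition $L=-\int\gamma\,dW+N$ with the jump bound inherited from $\Delta L=\Delta N$, and then the product rule on $D\bar{V}^{\,\pi}$ forcing the drift $D_{-}\bar{V}^{\,\pi}\pi'\left(\mu-r\mathbf{1}-\sigma\gamma\right)$ to vanish. Your only (harmless) variation is to specialize to the basis strategies $\pi=e_i$ to conclude \eqref{Prop1-1}, where the paper simply appeals to the arbitrariness of $\pi\in\mathcal{A}$.
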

\begin{proof}
Let us define the process $L:=\int D_-^{-1}dD$. Due to Remark \ref{positive}, the process $D_-^{-1}$ is well-defined and, being adapted and left-continuous, is also predictable and locally bounded. Since $D$ is a local martingale, this implies that the process $L$ is well-defined as a local martingale null at 0 and we have $D=\mathcal{E}\left(L\right)$. The Kunita-Watanabe decomposition (see \cite{AS3}, case 3) allows us to represent the local martingale $L$ as follows:
$$
L=-\int\gamma\,dW+N
$$
for some $\R^d$-valued progressively measurable process $\gamma=\left(\gamma_t\right)_{0\leq t \leq T}$ belonging to $L^2_{loc}\left(W\right)$, i.e. satisfying $\int_0^T\left\|\gamma_t\right\|^2dt<\infty$ $P$-a.s., and for some local martingale $N=\left(N_t\right)_{0\leq t \leq T}$ with $N_0=0$ and $\langle N,W^i\rangle\equiv 0$ for all $i=1,\ldots,d$. Furthermore, since $\left\{D>0\right\}=\left\{\Delta L>-1\right\}$ and $\Delta L=\Delta N$, we have that $\Delta N>-1$ $P$-a.s. It remains to show that $\gamma$ satisfies condition \eqref{Prop1-1}. Let $\pi\in\mathcal{A}$. Then, by using the product rule and recalling equation \eqref{E1-8}:
\be	\label{strdefl-2}	\ba
d\left(D\bar{V}^{\,\pi}\right)_t
&= D_{t-}\,d\bar{V}^{\,\pi}_t+\bar{V}^{\,\pi}_tdD_t+d\langle D,\bar{V}^{\,\pi}\rangle_t	\\
&= D_{t-}\bar{V}^{\,\pi}_t\pi'_t\left(\mu_t-r_t\mathbf{1}\right)dt+D_{t-}\bar{V}^{\,\pi}_t\pi'_t\sigma_t\,dW_t	
+\bar{V}^{\,\pi}_tD_{t-}\,dL_t+D_{t-}\bar{V}^{\,\pi}_t\,d\Bigl\langle L,\int\pi'\sigma\,dW\Bigr\rangle_t	\\
&= D_{t-}\bar{V}^{\,\pi}_t\pi'_t\left(\mu_t-r_t\mathbf{1}\right)dt+D_{t-}\bar{V}^{\,\pi}_t\pi'_t\sigma_t\,dW_t
+\bar{V}^{\,\pi}_tD_{t-}\,dL_t-D_{t-}\bar{V}^{\,\pi}_t\pi'_t\sigma_t\gamma_t\,dt	\\
&= D_{t-}\bar{V}^{\,\pi}_t\pi'_t\sigma_t\,dW_t+\bar{V}^{\,\pi}_tD_{t-}\,dL_t
+D_{t-}\bar{V}^{\,\pi}_t\pi'_t\left(\mu_t-r_t\mathbf{1}-\sigma_t\gamma_t\right)dt
\ea	\ee
Since $D\in\mathcal{D}$, the product $D\bar{V}^{\,\pi}$ is a local martingale, for every $\pi\in\mathcal{A}$. This implies that the continuous finite variation term in \eqref{strdefl-2} must vanish. Since $D_-$ and $\bar{V}^{\,\pi}$ are $P$-a.s. strictly positive and $\pi\in\mathcal{A}$ was arbitrary, this implies that condition \eqref{Prop1-1} must hold.
\end{proof}

The following Proposition shows that the existence of a martingale deflator is a sufficient condition for the absence of arbitrages of the first kind. 

\begin{Prop}	\label{Pdefl}
If $\mathcal{D}\neq\emptyset$ then there cannot exist arbitrages of the first kind.
\end{Prop}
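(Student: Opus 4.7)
The plan is to argue by contradiction: assume $\mathcal{D}\neq\emptyset$ and that, nevertheless, there exists an arbitrage of the first kind $\xi$, and derive a contradiction using the supermartingale property of deflated portfolios.

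First I would fix $D\in\mathcal{D}$ and note that, by Definition \ref{defl}, for every $(v,\pi)\in\R_+\times\mathcal{A}$ the deflated wealth process $D\bar{V}^{\,v,\pi}$ is a local martingale. Since $\bar{V}^{\,v,\pi}\geq 0$ and $D\geq 0$, the product $D\bar{V}^{\,v,\pi}$ is a non-negative local martingale, hence a supermartingale by Fatou's Lemma. In particular, for every $(v,\pi)\in\R_+\times\mathcal{A}$,
\[
\E\bigl[D_T\bar{V}_T^{\,v,\pi}\bigr]\leq D_0\bar{V}_0^{\,v,\pi}=v,
\]
using $D_0=1$ and $\bar{V}_0^{\,v,\pi}=v$.

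Next, for the presumed arbitrage of the first kind $\xi$, Definition \ref{arb} gives, for every $v\in(0,\infty)$, a strategy $\pi^v\in\mathcal{A}$ with $\bar{V}_T^{\,v,\pi^v}\geq\xi$ $P$-a.s. Since $D_T\geq 0$ and $\xi\geq 0$, monotonicity of expectation yields
\[
\E\bigl[D_T\,\xi\bigr]\leq\E\bigl[D_T\bar{V}_T^{\,v,\pi^v}\bigr]\leq v.
\]
Letting $v\downarrow 0$, we obtain $\E[D_T\,\xi]=0$. Because $D_T>0$ $P$-a.s. (by definition of martingale deflator) and $\xi\geq 0$ $P$-a.s., this forces $\xi=0$ $P$-a.s., contradicting $P(\xi>0)>0$. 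Hence no arbitrage of the first kind can exist.

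I do not anticipate any genuine obstacle here; the only point that deserves care is the justification for passing from the local martingale property of $D\bar{V}^{\,v,\pi}$ to the supermartingale inequality $\E[D_T\bar{V}_T^{\,v,\pi}]\leq v$, which rests on non-negativity and Fatou's Lemma. Everything else is a direct consequence of the definitions of $\mathcal{D}$ and of an arbitrage of the first kind, together with letting the initial capital $v$ tend to zero.
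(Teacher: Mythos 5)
Your proof is correct and follows essentially the same route as the paper: deflated portfolios are non-negative local martingales, hence supermartingales, giving $\E[D_T\xi]\leq v$ for every initial capital $v$, and letting $v\to 0$ (the paper takes $v=1/n$) forces $\xi=0$ $P$-a.s., a contradiction. The only cosmetic difference is your use of a continuum of initial capitals instead of the sequence $1/n$, which changes nothing of substance.
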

\begin{proof}
Let $D\in\mathcal{D}$ and suppose that there exists a random variable $\xi$ yielding an arbitrage of the first kind. Then, for every $n\in\N$, there exists a strategy $\pi^n\in\mathcal{A}$ such that $\bar{V}_T^{\,1/n,\pi^n}\geq\xi$ $P$-a.s. For every $n\in\N$, the process $D\bar{V}^{\,1/n,\pi^n}=\bigl(D_t\bar{V}^{\,1/n,\pi^n}_t\bigr)_{0\leq t \leq T}$ is a positive local martingale and, hence, a supermartingale. So, for every $n\in\N$:
$$
E\left[D_T\,\xi\right]
\leq E\left[D_T\bar{V}_T^{\,1/n,\pi^n}\right]
\leq E\left[D_0\bar{V}_0^{\,1/n,\pi^n}\right]
= \frac{1}{n}
$$
Letting $n\rightarrow\infty$ gives $E\left[D_T\,\xi\right]=0$ and hence $D_T\,\xi=0$ $P$-a.s. Since, due to Definition \ref{defl}, we have $D_T>0$ $P$-a.s. this implies that $\xi=0$ $P$-a.s., which contradicts the assumption that $\xi$ is an arbitrage of the first kind.
\end{proof}

It is worth pointing out that one can also prove a converse result to Proposition \ref{Pdefl}, showing that if there are no arbitrages of the first kind then the set $\mathcal{D}$ is non-empty. In a general semimartingale setting, this has been recently shown in \cite{Ka2} (see also Section 4 of \cite{Fo} and \cite{HS} in the context of continuous path processes). Furthermore, Proposition 1 of \cite{Ka1} shows that the absence of arbitrages of the first kind is equivalent to the condition of \emph{No Unbounded Profit with Bounded Risk} (NUPBR), formally defined as the condition that the set $\left\{\bar{V}^{\,\pi}_T:\pi\in\mathcal{A}\right\}$ be bounded in probability\footnote{The (NUPBR) condition has been introduced under that name in \cite{KK}. However, the condition that the set $\left\{\bar{V}^{\,\pi}_T:\pi\in\mathcal{A}\right\}$ be bounded in probability also plays a key role in the seminal work \cite{DS} and its implications have been systematically studied in \cite{Ka}, where the same condition is denoted as ``property BK''.}. By relying on these facts, we can state the following Theorem\footnote{We want to remark that an analogous result has already been given in Theorem 2 of \cite{LW} under the assumption of a complete financial market.}, the second part of which follows from Proposition 4.19 of \cite{KK}.

\begin{Thm}	\label{ThmKK}
The following are equivalent:
\begin{itemize}
\item[(a)] $\mathcal{D}\neq\emptyset$;
\item[(b)] there are no arbitrages of the first kind;
\item[(c)] $\left\{\bar{V}^{\,\pi}_T:\pi\in\mathcal{A}\right\}$ is bounded in probability, i.e. the (NUPBR) condition holds.
\end{itemize}
Moreover, for every concave and strictly increasing utility function $U:\left[0,\infty\right)\rightarrow\R$, the expected utility maximisation problem of finding an element $\pi^*\in\mathcal{A}$ such that
$$
E\left[U\left(\bar{V}_T^{\,\pi^*}\right)\right] =
\underset{\pi\in\mathcal{A}}{\sup}\,E\left[U\left(\bar{V}_T^{\,\pi}\right)\right]
$$
either does not have a solution or has infinitely many solutions when any of the conditions (a)-(c) fails.
\end{Thm}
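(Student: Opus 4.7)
The plan is to close the cycle (a) $\Rightarrow$ (b), (b) $\Leftrightarrow$ (c), (b) $\Rightarrow$ (a), exploiting Proposition \ref{Pdefl} for the first implication and invoking the general-semimartingale results of Kardaras for the remaining ones. Implication (a) $\Rightarrow$ (b) is precisely Proposition \ref{Pdefl} and requires no further argument, so the new content lies entirely in the other two directions and in the utility-maximisation dichotomy.

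For (b) $\Leftrightarrow$ (c), the easy half (c) $\Rightarrow$ (b) goes by contraposition: given an arbitrage of the first kind $\xi$, Definition \ref{arb} provides, for each $n \in \N$, a strategy $\pi^n \in \mathcal{A}$ with $\bar{V}^{\,1/n,\pi^n}_T \geq \xi$, equivalently $\bar{V}^{\,\pi^n}_T \geq n\xi$. Since $P(\xi > 0) > 0$, for every fixed $K > 0$ we have $P(\bar{V}^{\,\pi^n}_T > K) \geq P(n\xi > K) \to P(\xi > 0) > 0$, contradicting boundedness in probability of $\{\bar{V}^{\,\pi}_T : \pi \in \mathcal{A}\}$. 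The converse (b) $\Rightarrow$ (c) is more delicate: if NUPBR fails, one extracts $\alpha > 0$ and $\pi^n \in \mathcal{A}$ with $P(\bar{V}^{\,\pi^n}_T > n) \geq \alpha$, rescales to initial wealth $1/n$, and applies a Koml\'os-type extraction to produce convex combinations converging $P$-a.s. to a nonnegative $\F_T$-measurable $\xi$ with $P(\xi > 0) > 0$ that is dominated by $\bar{V}^{\,1/n,\tilde{\pi}^n}_T$ for suitable $\tilde{\pi}^n \in \mathcal{A}$ and every $n$, hence an arbitrage of the first kind. I would cite Proposition 1 of \cite{Ka1} rather than reproduce this extraction.

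The remaining implication (b) $\Rightarrow$ (a) — constructing an element of $\mathcal{D}$ from the absence of arbitrages of the first kind — is the deepest step and is the main result of \cite{Ka2} (see also \cite{HS} and Chapter 4 of \cite{Fo} in the continuous-path setting). This is the step I expect to be the main obstacle: it requires a careful passage to a limit in a family of candidate supermartingale densities together with an optional-decomposition or Foll\"omer-Kramkov style argument, and I am not aware of a substantially simpler route specific to our It\^o-diffusion framework once one insists on the strict positivity $D_T > 0$ encoded in Definition \ref{defl}. Hence in a paper of the present scope I would invoke this result as a black box, noting that together with the preceding paragraph it closes the three-way equivalence.

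Finally, the utility-maximisation dichotomy is established as in Proposition 4.19 of \cite{KK}. If any of (a)--(c) fails, then by the equivalence just established NUPBR fails, so there are strategies whose discounted terminal wealth escapes to infinity on a set of uniform positive probability. Blending a candidate maximiser $\pi^*$ with such strategies — devoting a vanishing fraction of initial wealth to the escaping component — either strictly increases $E[U(\bar{V}^{\,\pi}_T)]$, thereby precluding the existence of any maximiser, or yields infinitely many maximisers attaining the common supremum, by strict monotonicity and concavity of $U$. I would cite \cite{KK}, Proposition 4.19, for the precise formulation.
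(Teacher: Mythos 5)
Your proposal is correct and follows essentially the same route as the paper: implication (a) $\Rightarrow$ (b) is Proposition \ref{Pdefl}, while the equivalence with (NUPBR), the converse existence of a martingale deflator, and the utility-maximisation dichotomy are delegated to Proposition 1 of \cite{Ka1}, to \cite{Ka2} (with \cite{Fo} and \cite{HS}), and to Proposition 4.19 of \cite{KK}, exactly as the paper does. Your explicit contraposition argument for (c) $\Rightarrow$ (b) is a small correct addition beyond the paper's citation-based treatment.
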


In view of the second part of the above Theorem, the condition of absence of arbitrages of the first kind can be seen as the minimal no-arbitrage condition in order to be able to meaningfully solve portfolio optimisation problems. 

\begin{Rem}
We have defined the notion of \emph{viability} for a financial market in terms of the absence of arbitrages of the first kind (see Definition \ref{arb}). In \cite{LW}, a financial market is said to be viable if any agent with sufficiently regular preferences and with a positive initial endowment can construct an optimal portfolio. The last part of Theorem \ref{ThmKK} gives a correspondence between these two notions of viability, since it shows that the absence of arbitrages of the first kind is the minimal no-arbitrage-type condition in order to being able to meaningfully solve portfolio optimisation problems.
\end{Rem}

It is now straightforward to show that, as soon as Assumption \ref{A1-2} holds, the diffusion-based model introduced in Section \ref{S1.1} satisfies the equivalent conditions of Theorem \ref{ThmKK}. In fact, due to Proposition \ref{P1-1}, the process $\widehat{Z}$ defined in \eqref{E1-9} is a martingale deflator for the financial market $\left(S^0,S^1,\ldots,S^N\right)$ as soon as Assumption \ref{A1-2} is satisfied and, hence, due to Proposition \ref{Pdefl}, there are no arbitrages of the first kind. Conversely, suppose that there are no arbitrages of the first kind but Assumption \ref{A1-2} fails to hold. Then, due to Remark \ref{minimal} together with Lemma \ref{strdefl}, we have that $\mathcal{D}=\emptyset$. Theorem \ref{ThmKK} then implies that there exist arbitrages of the first kind, thus leading to a contradiction.
We have thus proved the following Corollary.

\begin{Cor}	\label{NOarb}
The financial market $\left(S^0,S^1,\ldots,S^N\right)$ is viable, i.e. it does not admit arbitrages of the first kind (see Definition \ref{arb}), if and only if Assumption \ref{A1-2} holds.
\end{Cor}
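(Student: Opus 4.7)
The plan is to split the equivalence into its two implications, each of which chains together results already established in the excerpt; no genuinely new computation is required, so the task is mainly one of assembling the pieces in the right order.

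For the ``if'' direction, I would assume Assumption \ref{A1-2} and observe that the process $\widehat{Z}$ of \eqref{E1-9} is then a well-defined strictly positive continuous local martingale. By part (b) of Proposition \ref{P1-1}, for every $\pi\in\mathcal{A}$ the product $\widehat{Z}\,\bar{V}^{\,\pi}$ is a local martingale, so $\widehat{Z}$ satisfies all the conditions in Definition \ref{defl} and therefore belongs to $\mathcal{D}$. In particular $\mathcal{D}\neq\emptyset$, and Proposition \ref{Pdefl} immediately rules out arbitrages of the first kind, i.e.\ the market is viable in the sense of Definition \ref{arb}.

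For the ``only if'' direction, I would argue by contraposition: suppose Assumption \ref{A1-2} fails, so $\int_0^T\|\theta_t\|^2dt=\infty$ on a set of positive $P$-measure, and show that $\mathcal{D}=\emptyset$. Indeed, if some $D\in\mathcal{D}$ existed, Lemma \ref{strdefl} would yield a progressively measurable $\gamma\in L^2_{loc}(W)$ satisfying the structural condition \eqref{Prop1-1}. But Remark \ref{minimal} shows that $\theta$ is the orthogonal projection of any such $\gamma$ onto $\mathcal{R}(\sigma_t')$, so pointwise $\|\theta_t\|\leq\|\gamma_t\|$, which would force $\theta\in L^2_{loc}(W)$, contradicting the failure of Assumption \ref{A1-2}. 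Hence $\mathcal{D}=\emptyset$, and the implication (a)$\Rightarrow$(b) of Theorem \ref{ThmKK} (or rather its contrapositive) produces an arbitrage of the first kind, so the market is not viable.

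There is no real obstacle here, since Proposition \ref{P1-1}, Lemma \ref{strdefl}, Remark \ref{minimal}, Proposition \ref{Pdefl} and Theorem \ref{ThmKK} have already done all the work. The only point that deserves a careful word is the pointwise comparison $\|\theta_t\|\leq\|\gamma_t\|$ used in the converse; this is exactly the minimality statement recorded in Remark \ref{minimal}, and it is what lets a single deflator (whose own ``$\gamma$'' we do not know a priori) force the \emph{canonical} market price of risk process $\theta$ into $L^2_{loc}(W)$.
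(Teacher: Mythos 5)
Your proposal is correct and follows essentially the same route as the paper: Assumption \ref{A1-2} makes $\widehat{Z}$ a martingale deflator via Proposition \ref{P1-1}, so Proposition \ref{Pdefl} gives viability, while the converse combines Lemma \ref{strdefl} with the minimality property of Remark \ref{minimal} to conclude $\mathcal{D}=\emptyset$ when Assumption \ref{A1-2} fails, and then invokes Theorem \ref{ThmKK}. The only cosmetic point is that the direction of Theorem \ref{ThmKK} you actually need is ``$\mathcal{D}=\emptyset$ implies existence of an arbitrage of the first kind'', i.e.\ the contrapositive of (b)$\Rightarrow$(a) rather than of (a)$\Rightarrow$(b), which is immaterial since the theorem states a full equivalence.
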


As we have seen in Proposition \ref{ip-arb}, if there exist an increasing profit then there exist an arbitrage of the first kind. We now show that the absence of arbitrages of the first kind is a \emph{strictly} stronger no-arbitrage-type condition than the absence of increasing profits by means of a simple example, which we adapt from Example 3.4 of \cite{DS4}. Let $N=d=1$, $r\equiv 0$ and let the real-valued process $S=\left(S_t\right)_{0\leq t \leq T}$ be given as the solution to the following SDE:
$$
dS_t = \frac{S_t}{\sqrt{t}}\,dt+S_t\,dW_t
\qquad S_0=s\in\left(0,\infty\right)
$$
Using the notations introduced in Section \ref{S1.1}, we have $\mu_t=1/\sqrt{t}$, for $t\in\left[0,T\right]$, and $\sigma\equiv 1$. Clearly, condition \eqref{Prop1-1} is satisfied, since we trivially have $\mu_t=\sigma_t\theta_t$, where $\theta_t=1/\sqrt{t}$, for $t\in\left[0,T\right]$. Proposition \ref{Prop1} then implies that there are no increasing profits. However, $\theta\notin L^2_{loc}\left(W\right)$, since $\int_0^t\theta^2_u\,du=\int_0^t\frac{1}{u}\,du=\infty$ for all $t\in\left[0,T\right]$. Corollary \ref{NOarb} then implies that there exist arbitrages of the first kind\footnote{More precisely, note that the process $\bigl(\int_0^t\theta_u^2du\bigr)_{0\leq t \leq T}=\bigl(\int_0^t\frac{1}{u}du\bigr)_{0\leq t \leq T}$ jumps to infinity instantaneously at $t=0$. Hence, as explained in Remark \ref{Rem-NSA}, the model considered in the present example allows not only for arbitrages of the first kind, but also for strong arbitrage opportunities. Of course, there are instances where strong arbitrage opportunities are precluded but still there exist arbitrages of the first kind. We refer the interested reader to \cite{BT} for an example of such a model, where the price of a risky asset is modelled as the exponential of a \emph{Brownian bridge} (see also \cite{LW}, example 3.1)}.

We want to emphasise that, due to Theorem \ref{ThmKK}, the diffusion-based model introduced in Section \ref{S1.1} allows us to meaningfully consider portfolio optimisation problems as soon as Assumption \ref{A1-2} holds. However, nothing guarantees that an \emph{Equivalent Local Martingale Measure} (ELMM) exists, as shown in the following classical example, already considered in \cite{DS3}, \cite{H} and \cite{KK}. Other instances of models for which an ELMM does not exist arise in the context of \emph{diverse} financial markets, see Chapter II of \cite{FK}.

\begin{Ex}
Let us suppose that $\FF=\FF^W$, where $W$ is a standard Brownian motion ($d=1$), and let $N=1$. Assume that $S_t^0\equiv 1$ for all $t\in\left[0,T\right]$ and that the real-valued process $S=\left(S_t\right)_{0\leq t \leq T}$ is given by the solution to the following SDE:
\be	\label{E3-1}
dS_t = \frac{1}{S_t}\,dt+dW_t \qquad S_0=s\in\left(0,\infty\right)
\ee
It is well-known that the process $S$ is a Bessel process of dimension three (see e.g. \cite{RY}, Section XI.1). So, $S_t$ is $P$-a.s. strictly positive and finite valued for all $t\in\left[0,T\right]$. Furthermore, the market price of risk process $\theta$ is given by $\theta_t=\sigma_t^{-1}\,\mu_t=\frac{1}{S_t}$, for $t\in\left[0,T\right]$. Since $S$ is continuous, we clearly have $\int_0^T\theta_t^2dt<\infty$ $P$-a.s., meaning that Assumption \ref{A1-2} is satisfied. Hence, due to Corollary \ref{NOarb}, there are no arbitrages of the first kind. 

However, for this particular financial market model there exists no ELMM. We prove this claim arguing by contradiction.
Suppose that $Q$ is an ELMM for $S$ and denote by $Z^Q=\bigl(Z_t^Q\bigr)_{0\leq t \leq T}$ its density process. Then, due to the martingale representation theorem (see \cite{KS1}, Theorem 3.4.15 and Problem 3.4.16), we can represent $Z^Q$ as follows:
$$
Z_t^Q=\mathcal{E}\left(-\int\lambda\,dW\right)_t
\qquad \text{ for } t\in\left[0,T\right]
$$
where $\lambda=\left(\lambda_t\right)_{0\leq t \leq T}$ is a progressively measurable process such that $\int_0^T\lambda_t^2\,dt<\infty$ $P$-a.s.
Due to Girsanov's theorem, the process $W^Q=\bigl(W_t^Q\bigr)_{0\leq t \leq T}$ defined by $W^Q_t:=W_t+\int_0^t\lambda_u\,du$, for $t\in\left[0,T\right]$, is a Brownian motion under $Q$. Hence, the process $S$ satisfies the following SDE under $Q$:
\be	\label{P3-2-2}
dS_t = \left(\frac{1}{S_t}-\lambda_t\right)dt+dW^Q_t	\qquad S_0=s
\ee
Since $Q$ is an ELMM for $S$, the SDE \eqref{P3-2-2} must have a zero drift term, i.e. it must be $\lambda_t=\frac{1}{S_t}=\theta_t$ for all $t\in\left[0,T\right]$. Then, a simple application of It\^o's formula gives:
$$
Z^Q_t = \mathcal{E}\left(-\int\frac{1}{S}\,dW\right)_t
= \exp\left(-\int_0^t\frac{1}{S_u}\,dW_u-\frac{1}{2}\int_0^t\frac{1}{S^2_u}\,du\right)
= \frac{1}{S_t}
$$
However, since $S$ is a Bessel process of dimension three, it is well-known that the process $1/S=\left(1/S_t\right)_{0\leq t \leq T}$ is a strict local martingale, i.e. it is a local martingale but not a true martingale (see e.g. \cite{RY}, Exercise XI.1.16). Clearly, this contradicts the fact that $Q$ is a well-defined probability measure\footnote{Alternatively, one can show that the probability measures $Q$ and $P$ fail to be equivalent by arguing as follows. Let us define the stopping time $\tau:=\inf\left\{t\in\left[0,T\right]:S_t=0\right\}$. The process $S=\left(S_t\right)_{0\leq t\leq T}$ is a Bessel process of dimension three under $P$ and, hence, we have $P\left(\tau<\infty\right)=0$. However, since the process $S=\left(S_t\right)_{0\leq t \leq T}$ is a $Q$-Brownian motion, we clearly have $Q\left(\tau<\infty\right)>0$. This contradicts the assumption that $Q$ and $P$ are equivalent.}, thus showing that there cannot exist an ELMM for $S$.
\end{Ex}

As the above Example shows, Assumption \ref{A1-2} does not guarantee the existence of an ELMM for the financial market $\left(S^0,S^1,\ldots,S^N\right)$. It is well-known that, in the case of continuous-path processes, the existence of an ELMM is equivalent to the \emph{No Free Lunch with Vanishing Risk} (NFLVR) no-arbitrage-type condition, see \cite{DS} and \cite{DS2}. Furthermore, (NFLVR) holds if and only if both (NUPBR) and the classical \emph{no-arbitrage} (NA) condition hold (see Section 3 of \cite{DS}, Lemma 2.2 of \cite{Ka} and Proposition 4.2 of \cite{KK}), where, recalling that $\bar{V}^{\,\pi}_0=1$, the (NA) condition precludes the existence of a trading strategy $\pi\in\mathcal{A}$ such that $P\left(\bar{V}^{\,\pi}_T\geq 1\right)=1$ and $P\left(\bar{V}_T^{\,\pi}>1\right)>0$. This implies that, even if Assumption \ref{A1-2} holds, the classical (NFLVR) condition may fail to hold. However, due to Theorem \ref{ThmKK}, the financial market may still be viable.

\begin{Rem}[\textbf{\emph{On the martingale property of $\boldsymbol{\widehat{Z}}$}}]	\label{Rem-S1.1bis}
It is important to note that Assumption \ref{A1-2} does not suffice to ensure that $\widehat{Z}$ is a true martingale. Well-known sufficient conditions for this to hold include the Novikov and Kazamaki criteria, see e.g. \cite{RY}, Section VIII.1. If $\widehat{Z}$ is a true martingale we have then $\E\bigl[\widehat{Z}_T\bigr]=1$ and we can define a probability measure $\widehat{Q}\sim P$ by letting $\frac{d\widehat{Q}}{dP}:=\widehat{Z}_T$. The martingale $\widehat{Z}$ represents then the \emph{density process} of $\widehat{Q}$ with respect to $P$, i.e. $\widehat{Z}_t=\E\Bigl[\frac{d\widehat{Q}}{dP}\bigl|\F_t\Bigr]$ $P$-a.s. for all $t\in\left[0,T\right]$, and a process $M=\left(M_t\right)_{0\leq t \leq T}$ is a local $\widehat{Q}$-martingale if and only if the process $\widehat{Z}M=\bigl(\widehat{Z}_tM_t\bigr)_{0\leq t \leq T}$ is a local $P$-martingale. Due to Proposition \ref{P1-1}-\emph{(a)}, this implies that if $\E\bigl[\widehat{Z}_T\bigr]=1$ then the process $\bar{S}:=\left(\bar{S}^1,\ldots,\bar{S}^N\right)'$ is a local $\widehat{Q}$-martingale or, in other words, the probability measure $\widehat{Q}$ is an ELMM. 
Girsanov's theorem then implies that the process $\widehat{W}=\bigl(\widehat{W}_t\bigr)_{0\leq t \leq T}$ defined by $\widehat{W}_t:=W_t+\int_0^t\theta_u\,du$ for $t\in\left[0,T\right]$ is a Brownian motion under $\widehat{Q}$. Since the dynamics of $S:=\left(S^1,\ldots,S^N\right)'$ in \eqref{E1-2} can be rewritten as:
$$
dS_t = \diag\left(S_t\right)\mathbf{1}\,r_t\,dt+\diag\left(S_t\right)\sigma_t\bigl(\theta_t\,dt+dW_t\bigr)
\qquad S_0=s 
$$
the process $\bar{S}:=\left(\bar{S}^1,\ldots,\bar{S}^N\right)'$ satisfies the following SDE under the measure $\widehat{Q}$:
$$
d\bar{S}_t 
=\diag\left(\bar{S}_t\right)\sigma_t\,d\widehat{W}_t	\qquad \bar{S}_0=s
$$
We want to point out that the process $\widehat{Z}=\bigl(\widehat{Z}_t\bigr)_{0\leq t \leq T}$ represents the density process with respect to $P$ of the \emph{minimal martingale measure}, when the latter exists, see e.g. \cite{HS}. Again, we emphasise that in this paper we do not assume neither that $\E\bigl[\widehat{Z}_T\bigr]=1$ nor that an ELMM exists.
\end{Rem}

We close this Section with a simple technical result which turns out to be useful in the following.

\begin{Lem}	\label{L1-1}
Suppose that Assumption \ref{A1-2} holds. Then an $\R^N$-valued progressively measurable process $\pi=\left(\pi_t\right)_{0\leq t \leq T}$ belongs to $\mathcal{A}$ if and only if $\int_0^T\left\|\sigma'_t\,\pi_t\right\|^2dt<\infty$ $P$-a.s.
\end{Lem}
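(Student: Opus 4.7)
The plan is to observe that the two defining conditions for $\mathcal{A}$ collapse into one under Assumption \ref{A1-2}, because the drift $\mu_t - r_t\mathbf{1}$ factors through $\sigma_t$ and the factor process $\theta$ is square-integrable in the appropriate sense.

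One direction is immediate: if $\pi\in\mathcal{A}$, then by part \emph{(a)} of Definition \ref{D1-1} we automatically have $\int_0^T\|\sigma'_t\pi_t\|^2\,dt<\infty$ $P$-a.s. For the converse, the key identity is
$$
\mu_t-r_t\mathbf{1}
=\sigma_t\sigma_t'\left(\sigma_t\sigma_t'\right)^{-1}\left(\mu_t-r_t\mathbf{1}\right)
=\sigma_t\,\theta_t,
$$
which follows directly from Definition \ref{D1-2} together with Assumption \ref{A1-1} (invertibility of $\sigma_t\sigma_t'$). Consequently
$$
\pi_t'\left(\mu_t-r_t\mathbf{1}\right)
=\pi_t'\sigma_t\,\theta_t
=\left(\sigma_t'\pi_t\right)'\theta_t,
$$
so that by the Cauchy--Schwarz inequality applied pathwise on $[0,T]$,
$$
\int_0^T\left|\pi_t'\left(\mu_t-r_t\mathbf{1}\right)\right|dt
\leq\left(\int_0^T\left\|\sigma_t'\pi_t\right\|^2\!dt\right)^{\!1/2}\left(\int_0^T\left\|\theta_t\right\|^2\!dt\right)^{\!1/2}.
$$
The first factor on the right is finite $P$-a.s. by hypothesis and the second is finite $P$-a.s. by Assumption \ref{A1-2}, so the product is finite $P$-a.s. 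Together with progressive measurability of $\pi$, this gives $\pi\in\mathcal{A}$.

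There is no real obstacle here: the whole argument is a one-line Cauchy--Schwarz estimate, and the only thing worth emphasising in the write-up is precisely where Assumption \ref{A1-1} (to get the representation $\mu_t-r_t\mathbf{1}=\sigma_t\theta_t$) and where Assumption \ref{A1-2} (to control $\int_0^T\|\theta_t\|^2\,dt$) are used.
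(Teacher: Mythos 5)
Your proposal is correct and follows essentially the same route as the paper's own proof: the identity $\pi_t'\left(\mu_t-r_t\mathbf{1}\right)=\pi_t'\,\sigma_t\,\theta_t$ followed by the pathwise Cauchy--Schwarz estimate, using Assumption \ref{A1-2} to control $\int_0^T\left\|\theta_t\right\|^2dt$. The only difference is that you make explicit the elementary verification that $\mu_t-r_t\mathbf{1}=\sigma_t\theta_t$, which the paper uses implicitly.
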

\begin{proof}
We only need to show that Assumption \ref{A1-2} and $\int_0^T\left\|\sigma'_t\,\pi_t\right\|^2dt<\infty$ $P$-a.s. together imply that $\int_0^T\left|\pi_t'\left(\mu_t-r_t\mathbf{1}\right)\right|dt<\infty$ $P$-a.s. This follows easily from the Cauchy-Schwarz inequality, in fact:
$$
\int_0^T\left|\pi_t'\left(\mu_t-r_t\mathbf{1}\right)\right|dt
=\int_0^T\left|\pi_t'\,\sigma_t\,\theta_t\right|dt	
\leq \left(\int_0^T\left\|\sigma_t'\,\pi_t\right\|^2dt\right)^{\frac{1}{2}}
\left(\int_0^T\left\|\theta_t\right\|^2dt\right)^{\frac{1}{2}}<\infty
\quad \text{$P$-a.s.}
$$
\end{proof}

\section{The growth-optimal portfolio and the numeraire portfolio}	\label{S2}

As we have seen in the last Section, the diffusion-based model introduced in Section \ref{S1.1} can represent a viable financial market even if the traditional (NFLVR) no-arbitrage-type condition fails to hold or, equivalently, if an ELMM for $\left(S^0,S^1,\ldots,S^N\right)$ fails to exist. Let us now consider an interesting portfolio optimisation problem, namely the problem of maximising the \emph{growth rate}, formally defined as follows (compare \cite{FK}, \cite{P1} and \cite{PH}, Section 10.2).

\begin{Def}	\label{D2-1}
For a trading strategy $\pi\in\mathcal{A}$ the \emph{growth rate} process $g^{\pi}=\left(g_t^{\pi}\right)_{0\leq t \leq T}$ is defined as the drift term in the SDE satisfied by the process $\log V^{\,\pi}=\left(\,\log V_t^{\,\pi}\right)_{0\leq t \leq T}$, i.e. the term $g_t^{\pi}$ in the SDE:
\be	\label{E2-1}
d\log V^{\,\pi}_t = g_t^{\pi}\,dt+\pi'_t\sigma_t\,dW_t
\ee
A trading strategy $\pi^*\in\mathcal{A}$ (and the corresponding portfolio process $V^{\,\pi^*}$) is said to be \emph{growth-optimal} if $g_t^{\pi^*}\geq g_t^{\pi}$ $P$-a.s. for all $t\in\left[0,T\right]$ for any trading strategy $\pi\in\mathcal{A}$.
\end{Def}

The terminology ``growth rate'' is motivated by the fact that:
$$
\underset{T\rightarrow\infty}{\lim}\,\frac{1}{T}\left(\log V_T^{\,\pi}-\int_0^Tg_t^{\pi}dt\right)=0
\qquad \text{$P$-a.s.}
$$
under ``controlled growth'' of $a:=\sigma\sigma'$, i.e. $\underset{T\rightarrow\infty}{\lim}\left(\frac{\log\log T}{T^2}\int_0^Ta_t^{i,i}dt\right)=0$ $P$-a.s. (see \cite{FK}, Section 1). In the context of the general diffusion-based financial market described in Section \ref{S1.1}, the following Theorem gives an explicit description of the growth-optimal strategy $\pi^*\in\mathcal{A}$.

\begin{Thm}	\label{T2-1}
Suppose that Assumption \ref{A1-2} holds. Then there exists an unique growth-optimal strategy $\pi^*\in\mathcal{A}$, explicitly given by:
\be	\label{E2-2}
\pi_t^*=\left(\sigma_t\,\sigma_t'\right)^{-1}\sigma_t\,\theta_t
\ee
where the process $\theta=\left(\theta_t\right)_{0\leq t \leq T}$ is the market price of risk introduced in Definition \ref{D1-2}. 
The corresponding Growth-Optimal Portfolio (GOP) $V^{\,\pi^*}=\bigl(V_t^{\,\pi^*}\bigr)_{0\leq t \leq T}$ satisfies the following dynamics:
\be	\label{E2-3}
\frac{dV_t^{\,\pi^*}}{V_t^{\,\pi^*}}=r_t\,dt+\theta_t'\left(\theta_t\,dt+dW_t\right)
\ee
\end{Thm}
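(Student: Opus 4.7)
The plan is to reduce the growth-optimisation problem to a pointwise quadratic maximisation in $\R^N$, then verify that the candidate maximiser is admissible and rewrite its wealth dynamics in the required form.

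First I would compute the growth rate $g_t^{\pi}$ explicitly. Since $V_t^{\pi} = S_t^0\bar{V}_t^{\pi}$, taking logarithms in \eqref{E1-5} and adding $\int_0^t r_u\,du$ yields
\be
d\log V_t^{\pi} = \left(r_t + \pi_t'(\mu_t - r_t\mathbf{1}) - \tfrac{1}{2}\|\sigma_t'\pi_t\|^2\right)dt + \pi_t'\sigma_t\,dW_t,
\ee
so that $g_t^{\pi} = r_t + \pi_t'(\mu_t-r_t\mathbf{1}) - \tfrac{1}{2}\|\sigma_t'\pi_t\|^2$. Using the factorisation $\mu_t - r_t\mathbf{1} = \sigma_t\theta_t$ from Definition \ref{D1-2} (which is guaranteed by Proposition \ref{Prop1} together with Remark \ref{Rem-D1-2}), this becomes
\be
g_t^{\pi} = r_t + \pi_t'\sigma_t\theta_t - \tfrac{1}{2}\pi_t'\sigma_t\sigma_t'\pi_t.
\ee

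Next I would maximise $g_t^{\pi}$ pointwise in $\omega$ and $t$ over $\pi_t \in \R^N$. The $r_t$ term is independent of $\pi_t$, and the remaining expression is a strictly concave quadratic form in $\pi_t$ whose Hessian $-\sigma_t\sigma_t'$ is negative definite by Assumption \ref{A1-1}. Setting the gradient $\sigma_t\theta_t - \sigma_t\sigma_t'\pi_t$ to zero gives the unique maximiser $\pi_t^* = (\sigma_t\sigma_t')^{-1}\sigma_t\theta_t$, which is progressively measurable since $\sigma$ and $\theta$ are. This is the candidate formula \eqref{E2-2}.

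Then I would check admissibility via Lemma \ref{L1-1}. The matrix $\sigma_t'(\sigma_t\sigma_t')^{-1}\sigma_t$ is the orthogonal projection onto $\mathcal{R}(\sigma_t')$, and since $\theta_t = \sigma_t'(\sigma_t\sigma_t')^{-1}(\mu_t-r_t\mathbf{1}) \in \mathcal{R}(\sigma_t')$ by construction, we obtain $\sigma_t'\pi_t^* = \theta_t$ and consequently $\|\sigma_t'\pi_t^*\|^2 = \|\theta_t\|^2$. Assumption \ref{A1-2} then gives $\int_0^T \|\sigma_t'\pi_t^*\|^2\,dt < \infty$ $P$-a.s., so $\pi^* \in \mathcal{A}$ by Lemma \ref{L1-1}.

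Finally, to obtain \eqref{E2-3} I would substitute $\pi^*$ into the dynamics of $V^{\pi}$. Using the identities derived above, $(\pi_t^*)'\sigma_t = \theta_t'$ and $(\pi_t^*)'(\mu_t-r_t\mathbf{1}) = (\pi_t^*)'\sigma_t\theta_t = \|\theta_t\|^2$, so
\be
\frac{dV_t^{\pi^*}}{V_t^{\pi^*}} = r_t\,dt + \|\theta_t\|^2\,dt + \theta_t'\,dW_t = r_t\,dt + \theta_t'(\theta_t\,dt + dW_t),
\ee
which is \eqref{E2-3}. The only delicate point is the uniqueness claim: a second growth-optimal strategy $\tilde{\pi}$ must satisfy $g^{\tilde{\pi}}_t = g^{\pi^*}_t$ $P\otimes\ell$-a.e., and strict concavity of the quadratic together with the full-rank assumption \ref{A1-1} (which makes $\pi \mapsto \sigma_t'\pi$ injective) forces $\tilde{\pi} = \pi^*$ $P\otimes\ell$-a.e. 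This is the only step where Assumption \ref{A1-1} is genuinely needed, and it is exactly the role mentioned in the Remark following it.
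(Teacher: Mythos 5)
Your proposal is correct and follows essentially the same route as the paper's own proof: compute $g_t^{\pi}=r_t+\pi_t'\left(\mu_t-r_t\mathbf{1}\right)-\frac{1}{2}\pi_t'\sigma_t\sigma_t'\pi_t$, solve the pointwise strictly concave quadratic maximisation via the first-order condition (uniqueness from the positive definiteness of $\sigma_t\sigma_t'$ under Assumption \ref{A1-1}), verify admissibility through Lemma \ref{L1-1} using $\left\|\sigma_t'\pi_t^*\right\|=\left\|\theta_t\right\|$ and Assumption \ref{A1-2}, and then substitute to obtain \eqref{E2-3}. The only cosmetic differences are that you get the identity $\sigma_t'\pi_t^*=\theta_t$ via the projection onto $\mathcal{R}\left(\sigma_t'\right)$ and read off the dynamics directly from the wealth SDE, whereas the paper computes the quadratic form explicitly and passes through $d\log V_t^{\,\pi^*}$ and It\^o's formula.
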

\begin{proof}
Let $\pi\in\mathcal{A}$ be a trading strategy. A simple application of It\^o's formula gives that:
\be	\label{T2-1-1}
d\log V_t^{\,\pi} = g_t^{\pi}\,dt+\pi_t'\,\sigma_t\,dW_t
\ee
where $g_t^{\pi}:=r_t+\pi_t'\left(\mu_t-r_t\mathbf{1}\right)-\frac{1}{2}\,\pi_t'\,\sigma_t\,\sigma_t'\,\pi_t$, for $t\in\left[0,T\right]$. Since the matrix $\sigma_t\sigma_t'$ is $P$-a.s. positive definite for all $t\in\left[0,T\right]$, due to Assumption \ref{A1-1}, a trading strategy $\pi^*\in\mathcal{A}$ is growth-optimal (in the sense of Definition \ref{D2-1}) if and only if, for every $t\in\left[0,T\right]$, $\pi^*_t$ solves the first order condition obtained by differentiating $g^{\pi}_t$ with respect to $\pi_t$. This means that $\pi^*_t$ must satisfy the following condition, for every $t\in\left[0,T\right]$:
$$
\mu_t-r_t\mathbf{1}-\sigma_t\sigma_t'\pi_t^*=0
$$
Due to Assumption \ref{A1-1}, the matrix $\sigma_t\sigma'_t$ is $P$-a.s. invertible for all $t\in\left[0,T\right]$. So, using Definition \ref{D1-2}, we get the following unique optimiser $\pi^*_t$:
$$
\pi_t^*=\left(\sigma_t\,\sigma_t'\right)^{-1}\left(\mu_t-r_t\mathbf{1}\right)
=\left(\sigma_t\,\sigma_t'\right)^{-1}\sigma_t\,\theta_t
\qquad \text{ for } t\in\left[0,T\right]
$$
We now need to verify that $\pi^*=\left(\pi_t^*\right)_{0\leq t \leq T}\in\mathcal{A}$. Due to Lemma \ref{L1-1}, it suffices to check that $\int_0^T\left\|\sigma_t'\pi_t^*\right\|^2dt<\infty$ $P$-a.s. To show this, it is enough to notice that:
$$
\int_0^T\left\|\sigma_t'\,\pi_t^*\right\|^2dt
=\int_0^T\left(\mu_t-r_t\mathbf{1}\right)'\left(\sigma_t\,\sigma_t'\right)^{-1}\left(\mu_t-r_t\mathbf{1}\right)dt
=\int_0^T\left\|\theta_t\right\|^2dt<\infty
\qquad \text{$P$-a.s.}
$$
due to Assumption \ref{A1-2}. We have thus shown that $\pi^*$ maximises the growth rate and is an admissible trading strategy. 
Finally, note that equation \eqref{T2-1-1} leads to:
$$	\ba
d\log V_t^{\,\pi^*} &= g_t^{\pi^*}\,dt+\left(\pi_t^*\right)'\sigma_t\,dW_t	\\
&= r_t\,dt+\theta_t'\,\sigma_t'\left(\sigma_t\,\sigma_t'\right)^{-1}\left(\mu_t-r_t\mathbf{1}\right)dt
-\frac{1}{2}\,\theta_t'\,\sigma_t'\left(\sigma_t\,\sigma_t'\right)^{-1}\sigma_t\,\sigma_t'\left(\sigma_t\,\sigma_t'\right)^{-1}\sigma_t\,\theta_t\,dt	\\
&\phantom{=}+\theta_t'\,\sigma_t'\left(\sigma_t\,\sigma_t'\right)^{-1}\sigma_t\,dW_t	\\
&= \Bigl(r_t+\frac{1}{2}\,\left\|\theta_t\right\|^2\Bigr)dt+\theta_t'dW_t
\ea	$$
where the last equality is obtained by replacing $\theta_t$ with its expression as given in Definition \ref{D1-2}. Equation \eqref{E2-3} then follows by a simple application of It\^o's formula.
\end{proof}

\begin{Rem} \label{Rem-GOP} $ $
\begin{enumerate}
\item
Results analogous to Theorem \ref{T2-1} can be found in Section 2 of \cite{RG}, Example 3.7.9 of \cite{KS2}, Section 2.7 of \cite{P0}, Section 3.2 of \cite{P1}, Section 10.2 of \cite{PH} and Proposition 2 of \cite{PR2}. However, in all these works the growth-optimal strategy has been derived for the specific case of a complete financial market, i.e. under the additional assumptions that $d=N$ and $\FF=\FF^W$ (see Section \ref{S1.2}). Here, we have instead chosen to deal with the more general situation described in Section \ref{S1.1}, i.e. with a general incomplete market. Furthermore, we rigorously check the admissibility of the candidate growth-optimal strategy.
\item
Due to Corollary \ref{NOarb}, Assumption \ref{A1-2} is equivalent to the absence of arbitrages of the first kind. However, it is worth emphasising that Theorem \ref{T2-1} does not rely on the existence of an ELMM for the financial market $\left(S^0,S^1,\ldots,S^N\right)$. 
\item
Due to equation \eqref{E2-3}, the discounted GOP process $\bar{V}^{\,\pi^*}=\left(\bar{V}_t^{\,\pi^*}\right)_{0\leq t \leq T}$ satisfies the following dynamics:
\be	\label{E2-4}
\frac{d\bar{V}_t^{\,\pi^*}}{\bar{V}_t^{\,\pi^*}} = \left\|\theta_t\right\|^2dt+\theta_t'\,dW_t
\ee
We can immediately observe that the drift coefficient is the ``square'' of the diffusion coefficient, thus showing that there is a strong link between instantaneous rate of return and volatility in the GOP dynamics. Moreover, the market price of risk plays a key role in the GOP dynamics (to this effect, compare the discussion in \cite{PH}, Chapter 13). Observe also that Assumption \ref{A1-2} is equivalent to requiring that the solution $\bar{V}^{\,\pi^*}$ to the SDE \eqref{E2-4} is well-defined and $P$-a.s. finite  valued, meaning that the discounted GOP does not explode in the finite time interval $\left[0,T\right]$. Indeed, it can be shown, and this holds true in general semimartingale models, that the existence of a non-explosive GOP is in fact \emph{equivalent} to the absence of arbitrages of the first kind, as can be deduced by combining Theorem \ref{ThmKK} and \cite{KK}, Theorem 4.12 (see also \cite{CL} and \cite{HS}). 
\end{enumerate}
\end{Rem}

\begin{Ex}[\emph{\textbf{The classical Black-Scholes model}}]
In order to develop an intuitive feeling for some of the concepts introduced in this Section, let us briefly consider the case of the classical Black-Scholes model, i.e. a financial market represented by $\left(S^0,S\right)$, with $r_t\equiv r$ for some $r\in\R$ for all $t\in\left[0,T\right]$ and $S=\left(S_t\right)_{0\leq t \leq T}$ a real-valued process satisfying the following SDE:
$$
dS_t = S_t\,\mu\,dt+S_t\,\sigma\,dW_t
\qquad S_0 = s\in\left(0,\infty\right)
$$
with $\mu\in\R$ and $\sigma\in\R\setminus\left\{0\right\}$. The market price of risk process $\theta=\left(\theta_t\right)_{0\leq t \leq T}$ is then given by $\theta_t\equiv\theta:=\frac{\mu-r}{\sigma}$ for all $t\in\left[0,T\right]$. Due to Theorem \ref{T2-1}, the GOP strategy $\pi^*=\left(\pi_t^*\right)_{0\leq t \leq T}$ is then given by $\pi_t^*\equiv\pi^*:=\frac{\mu-r}{\sigma^2}$, for all $t\in\left[0,T\right]$. In this special case, Novikov's condition implies that $\widehat{Z}$ is a true martingale, yielding the density process of the (minimal) martingale measure $\widehat{Q}$ (see Remark \ref{Rem-S1.1bis}).
\end{Ex}

The remaining part of this Section is devoted to the derivation of some basic but fundamental properties of the GOP. Let us start with the following simple Proposition.

\begin{Prop}	\label{P2-1}
Suppose that Assumption \ref{A1-2} holds. Then the discounted GOP process $\bar{V}^{\,\pi^*}=\bigl(\bar{V}^{\,\pi^*}_t\bigr)_{0\leq t \leq T}$ is related to the martingale deflator $\widehat{Z}=\bigl(\widehat{Z}_t\bigr)_{0\leq t \leq T}$ as follows, for all $t\in\left[0,T\right]$:
$$
\bar{V}_t^{\,\pi^*} = \frac{1}{\widehat{Z}_t}
$$
\end{Prop}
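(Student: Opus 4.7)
The cleanest route is to feed the explicit form of $\pi^*$ from equation \eqref{E2-2} into the general computation already performed in the proof of Proposition \ref{P1-1}. Recall that equation \eqref{E1-10} gave, for any $\pi\in\mathcal{A}$,
$$
d\bigl(\widehat{Z}_t\,\bar{V}_t^{\,\pi}\bigr)
= \widehat{Z}_t\,\bar{V}_t^{\,\pi}\bigl(\pi_t'\,\sigma_t-\theta_t'\bigr)\,dW_t.
$$
The plan is to show that for $\pi=\pi^*$ the coefficient on the right-hand side vanishes identically, so that $\widehat{Z}\,\bar{V}^{\,\pi^*}$ is constant and equal to its initial value, which is $1$.

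The key algebraic identity is $(\pi_t^*)'\,\sigma_t=\theta_t'$ $P\otimes\ell$-a.e. Using \eqref{E2-2}, one has
$$
(\pi_t^*)'\,\sigma_t
=\theta_t'\,\sigma_t'\,(\sigma_t\,\sigma_t')^{-1}\,\sigma_t.
$$
Now by Definition \ref{D1-2} the market price of risk satisfies $\theta_t=\sigma_t'\,(\sigma_t\,\sigma_t')^{-1}(\mu_t-r_t\mathbf{1})$, so $\theta_t\in\mathcal{R}(\sigma_t')$; writing $\theta_t=\sigma_t'\,x_t$ with $x_t=(\sigma_t\,\sigma_t')^{-1}(\mu_t-r_t\mathbf{1})$, one obtains
$$
\sigma_t'\,(\sigma_t\,\sigma_t')^{-1}\sigma_t\,\theta_t
=\sigma_t'\,(\sigma_t\,\sigma_t')^{-1}\sigma_t\,\sigma_t'\,x_t
=\sigma_t'\,x_t=\theta_t,
$$
which, transposed, gives the desired identity. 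Substituting back, $d(\widehat{Z}_t\,\bar{V}_t^{\,\pi^*})=0$, and since $\widehat{Z}_0\,\bar{V}_0^{\,\pi^*}=1$, it follows that $\widehat{Z}_t\,\bar{V}_t^{\,\pi^*}=1$ for every $t\in[0,T]$. Dividing by $\widehat{Z}_t$, which is $P$-a.s. strictly positive thanks to Assumption \ref{A1-2} and the definition of the stochastic exponential, gives $\bar{V}_t^{\,\pi^*}=1/\widehat{Z}_t$.

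I do not foresee any real obstacle: the only point that might be missed on a first reading is the linear-algebra fact that $\sigma_t'(\sigma_t\sigma_t')^{-1}\sigma_t$ acts as the identity on $\theta_t$ precisely because $\theta_t$ already lies in the row space of $\sigma_t$ by its very definition (this is exactly the minimality observation recorded in Remark \ref{minimal}). Everything else is a direct consequence of the pre-computed product-rule identity \eqref{E1-10} and the initial values $\widehat{Z}_0=\bar{V}_0^{\,\pi^*}=1$.
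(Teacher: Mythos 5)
Your argument is correct, but it is organised differently from the paper's proof. The paper simply solves the discounted GOP dynamics \eqref{E2-4} explicitly, obtaining $\bar{V}_t^{\,\pi^*}=\exp\bigl(\int_0^t\theta_u'\,dW_u+\tfrac{1}{2}\int_0^t\left\|\theta_u\right\|^2du\bigr)$, and then recognises this as the reciprocal of the explicit expression \eqref{E1-9} for $\widehat{Z}_t$; the identity $(\pi^*_t)'\sigma_t=\theta_t'$ that you isolate is used there only implicitly, inside the derivation of \eqref{E2-3} in the proof of Theorem \ref{T2-1}. You instead reuse the product-rule computation \eqref{E1-10} from Proposition \ref{P1-1} and show that for $\pi=\pi^*$ the integrand $\bigl((\pi^*_t)'\sigma_t-\theta_t'\bigr)$ vanishes $P\otimes\ell$-a.e., because $\sigma_t'(\sigma_t\sigma_t')^{-1}\sigma_t$ is the orthogonal projection onto $\mathcal{R}(\sigma_t')$ and $\theta_t$ lies in that range by construction (the observation of Remark \ref{minimal}); hence $\widehat{Z}\,\bar{V}^{\,\pi^*}$ is indistinguishable from its initial value $1$, which yields the claim upon dividing by the strictly positive $\widehat{Z}_t$. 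Both routes rest on the same algebraic fact, but yours buys a structural insight the paper's comparison of exponentials does not make explicit: the GOP is precisely the admissible strategy for which the local martingale of Proposition \ref{P1-1}-(b) degenerates to a constant, i.e.\ the deflated GOP has zero diffusion coefficient; the paper's route, on the other hand, is shorter because the explicit exponential for $\bar{V}^{\,\pi^*}$ is already available from \eqref{E2-4}. One small point to keep explicit in your write-up: the application of \eqref{E1-10} to $\pi^*$ presupposes $\pi^*\in\mathcal{A}$, which is exactly the admissibility check carried out in Theorem \ref{T2-1} under Assumption \ref{A1-2}, so you should cite that theorem rather than treat \eqref{E2-2} as a mere formula.
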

\begin{proof}
Assumption \ref{A1-2} ensures that the process $\widehat{Z}=\bigl(\widehat{Z}_t\bigr)_{0\leq t \leq T}$ is $P$-a.s. strictly positive and well-defined as a martingale deflator. Furthermore, due to Theorem \ref{T2-1}, the growth-optimal strategy $\pi^*\in\mathcal{A}$ exists and is explicitly given by \eqref{E2-2}. 
Now it suffices to observe that, due to equations \eqref{E2-4} and \eqref{E1-9}:
$$
\bar{V}_t^{\,\pi^*}
=\exp\left(\int_0^t\theta_u'dW_u+\frac{1}{2}\int_0^t\left\|\theta_u\right\|^2du\right)
=\frac{1}{\widehat{Z}_t}
$$
\end{proof}

We then immediately obtain the following Corollary.
\begin{Cor}	\label{C2-1}
Suppose that Assumption \ref{A1-2} holds. Then, for any trading strategy $\pi\in\mathcal{A}$, the process $\hat{V}^{\,\pi}=\bigl(\hat{V}^{\,\pi}_t\bigr)_{0\leq t \leq T}$ defined by $\hat{V}^{\,\pi}_t:=V_t^{\,\pi}/V_t^{\,\pi^*}$, for $t\in\left[0,T\right]$, is a non-negative local martingale and, hence, a supermartingale.
\end{Cor}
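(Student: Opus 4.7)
The plan is to reduce the quotient $\hat V^{\,\pi}=V^{\,\pi}/V^{\,\pi^*}$ to the product $\widehat Z\,\bar V^{\,\pi}$ already studied in Proposition \ref{P1-1}, and then read off the desired properties from results established earlier in the paper.

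First I would observe that the savings account cancels in the ratio: since $V_t^{\,\pi}=S_t^0\,\bar V_t^{\,\pi}$ and likewise for $\pi^*$, we have $\hat V_t^{\,\pi}=\bar V_t^{\,\pi}/\bar V_t^{\,\pi^*}$ for all $t\in[0,T]$. Next, I invoke Proposition \ref{P2-1}, which under Assumption \ref{A1-2} identifies $\bar V_t^{\,\pi^*}=1/\widehat Z_t$. Combining the two displays yields the key identity
\[
\hat V_t^{\,\pi}=\widehat Z_t\,\bar V_t^{\,\pi},\qquad t\in[0,T].
\]

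Now Proposition \ref{P1-1}\emph{(b)} tells us precisely that the process $\widehat Z\,\bar V^{\,\pi}$ is a local martingale for every $\pi\in\mathcal{A}$, so $\hat V^{\,\pi}$ inherits this property. Non-negativity is immediate: Definition \ref{D1-1}\emph{(b)} expresses $\bar V^{\,\pi}$ as a stochastic exponential and hence as a strictly positive process, and $\widehat Z$ is strictly positive by construction (see \eqref{E1-9}), so $\hat V^{\,\pi}\ge 0$. Finally, any non-negative local martingale is a supermartingale by Fatou's lemma (as already noted for $\widehat Z$ itself earlier in Section \ref{S1.1bis}), which completes the argument.

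There is essentially no obstacle here: the work has all been done in Propositions \ref{P1-1} and \ref{P2-1}, and the corollary is just the appropriate repackaging. The only point worth being careful about is making the cancellation of $S^0$ in the ratio $V^{\,\pi}/V^{\,\pi^*}$ explicit before invoking Proposition \ref{P2-1}, since the latter is stated for discounted quantities.
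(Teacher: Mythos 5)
Your argument is correct and follows exactly the paper's own route: cancel the numeraire $S^0$ to write $\hat V^{\,\pi}=\bar V^{\,\pi}/\bar V^{\,\pi^*}$, identify $\bar V^{\,\pi^*}=1/\widehat Z$ via Proposition \ref{P2-1}, and conclude by the local martingale property of $\widehat Z\,\bar V^{\,\pi}$ from Proposition \ref{P1-1}\emph{(b)} together with Fatou's lemma. The only difference is that you spell out the positivity and the supermartingale step explicitly, which the paper leaves implicit.
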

\begin{proof}
Passing to discounted quantities, we have $\hat{V}^{\pi}_t=V_t^{\,\pi}/V_t^{\,\pi^*}=\bar{V}_t^{\,\pi}/\bar{V}_t^{\,\pi^*}$. The claim then follows by combining Proposition \ref{P2-1} with part \emph{(b)} of Proposition \ref{P1-1}.
\end{proof}

In order to give a better interpretation to the preceding Corollary, let us give the following Definition, which we adapt from \cite{Be}, \cite{KK} and \cite{P2}.

\begin{Def}	\label{D2-2}
An admissible portfolio process $V^{\,\tilde{\pi}}=\bigl(V^{\,\tilde{\pi}}_t\bigr)_{0\leq t \leq T}$ has the \emph{numeraire property} if all admissible portfolio processes $V^{\,\pi}=\left(V_t^{\,\pi}\right)_{0\leq t \leq T}$ , when denominated in units of $V^{\,\tilde{\pi}}$, are supermartingales, i.e. if the process $V^{\,\pi}/V^{\,\tilde{\pi}}=\left(V^{\,\pi}_t/V_t^{\,\tilde{\pi}}\right)_{0\leq t \leq T}$ is a supermartingale for all $\pi\in\mathcal{A}$.
\end{Def}

The following Proposition shows that if a numeraire portfolio exists then it is also unique.

\begin{Prop}	\label{P2-2}
The numeraire portfolio process $V^{\,\tilde{\pi}}=\bigl(V^{\,\tilde{\pi}}_t\bigr)_{0\leq t \leq T}$ is unique (in the sense of indistinguishability). Furthermore, there exists an unique trading strategy $\tilde{\pi}\in\mathcal{A}$ such that $V^{\,\tilde{\pi}}$ is the numeraire portfolio, up to a null subset of $\Omega\times\left[0,T\right]$.
\end{Prop}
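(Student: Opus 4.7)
The plan is to prove uniqueness in two stages: first at the level of the portfolio process, then at the level of the underlying trading strategy.

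For the first stage, suppose that $V^{\,\tilde{\pi}^1}$ and $V^{\,\tilde{\pi}^2}$ are two numeraire portfolios, corresponding to strategies $\tilde{\pi}^1,\tilde{\pi}^2\in\mathcal{A}$. Both portfolios are $P$-a.s. strictly positive (by \eqref{E1-5}) and both start at $1$. Applying Definition \ref{D2-2} to $\tilde{\pi}^1$ with test strategy $\tilde{\pi}^2$, and vice versa, yields that the ratios $V^{\,\tilde{\pi}^2}/V^{\,\tilde{\pi}^1}$ and $V^{\,\tilde{\pi}^1}/V^{\,\tilde{\pi}^2}$ are both positive supermartingales starting at $1$. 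In particular,
$$
E\bigl[V_t^{\,\tilde{\pi}^2}/V_t^{\,\tilde{\pi}^1}\bigr]\leq 1
\qquad \text{and} \qquad
E\bigl[V_t^{\,\tilde{\pi}^1}/V_t^{\,\tilde{\pi}^2}\bigr]\leq 1
$$
for every $t\in\left[0,T\right]$. By Jensen's inequality applied to the strictly convex function $x\mapsto 1/x$,
$$
1=\frac{1}{E\bigl[V_t^{\,\tilde{\pi}^1}/V_t^{\,\tilde{\pi}^2}\bigr]} \cdot E\bigl[V_t^{\,\tilde{\pi}^1}/V_t^{\,\tilde{\pi}^2}\bigr]\leq E\bigl[V_t^{\,\tilde{\pi}^2}/V_t^{\,\tilde{\pi}^1}\bigr] \cdot E\bigl[V_t^{\,\tilde{\pi}^1}/V_t^{\,\tilde{\pi}^2}\bigr]\leq 1,
$$
so equality must hold throughout. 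Strict convexity of $1/x$ then forces $V_t^{\,\tilde{\pi}^1}/V_t^{\,\tilde{\pi}^2}$ to be $P$-a.s. constant in $\omega$ for each $t$; combined with the initial value $1$ this gives $V_t^{\,\tilde{\pi}^1}=V_t^{\,\tilde{\pi}^2}$ $P$-a.s. Since both processes are continuous, indistinguishability follows.

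For the second stage, once we know $V^{\,\tilde{\pi}^1}$ and $V^{\,\tilde{\pi}^2}$ are indistinguishable, equation \eqref{E1-8} (after passing to discounted quantities) implies
$$
\int_0^t\bigl(\tilde{\pi}^1_u-\tilde{\pi}^2_u\bigr)'\left(\mu_u-r_u\mathbf{1}\right)du
+\int_0^t\bigl(\tilde{\pi}^1_u-\tilde{\pi}^2_u\bigr)'\sigma_u\,dW_u = 0
\qquad \text{$P$-a.s., for all } t\in\left[0,T\right].
$$
Uniqueness of the semimartingale decomposition forces the continuous local martingale part to vanish, whence $\bigl(\tilde{\pi}^1_t-\tilde{\pi}^2_t\bigr)'\sigma_t=0$ $P\otimes\ell$-a.e. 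By the standing Assumption \ref{A1-1} the matrix $\sigma_t$ has $P$-a.s. full rank $N$, so $\sigma_t'$ is injective and we conclude $\tilde{\pi}^1_t=\tilde{\pi}^2_t$ $P\otimes\ell$-a.e., which is precisely equality up to a null subset of $\Omega\times\left[0,T\right]$.

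The only delicate step is the Jensen argument, where one must ensure the expectations are finite and that the ratios are genuinely strictly positive so that Jensen applies strictly; both of these follow from \eqref{E1-5} which gives $P$-a.s. strict positivity of all admissible discounted portfolios, together with the supermartingale bound $E\bigl[V_t^{\,\tilde{\pi}^i}/V_t^{\,\tilde{\pi}^j}\bigr]\leq 1 <\infty$. Everything else is routine once Assumption \ref{A1-1} is invoked to pass from equality of portfolios to equality of strategies.
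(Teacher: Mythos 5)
Your proposal is correct and takes essentially the same route as the paper: the numeraire property of each portfolio relative to the other plus Jensen's inequality for the strictly convex map $x\mapsto 1/x$ (you apply it unconditionally, the paper conditionally) yields indistinguishability of the value processes, and identification of the local martingale part of the wealth dynamics together with the full-rank Assumption \ref{A1-1} yields $\tilde{\pi}^1_t=\tilde{\pi}^2_t$ $P\otimes\ell$-a.e. One small point: the constant value of $V_t^{\,\tilde{\pi}^1}/V_t^{\,\tilde{\pi}^2}$ equals $1$ because your equality chain forces $E\bigl[V_t^{\,\tilde{\pi}^1}/V_t^{\,\tilde{\pi}^2}\bigr]=1$, not because of the initial value, but this is immediate from what you have already written.
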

\begin{proof}
Let us first prove that if $M=\left(M_t\right)_{0\leq t \leq T}$ is a $P$-a.s. strictly positive supermartingale such that $\frac{1}{M}$ is also a supermartingale then $M_t=M_0$ $P$-a.s. for all $t\in\left[0,T\right]$. 
In fact, for any $0\leq s \leq t \leq T$:
$$
1=\frac{M_s}{M_s} \geq \frac{1}{M_s}\E\left[M_t|\F_s\right]
\geq \E\left[\frac{1}{M_t}\Bigl|\F_s\right]\E\left[M_t|\F_s\right]
\geq \frac{1}{\E\left[M_t|\F_s\right]}\E\left[M_t|\F_s\right] = 1
\qquad \text{$P$-a.s.}
$$ 
where the first inequality follows from the supermartingale property of $M$, the second from the supermartingale property of $\frac{1}{M}$ and the third from Jensen's inequality. Hence, both $M$ and $\frac{1}{M}$ are martingales. Furthermore, since we have $\E\left[\frac{1}{M_t}\bigl|\F_s\right]=\frac{1}{\E\left[M_t|\F_s\right]}$ and the function $x\mapsto x^{-1}$ is strictly convex on $\left(0,\infty\right)$, again Jensen's inequality implies that $M_t$ is $\F_s$-measurable, for all $0\leq s \leq t \leq T$. For $s=0$, this implies that $M_t=\E\left[M_t|\F_0\right]=M_0$ $P$-a.s. for all $t\in\left[0,T\right]$.

Suppose now there exist two elements $\tilde{\pi}^1,\tilde{\pi}^2\in\mathcal{A}$ such that both $V^{\,\tilde{\pi}^1}$ and $V^{\,\tilde{\pi}^2}$ have the numeraire property. By Definition \ref{D2-2}, both $V^{\,\tilde{\pi}^1}/V^{\,\tilde{\pi}^2}$ and $V^{\,\tilde{\pi}^2}/V^{\,\tilde{\pi}^1}$ are $P$-a.s. strictly positive supermartingales. Hence, it must be $V^{\,\tilde{\pi}^1}_t=V^{\,\tilde{\pi}^2}_t$ $P$-a.s. for all $t\in\left[0,T\right]$, due to the general result just proved, and thus $V^{\,\tilde{\pi}^1}$ and $V^{\,\tilde{\pi}^2}$ are indistinguishable (see \cite{KS1}, Section 1.1).
In order to show that the two trading strategies $\tilde{\pi}^1$ and $\tilde{\pi}^2$ coincide, let us write as follows:
$$	\ba
&\E\left[\int_0^T\bigl(\bar{V}_t^{\,\tilde{\pi}^1}\tilde{\pi}_t^1-\bar{V}_t^{\,\tilde{\pi}^2}\tilde{\pi}_t^2\bigr)'\sigma_t\,\sigma_t'\bigl(\bar{V}_t^{\,\tilde{\pi}^1}\tilde{\pi}_t^1-\bar{V}_t^{\,\tilde{\pi}^2}\tilde{\pi}_t^2\bigr)dt\right]	\\
&=\E\left[\biggl\langle\int\bar{V}^{\,\tilde{\pi}^1}(\tilde{\pi}^1)'\sigma\,dW
-\int\bar{V}^{\,\tilde{\pi}^2}(\tilde{\pi}^2)'\sigma\,dW\biggr\rangle_T\right]
= \E\left[\bigl\langle \bar{V}^{\,\tilde{\pi}^1}-\bar{V}^{\,\tilde{\pi}^2}\bigr\rangle_T\right] = 0
\ea	$$
where we have used equation \eqref{E1-8} and the fact that $\bar{V}^{\,\tilde{\pi}^1}$ and $\bar{V}^{\,\tilde{\pi}^2}$ are indistinguishable. Since, due to the standing Assumption \ref{A1-1}, the matrix $\sigma_t\sigma'_t$ is $P$-a.s. positive definite for all $t\in\left[0,T\right]$ and $\bar{V}^{\,\tilde{\pi}^1}$ and $\bar{V}^{\,\tilde{\pi}^2}$ are indistinguishable, this implies that it must be $\tilde{\pi}_t:=\tilde{\pi}_t^1=\tilde{\pi}_t^2$ $P\otimes\ell$-a.e., thus showing the uniqueness of the strategy $\tilde{\pi}\in\mathcal{A}$.
\end{proof}

\begin{Rem}
Note that the first part of Proposition \ref{P2-2} does not rely on any modelling assumption and, hence, is valid in full generality for any semimartingale model (compare also \cite{Be}, Section 4). 
\end{Rem}

The following fundamental Corollary makes precise the relation between the GOP, the numeraire portfolio and the viability of the financial market.

\begin{Cor}	\label{GO-num}
The financial market is viable, in the sense of Definition \ref{arb}, if and only if the numeraire portfolio exists. Furthermore, if Assumption \ref{A1-2} holds, then the growth-optimal portfolio $V^{\,\pi^*}$ coincides with the numeraire portfolio $V^{\,\tilde{\pi}}$ and the corresponding trading strategies $\pi^*, \tilde{\pi}\in\mathcal{A}$ coincide, up to a null subset of $\Omega\times\left[0,T\right]$.
\end{Cor}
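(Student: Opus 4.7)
The plan is to tackle the ``furthermore'' clause first and then use it as the bridge for the main equivalence. Under Assumption \ref{A1-2}, Theorem \ref{T2-1} supplies the GOP strategy $\pi^{*}\in\mathcal{A}$, and Proposition \ref{P2-1} identifies $\bar{V}^{\,\pi^{*}}=1/\widehat{Z}$. Corollary \ref{C2-1} then states that $V^{\,\pi}/V^{\,\pi^{*}}$ is a non-negative local martingale, hence a supermartingale, for every $\pi\in\mathcal{A}$. By Definition \ref{D2-2}, this is exactly the numeraire property of $V^{\,\pi^{*}}$. The uniqueness assertion of Proposition \ref{P2-2} then forces $V^{\,\pi^{*}}=V^{\,\tilde{\pi}}$ (up to indistinguishability) and $\pi^{*}=\tilde{\pi}$ $P\otimes\ell$-a.e.

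For the implication ``numeraire portfolio exists $\Rightarrow$ viability'', I would argue directly from the supermartingale characterization. Suppose $V^{\,\tilde{\pi}}$ exists and let $\xi$ be a candidate arbitrage of the first kind. For each $v\in(0,\infty)$, pick $\pi^{v}\in\mathcal{A}$ with $\bar{V}_{T}^{\,v,\pi^{v}}\geq\xi$ $P$-a.s. Applying Definition \ref{D2-2} to $V^{\,v,\pi^{v}}$ (which has initial value $v$) yields
\[
E\!\left[\frac{\xi}{\bar{V}^{\,\tilde{\pi}}_{T}}\right]
\leq E\!\left[\frac{\bar{V}^{\,v,\pi^{v}}_{T}}{\bar{V}^{\,\tilde{\pi}}_{T}}\right]
\leq \frac{v}{1}=v.
\]
Letting $v\downarrow 0$ gives $\xi/\bar{V}^{\,\tilde{\pi}}_{T}=0$ $P$-a.s., and since $\bar{V}^{\,\tilde{\pi}}_{T}$ is $P$-a.s. strictly positive (as a positive supermartingale starting from $1$, combined with the positivity built into Definition \ref{D1-1}), we conclude $\xi=0$ $P$-a.s., contradicting $P(\xi>0)>0$.

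The reverse implication ``viability $\Rightarrow$ numeraire portfolio exists'' goes by a chain of earlier results. Viability is equivalent to $\mathcal{D}\neq\emptyset$ by Theorem \ref{ThmKK}, so pick any $D\in\mathcal{D}$. Lemma \ref{strdefl} produces an $\R^{d}$-valued progressively measurable $\gamma\in L^{2}_{loc}(W)$ satisfying the structural identity \eqref{Prop1-1}. Remark \ref{minimal} then gives $\|\theta_{t}\|\leq\|\gamma_{t}\|$ pointwise, so $\theta\in L^{2}_{loc}(W)$, i.e.\ Assumption \ref{A1-2} holds. Invoking the already established ``furthermore'' part, the GOP exists and is the numeraire portfolio.

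The only subtle point I anticipate is ensuring that the supermartingale bound used in the middle paragraph applies with initial values $v$ and $1$ respectively; this is just the scaling property $V^{\,v,\pi}=vV^{\,1,\pi}$ noted after Definition \ref{D1-1}, together with Definition \ref{D2-2} applied to the admissible strategy $\pi^{v}$ (the initial wealth $v$ only rescales the supermartingale). Once this is noted, the rest is just assembling the prior results in the correct order.
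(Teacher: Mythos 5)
Your argument is correct, and the overall architecture (GOP exists under Assumption \ref{A1-2} by Theorem \ref{T2-1}, has the numeraire property by Corollary \ref{C2-1}, and is unique by Proposition \ref{P2-2}) matches the paper. The one genuine divergence is in the direction ``numeraire portfolio exists $\Rightarrow$ viability'': the paper deduces from the supermartingale property that $E\bigl[\bar{V}^{\,\pi}_T/\bar{V}^{\,\tilde{\pi}}_T\bigr]\leq 1$ for all $\pi\in\mathcal{A}$, hence the set $\bigl\{\bar{V}^{\,\pi}_T/\bar{V}^{\,\tilde{\pi}}_T:\pi\in\mathcal{A}\bigr\}$ is bounded in $L^1$ and in probability, and (after multiplying by the fixed random variable $\bar{V}^{\,\tilde{\pi}}_T$) concludes that (NUPBR) holds, i.e.\ condition \emph{(c)} of Theorem \ref{ThmKK}; you instead rule out arbitrages of the first kind directly, i.e.\ condition \emph{(b)}, by running the argument of Proposition \ref{Pdefl} with $1/\bar{V}^{\,\tilde{\pi}}$ playing the role of the deflator: the bound $E\bigl[\xi/\bar{V}^{\,\tilde{\pi}}_T\bigr]\leq v$ for all $v>0$, plus strict positivity of $\bar{V}^{\,\tilde{\pi}}_T$ (automatic from the stochastic-exponential form \eqref{E1-5}), forces $\xi=0$. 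Both routes are valid; the paper's buys the conclusion through the already-established equivalences of Theorem \ref{ThmKK}, while yours is more self-contained and makes the deflator-like role of the numeraire portfolio explicit, at the cost of repeating the Proposition \ref{Pdefl} mechanism. In the converse direction you also re-derive Corollary \ref{NOarb} (via Theorem \ref{ThmKK}, Lemma \ref{strdefl} and Remark \ref{minimal}) where the paper simply cites it; that is the same content, just unpacked. Your attention to the scaling $V^{\,v,\pi}=vV^{\,1,\pi}$ when applying Definition \ref{D2-2} to $\pi^{v}$ is exactly the right technical point to flag.
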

\begin{proof}
If the financial market is viable, Corollary \ref{NOarb} implies that Assumption \ref{A1-2} is satisfied. Hence, due to Theorem \ref{T2-1} together with Corollary \ref{C2-1} and Definition \ref{D2-2}, the GOP exists and possesses the numeraire property. Conversely, suppose that the numeraire portfolio $V^{\,\tilde{\pi}}$ exists. Then, due to Definition \ref{D2-2}, the process $V^{\,\pi}/V^{\,\tilde{\pi}}=\bigl(V^{\,\pi}_t/V^{\,\tilde{\pi}}_t\bigr)_{0\leq t \leq T}$ is a supermartingale, for every $\pi\in\mathcal{A}$. In turn, this implies that $E\left[\bar{V}^{\,\pi}_T/\bar{V}^{\,\tilde{\pi}}_T\right]\leq E\left[\bar{V}^{\,\pi}_0/\bar{V}^{\,\tilde{\pi}}_0\right]=1$, for all $\pi\in\mathcal{A}$, thus showing that the set $\left\{\bar{V}^{\,\pi}_T/\bar{V}^{\,\tilde{\pi}}_T:\pi\in\mathcal{A}\right\}$ is bounded in $L^1$ and, hence, also in probability. Since the multiplication by the fixed random variable $\bar{V}^{\,\tilde{\pi}}_T$ does not affect the boundedness in probability, this implies that the NUPBR condition holds. Hence, due to Theorem \ref{ThmKK}, the financial market is viable.
The second assertion follows immediately from Proposition \ref{P2-2}.
\end{proof}

We emphasise again that all these results hold true even in the absence of an ELMM. For further comments on the relations between the GOP and the numeraire portfolio in a general semimartingale setting, we refer to Section 3 of \cite{KK} (see also \cite{HS} in the continuous semimartingale case).

\begin{Rem}[\emph{\textbf{On the GOP-denominated market}}]	\label{Rem-GO-num}
Due to Corollary \ref{GO-num}, the GOP coincides with the numeraire portfolio. Moreover, Corollary \ref{C2-1} shows that all portfolio processes $V^{\,\pi}$, for $\pi\in\mathcal{A}$, are local martingales when denominated in units of the GOP $V^{\,\pi^*}$. This means that, if we express all price processes in terms of the GOP, then the original probability measure $P$ becomes an ELMM for the GOP-denominated market. Hence, due to the fundamental theorem of asset pricing (see \cite{DS}), the classical (NFLVR) no-arbitrage-type condition holds for the GOP-denominated market. This observation suggests that the GOP-denominated market may be regarded as the minimal and natural setting for dealing with valuation and portfolio optimisation problems, even when there does not exist an ELMM for the original market $(S^0,S^1,\ldots,S^N)$ and this fact will be exploited in Section \ref{S4}. In a related context, see also \cite{CL}.
\end{Rem}

According to \cite{P0}, \cite{P1}, \cite{P2} and \cite{PH}, let us give the following Definition.

\begin{Def}	\label{D2-3}
For any portfolio process $V^{\,\pi}$, the process $\hat{V}^{\,\pi}=\bigl(\hat{V}^{\,\pi}_t\bigr)_{0\leq t \leq T}$, defined as $\hat{V}^{\,\pi}_t:=V_t^{\,\pi}/V_t^{\,\pi^*}$ for $t\in\left[0,T\right]$, is called \emph{benchmarked} portfolio process.	
A trading strategy $\pi\in\mathcal{A}$ and the associated portfolio process $V^{\,\pi}$ are said to be \emph{fair} if the benchmarked portfolio process $\hat{V}^{\,\pi}$ is a martingale.
We denote by $\mathcal{A}^F$ the set of all fair trading strategies in $\mathcal{A}$.
\end{Def}

According to Definition \ref{D2-3}, the result of Corollary \ref{C2-1} amounts to saying that all benchmarked portfolio processes are positive supermartingales. Note that every benchmarked portfolio process is a local martingale but not necessarily a true martingale. This amounts to saying that there may exist \emph{unfair} portfolios, namely portfolios for which the benchmarked value process is a strict local martingale.
The concept of benchmarking will become relevant in Section \ref{S4.2}, where we shall discuss its role for valuation purposes.

\begin{Rem}[\textbf{\emph{Other optimality properties of the GOP}}]
Besides maximising the growth-rate, the GOP enjoys several other optimality properties, many of which are illustrated in the monograph \cite{PH}. In particular, it has been shown that the GOP maximises the long-term growth rate among all admissible portfolios, see e.g. \cite{P2}. It is also well-known that the GOP is the solution to the problem of maximising an expected logarithmic utility function, see Section \ref{S4.3} and also \cite{KK}. Other interesting properties of the GOP include the impossibility of \emph{relative arbitrages} (or \emph{systematic outperformance}) with respect to it, see \cite{FK} and \cite{P2}, and, under suitable assumptions on the behavior of market participants, \emph{two-fund separation} results and connections with mean-variance efficiency, see e.g. \cite{P0} and \cite{P1}. Other properties of the growth-optimal strategy are also illustrated in the recent paper \cite{MTZ}.
\end{Rem}

\section{Replicating strategies and completeness of the financial market}	\label{S1.2}

Without relying on the existence of an ELMM for the financial market $\left(S^0,S^1,\ldots,S^N\right)$, in this Section we start laying the foundations for the valuation of arbitrary contingent claims. More specifically, in this Section we shall be concerned with the study of replicating (or hedging) strategies, formally defined as follows. 

\begin{Def}	\label{hedgstrat}
Let $H$ be a positive $\F$-measurable contingent claim (i.e. random variable) such that $E\left[\frac{\widehat{Z}_T}{S^0_T}H\right]<\infty$. If there exists a couple $\left(v^H,\pi^H\right)\in\left(0,\infty\right)\times\mathcal{A}$ such that $V^{\,v^H,\pi^H}_T=H$ $P$-a.s., then we say that $\pi^H$ is a replicating strategy for $H$.
\end{Def}

The following Proposition illustrates some basic features of a replicating strategy .

\begin{Prop}	\label{P1-2}
Suppose that Assumption \ref{A1-2} holds. Let $H$ be a positive $\F$-measurable contingent claim such that  $\E\left[\frac{\widehat{Z}_T}{S_T^0}\,H\right]<\infty$ and suppose there exists a trading strategy $\pi^H\in\mathcal{A}$ such that $V_T^{\,v^H,\pi^H}=H$ $P$-a.s. for $v^H=\E\left[\frac{\widehat{Z}_T}{S_T^0}\,H\right]$. Then the following hold:
\begin{itemize}
\item[(a)] the strategy $\pi^H$ is fair, in the sense of Definition \ref{D2-3};
\item[(b)] the strategy $\pi^H$ is unique, up to a null subset of $\Omega\times\left[0,T\right]$.
\end{itemize}  
Moreover, for every $\left(v,\pi\right)\in\left(0,\infty\right)\times\mathcal{A}$ such that $V_T^{\,v,\pi}=H$ $P$-a.s., we have $V^{\,v,\pi}_t\geq V^{\,v^H,\pi^H}_t$ $P$-a.s. for all $t\in\left[0,T\right]$. In particular, there cannot exist an element $\bar{\pi}\in\mathcal{A}$ such that $V_T^{\,\bar{v},\bar{\pi}}=H$ $P$-a.s. for some $\bar{v}<v^H$.
\end{Prop}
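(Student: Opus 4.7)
The plan is to exploit the benchmarking framework of Section 3, converting every statement about admissible portfolios into a statement about non-negative supermartingales (via Corollary \ref{C2-1}) and then sharpening to the martingale property for the fair strategy. The key preliminary identity is that, using Proposition \ref{P2-1}, the benchmarked terminal value equals
$$
\hat{V}^{\,v^H,\pi^H}_T \;=\; \frac{V^{\,v^H,\pi^H}_T}{V^{\,\pi^*}_T} \;=\; \frac{\widehat{Z}_T\,H}{S^0_T},
$$
while $\hat{V}^{\,v^H,\pi^H}_0 = v^H$. For part (a), the process $\hat{V}^{\,v^H,\pi^H}$ is a non-negative supermartingale by Corollary \ref{C2-1}, and the choice $v^H = E[\widehat{Z}_T H/S^0_T]$ forces $E[\hat{V}^{\,v^H,\pi^H}_T] = v^H = \hat{V}^{\,v^H,\pi^H}_0$. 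A non-negative supermartingale whose terminal expectation matches its initial value is a true martingale, so $\pi^H$ is fair in the sense of Definition \ref{D2-3}.

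For the minimality part, let $(v,\pi) \in (0,\infty) \times \mathcal{A}$ replicate $H$. Corollary \ref{C2-1} again gives that $\hat{V}^{\,v,\pi}$ is a non-negative supermartingale with the same terminal value $\widehat{Z}_T H/S^0_T$ as $\hat{V}^{\,v^H,\pi^H}$. Combining the supermartingale inequality for $\hat{V}^{\,v,\pi}$ with the martingale equality for $\hat{V}^{\,v^H,\pi^H}$ established in (a), for every $t \in [0,T]$,
$$
\hat{V}^{\,v,\pi}_t \;\geq\; E\bigl[\hat{V}^{\,v,\pi}_T \bigm| \F_t\bigr] \;=\; E\bigl[\hat{V}^{\,v^H,\pi^H}_T \bigm| \F_t\bigr] \;=\; \hat{V}^{\,v^H,\pi^H}_t \qquad P\text{-a.s.}
$$
Multiplying both sides by the $P$-a.s.\ strictly positive quantity $V^{\,\pi^*}_t$ yields $V^{\,v,\pi}_t \geq V^{\,v^H,\pi^H}_t$. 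Evaluating at $t=0$ gives $v \geq v^H$, which in particular rules out replication of $H$ from an initial endowment $\bar{v}<v^H$.

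For uniqueness (b), suppose $\tilde{\pi}^H \in \mathcal{A}$ is another strategy with $V^{\,v^H,\tilde{\pi}^H}_T = H$ $P$-a.s. Part (a) applied to both $\pi^H$ and $\tilde{\pi}^H$ makes $\hat{V}^{\,v^H,\pi^H}$ and $\hat{V}^{\,v^H,\tilde{\pi}^H}$ into martingales with the same initial and terminal values, and therefore indistinguishable; passing back to discounted wealth, $\bar{V}^{\,v^H,\pi^H}$ and $\bar{V}^{\,v^H,\tilde{\pi}^H}$ are indistinguishable as well. I would then reproduce the quadratic-variation argument used in the proof of Proposition \ref{P2-2}: writing both discounted wealth processes via the SDE \eqref{E1-8} and taking the expected quadratic variation of the difference of their martingale parts gives
$$
E\!\left[\int_0^T \bigl(\bar{V}_t^{\,v^H,\pi^H}\pi^H_t-\bar{V}_t^{\,v^H,\tilde{\pi}^H}\tilde{\pi}^H_t\bigr)' \sigma_t\sigma_t' \bigl(\bar{V}_t^{\,v^H,\pi^H}\pi^H_t-\bar{V}_t^{\,v^H,\tilde{\pi}^H}\tilde{\pi}^H_t\bigr)dt\right]=0.
$$
Assumption \ref{A1-1} ($\sigma_t\sigma_t'$ is $P$-a.s.\ positive definite) together with the strict positivity and indistinguishability of the two wealth processes then forces $\pi^H_t = \tilde{\pi}^H_t$ $P\otimes\ell$-a.e.

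The conceptual work is essentially done by Proposition \ref{P2-1} and Corollary \ref{C2-1}; the only delicate point is the passage in (b) from indistinguishability of the value processes to equality of the strategies, which is exactly the place where the full-rank Assumption \ref{A1-1} is needed, as in the second half of the proof of Proposition \ref{P2-2}.
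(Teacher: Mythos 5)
Your proposal is correct and follows essentially the same route as the paper: fairness via Corollary \ref{C2-1} and Proposition \ref{P2-1} (a non-negative supermartingale with terminal expectation equal to its initial value is a martingale), minimality by combining the supermartingale inequality with the martingale equality, and uniqueness of the strategy by the quadratic-variation argument borrowed from the proof of Proposition \ref{P2-2}. The only cosmetic difference is the order of the steps, which does not affect the argument.
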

\begin{proof}
Corollary \ref{C2-1} implies that the benchmarked portfolio process $\hat{V}^{\,v^H,\pi^H}=\bigl(V^{\,v^H,\pi^H}_t/V^{\,\pi^*}_t\bigr)_{0\leq t \leq T}$ is a supermartingale. Moreover, it is also a martingale, due to the fact that: 
\be	\label{P2-1-1}
\hat{V}^{\,v^H,\pi^H}_0 = v^H
= E\biggl[\frac{\widehat{Z}_T}{S^0_T}H\biggr]
= E\left[\frac{V^{\,v^H,\pi^H}_T}{V^{\,\pi^*}_T}\right]
= E\left[\hat{V}^{\,v^H,\pi^H}_T\right]
\ee
where the third equality follows from Proposition \ref{P2-1}.
Part \emph{(a)} then follows from Definition \ref{D2-3}. To prove part \emph{(b)}, let $\hat{\pi}\in\mathcal{A}$ be a trading strategy such that $V_T^{\,v^H,\hat{\pi}}=H$ $P$-a.s. for $v^H=\E\left[\frac{\widehat{Z}_T}{S_T^0}\,H\right]$. 
Reasoning as in \eqref{P2-1-1}, the benchmarked portfolio process $\hat{V}^{\,v^H,\hat{\pi}}=\bigl(V^{\,v^H,\hat{\pi}}_t/V^{\,\pi^*}_t\bigr)_{0\leq t \leq T}$ is a martingale.
Together with the fact that $\hat{V}^{\,v^H,\hat{\pi}}_T=\frac{\widehat{Z}_T}{S^0_T}H=\hat{V}^{\,v^H,\pi^H}_T$ $P$-a.s., this implies that $V^{\,v^H,\hat{\pi}}_t=V^{\,v^H,\pi^H}_t$ $P$-a.s. for all $t\in\left[0,T\right]$. Part \emph{(b)} then follows by the same arguments as in the second part of the proof of Proposition \ref{P2-2}.
To prove the last assertion let $\left(v,\pi\right)\in\left(0,\infty\right)\times\mathcal{A}$ be such that $V_T^{\,v,\pi}=H$ $P$-a.s. Due to Corollary \ref{C2-1}, the benchmarked portfolio process $\hat{V}^{\,v,\pi}=\bigl(V^{\,v,\pi}_t/V^{\,\pi^*}_t\bigr)_{0\leq t \leq T}$ is a supermartingale. So, for any $t\in\left[0,T\right]$, due to part \emph{(a)}:
$$
\hat{V}^{\,v^H,\pi^H}_t = E\left[\hat{V}^{\,v^H,\pi^H}_T\bigr|\F_t\right]
= E\biggl[\frac{\widehat{Z}_T}{S^0_T}H\Bigr|\F_t\biggr]
= E\left[\hat{V}^{\,v,\pi}_T\bigr|\F_t\right]
\leq \hat{V}^{\,v,\pi}_t
\qquad \text{$P$-a.s.}
$$
and, hence, $V^{\,v^H,\pi^H}_t\leq V^{\,v,\pi}_t$ $P$-a.s. for all $t\in\left[0,T\right]$. For $t=0$, this implies that $v\geq v^H$, thus completing the proof.
\end{proof} 

\begin{Rem}
Observe that Proposition \ref{P1-2} does not exclude the existence of a trading strategy $\check{\pi}\in\mathcal{A}$ such that $V_T^{\,\check{v},\check{\pi}}=H$ $P$-a.s. for some $\check{v}>v^H$. However, one can argue that it may not be optimal to invest in such a strategy in order to replicate $H$, since it requires a larger initial investment and leads to an unfair portfolio process. Indeed, Proposition \ref{P1-2} shows that $v^H=E\left[\frac{\widehat{Z}_T}{S_T^0}\,H\right]$ is the minimal initial capital starting from which one can replicate the contingent claim $H$. To this effect, see also Remark 1.6.4 in \cite{KS2}.
\end{Rem}

A particularly nice and interesting situation arises when the financial market is \emph{complete}, meaning that every \emph{contingent claim} can be perfectly replicated starting from some initial investment by investing in the financial market according to some admissible self-financing trading strategy. 

\begin{Def}	\label{D1-3}
The financial market $\left(S^0,S^1,\ldots,S^N\right)$ is said to be \emph{complete} if for any positive $\F$-measurable contingent claim $H$ such that $\E\left[\frac{\widehat{Z}_T}{S_T^0}\,H\right]<\infty$ there exists a couple $\left(v^H,\pi^H\right)\in\left(0,\infty\right)\times\mathcal{A}$ such that $V_T^{\,v^H,\pi^H}=H$ $P$-a.s.
\end{Def}

In general, the financial market described in Section \ref{S1.1} is incomplete and, hence, not all contingent claims can be perfectly replicated. The following Theorem gives a sufficient condition for the financial market to be complete. The proof is similar to that of Theorem 1.6.6 in \cite{KS2}, except that we avoid the use of any ELMM, since the latter may fail to exist in our general context. This allows us to highlight the fact that the concept of market completeness does not depend on the existence of an ELMM. 

\begin{Thm}	\label{T1-1}
Suppose that Assumption \ref{A1-2} holds. Assume furthermore that $\FF=\FF^W$, where $\FF^W$ is the $P$-augmented Brownian filtration associated to $W$, and that $d=N$. Then the financial market $\left(S^0,S^1,\ldots,S^N\right)$ is complete. More precisely, any positive $\F$-measurable contingent claim $H$ with $E\left[\frac{\widehat{Z}_T}{S^0_T}H\right]<\infty$ can be replicated by a fair portfolio process $V^{\,v^H,\pi^H}$, with $v^H=E\left[\frac{\widehat{Z}_T}{S^0_T}H\right]$ and $\pi^H\in\mathcal{A}^F$.
\end{Thm}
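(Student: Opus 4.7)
The plan is to construct the replicating strategy via a Brownian martingale representation argument, exploiting the benchmarked-portfolio SDE derived in the proof of Proposition~\ref{P1-1}. Set $v^H := E\bigl[\widehat{Z}_T H/S^0_T\bigr]$ and introduce the $P$-martingale $M=(M_t)_{0\leq t \leq T}$ defined by
$$
M_t := E\biggl[\frac{\widehat{Z}_T}{S^0_T}\,H\,\bigg|\,\F_t\biggr].
$$
Since $\FF=\FF^W$ is the $P$-augmented Brownian filtration, $M$ admits a continuous modification and, by the Brownian martingale representation theorem (see e.g.\ \cite{KS1}, Theorem~3.4.15), there exists an $\R^d$-valued progressively measurable process $\varphi=(\varphi_t)_{0\leq t \leq T}$ with $\int_0^T\|\varphi_t\|^2dt<\infty$ $P$-a.s.\ such that
$$
M_t = v^H + \int_0^t\varphi'_u\,dW_u, \qquad t\in[0,T].
$$
Because $H$ is positive and $\widehat{Z}_T>0$, $M_T>0$ $P$-a.s., and the minimum principle for non-negative supermartingales (as invoked in Remark~\ref{positive}) then gives $M_t>0$ for all $t\in[0,T]$ $P$-a.s.

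Recall from equation~\eqref{E1-10} that for any $\pi\in\mathcal{A}$ the benchmarked portfolio process $\hat{V}^{\,v,\pi}=\widehat{Z}\,\bar{V}^{\,v,\pi}$ satisfies
$$
d\hat{V}_t^{\,v,\pi} = \hat{V}_t^{\,v,\pi}\bigl(\pi'_t\sigma_t-\theta'_t\bigr)dW_t.
$$
The idea is to choose $\pi^H$ so that $\hat{V}^{\,v^H,\pi^H}$ coincides with $M$. Since $d=N$, the standing Assumption~\ref{A1-1} implies that $\sigma_t$ is a $P$-a.s.\ invertible $N\times N$ matrix, so we may define
$$
\pi_t^H := (\sigma_t')^{-1}\!\left(\frac{\varphi_t}{M_t}+\theta_t\right), \qquad t\in[0,T].
$$
For a.e.\ $\omega$, the path $t\mapsto M_t(\omega)$ is continuous and strictly positive on $[0,T]$, hence bounded below away from zero, so that $\int_0^T\|\sigma'_t\pi_t^H\|^2dt=\int_0^T\|\varphi_t/M_t+\theta_t\|^2dt<\infty$ $P$-a.s.\ thanks to Assumption~\ref{A1-2} and the integrability of $\varphi$. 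Lemma~\ref{L1-1} then gives $\pi^H\in\mathcal{A}$.

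By construction $\pi_t^{H'}\sigma_t-\theta'_t=\varphi'_t/M_t$, so both $\hat{V}^{\,v^H,\pi^H}$ and $M$ solve the linear SDE $dX_t=X_t(\varphi'_t/M_t)\,dW_t$ with common initial value $v^H$. Uniqueness for this SDE (or, equivalently, a direct application of It\^o's formula to the ratio $\hat{V}^{\,v^H,\pi^H}/M$, whose drift and diffusion coefficients both vanish) forces $\hat{V}^{\,v^H,\pi^H}\equiv M$. Evaluating at $t=T$ yields $\hat{V}_T^{\,v^H,\pi^H}=\widehat{Z}_T H/S^0_T$, and translating back through $\bar{V}^{\,\pi^*}=1/\widehat{Z}$ (Proposition~\ref{P2-1}) gives $V_T^{\,v^H,\pi^H}=H$ $P$-a.s. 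Moreover, since $\hat{V}^{\,v^H,\pi^H}=M$ is a true $P$-martingale, $\pi^H\in\mathcal{A}^F$ in the sense of Definition~\ref{D2-3}. The main delicate point in the argument is securing the strict positivity of $M$ on all of $[0,T]$ so that the candidate $\pi^H$ is both well-defined and pathwise admissible; once this is settled, the rest of the proof amounts to matching diffusion coefficients in the benchmarked dynamics.
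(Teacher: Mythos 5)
Your proposal is correct and follows essentially the same route as the paper's proof: the conditional-expectation martingale $M$, the Brownian martingale representation, the same candidate strategy $\pi^H_t=(\sigma_t')^{-1}\bigl(\theta_t+\varphi_t/M_t\bigr)$, the same admissibility bound via Lemma \ref{L1-1}, and fairness from $\hat{V}^{\,v^H,\pi^H}=M$. The only cosmetic difference is that the paper obtains the wealth dynamics by applying the product rule to $M/\widehat{Z}$ and reading off a stochastic exponential, whereas you match coefficients in the benchmarked SDE \eqref{E1-10} and invoke uniqueness of the linear SDE; you are in fact slightly more explicit than the paper in justifying the strict positivity of $M$ via the minimum principle.
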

\begin{proof}
Let $H$ be a positive $\F=\F_T^W$-measurable random variable such that $\E\left[\frac{\widehat{Z}_T}{S_T^0}\,H\right]<\infty$ and define the martingale $M=\left(M_t\right)_{0\leq t \leq T}$ by $M_t:=\E\left[\frac{\widehat{Z}_T}{S_T^0}\,H\bigl|\F_t\right]$, for $t\in\left[0,T\right]$. According to the martingale representation theorem (see \cite{KS1}, Theorem 3.4.15 and Problem 3.4.16) there exists an $\R^N$-valued progressively measurable process $\varphi=\left(\varphi_t\right)_{0\leq t \leq T}$ such that $\int_0^T\left\|\varphi_t\right\|^2dt<\infty$ $P$-a.s. and:
\be	\label{T1-1-1}
M_t = M_0+\int_0^t\varphi_u'\,dW_u	\qquad \text{for all } t\in\left[0,T\right]
\ee
Define then the positive process $V=\left(V_t\right)_{0\leq t \leq T}$ by $V_t:=\frac{S_t^0}{\widehat{Z}_t}M_t$, for $t\in\left[0,T\right]$. Recalling that $S_0^0=1$, we have $v^H:=V_0=M_0=\E\left[\frac{\widehat{Z}_T}{S_T^0}\,H\right]$. 
The standing Assumption \ref{A1-1}, together with the fact that $d=N$, implies that the matrix $\sigma_t$ is $P$-a.s. invertible for all $t\in\left[0,T\right]$. 
Then, an application of the product rule together with equations \eqref{E1-9} and \eqref{T1-1-1}, gives:
\be	\label{T1-1-3}	\ba
d\left(\frac{V_t}{S^0_t}\right) = d\left(\frac{M_t}{\widehat{Z}_t}\right)
&=M_t\,d\frac{1}{\widehat{Z}_t}+\frac{1}{\widehat{Z}_t}\,dM_t+d\Bigl\langle M,\frac{1}{\widehat{Z}}\Bigr\rangle_t	\\
&= \frac{M_t}{\widehat{Z}_t}\,\theta_t'\,dW_t+\frac{M_t}{\widehat{Z}_t}\,\left\|\theta_t\right\|^2dt
+\frac{1}{\widehat{Z}_t}\,\varphi_t'\,dW_t+\frac{1}{\widehat{Z}_t}\,\varphi_t'\,\theta_t\,dt	\\
&= \frac{V_t}{S_t^0}\left(\theta_t+\frac{\varphi_t}{M_t}\right)'\theta_t\,dt
+\frac{V_t}{S_t^0}\left(\theta_t+\frac{\varphi_t}{M_t}\right)'dW_t	\\
&= \frac{V_t}{S_t^0}\left(\theta_t+\frac{\varphi_t}{M_t}\right)'\sigma^{-1}_t\left(\mu_t-r_t\mathbf{1}\right)dt
+\frac{V_t}{S_t^0}\left(\theta_t+\frac{\varphi_t}{M_t}\right)'\sigma^{-1}_t\,\sigma_t\,dW_t	\\
&= \frac{V_t}{S_t^0}\sum_{i=1}^N\,\pi_t^{H,i}\,\frac{d\bar{S}_t^i}{\bar{S}_t^i}
\ea	\ee
where $\pi_t^H=\bigl(\pi_t^{H,1},\ldots,\pi_t^{H,N}\bigr)':=\left(\sigma_t'\right)^{-1}\bigl(\theta_t+\frac{\varphi_t}{M_t}\bigr)$, for all $t\in\left[0,T\right]$. 
The last line of \eqref{T1-1-3} shows that the process $\bar{V}:=V/S^0=\left(V_t/S^0_t\right)_{0\leq t \leq T}$ can be represented as a stochastic exponential as in part \emph{(b)} of Definition \ref{D1-1}. Hence, it remains to check that the process $\pi^H$ satisfies the integrability conditions of part \emph{(a)} of Definition \ref{D1-1}. Due to Lemma \ref{L1-1}, it suffices to verify that $\int_0^T\left\|\sigma_t'\,\pi_t^H\right\|^2dt<\infty$ $P$-a.s. This can be shown as follows:
$$
\int_0^T\left\|\sigma'_t\,\pi_t^H\right\|^2dt
=\int_0^T\left\|\theta_t+\frac{\varphi_t}{M_t}\right\|^2dt	
\leq 2\int_0^T\left\|\theta_t\right\|^2dt+2\,\left\|\frac{1}{M}\right\|_{\infty}\int_0^T\left\|\varphi_t\right\|^2dt
<\infty	\qquad \text{$P$-a.s.}
$$
due to Assumption \ref{A1-2} and because $\left\|\frac{1}{M}\right\|_{\infty}:=\underset{t\in\left[0,T\right]}{\max}\left|\frac{1}{M_t}\right|<\infty$ $P$-a.s. due to the continuity of $M$.
We have thus shown that $\pi^H$ is an admissible trading strategy, i.e. $\pi^H\in\mathcal{A}$, and the associated portfolio process $V^{\,v^H,\pi^H}=\bigl(V^{\,v^H,\pi^H}_t\bigr)_{0\leq t \leq T}$ satisfies $V_T^{\,v^H,\pi^H}=V_T=H$ $P$-a.s. with $v^H=\E\left[\frac{\widehat{Z}_T}{S_T^0}\,H\right]$. Furthermore, since $\hat{V}^{\,v^H,\pi^H}_t=V^{\,v^H,\pi^H}_t/V^{\,\pi^*}_t=V_t\,\widehat{Z}_t/S_t^0=M_t$, we also have $\pi^H\in\mathcal{A}^F$.
\end{proof}

We close this Section with some important comments on the result of Theorem \ref{T1-1}.

\begin{Rem} \label{Rem-S1.2} $ $
\begin{enumerate}
\item
We want to emphasise that Theorem \ref{T1-1} does not rely on the existence of an ELMM for the financial market $\left(S^0,S^1,\ldots,S^N\right)$. This amounts to saying that the completeness of a financial market does not necessarily imply that some mild forms of arbitrage opportunities are \emph{a priori} excluded. 
Typical ``textbook versions'' of the so-called \emph{second Fundamental Theorem of Asset Pricing} state that the completeness of the financial market is equivalent to the uniqueness of the \emph{Equivalent (Local) Martingale Measure}, loosely speaking. However, Theorem \ref{T1-1} shows that we can have a complete financial market even when no E(L)MM exists at all. The fact that absence of arbitrage opportunities and market completeness should be regarded as distinct concepts has been already pointed out in a very general setting in \cite{JM}.
The completeness of the financial market model will play a crucial role in Section \ref{S4}, where we shall be concerned with valuation and hedging problems in the absence of an ELMM.
\item
Following the reasoning in the proof of Theorem 1.6.6 of \cite{KS2}, but avoiding the use of an ELMM (which in our context may fail to exist), it is possible to prove a converse result to Theorem \ref{T1-1}. More precisely, if we assume that $\FF=\FF^W$ and that every $\F$-measurable positive random variable $H$ with $v^H:=E\left[\frac{\widehat{Z}_T}{S^0_T}H\right]<\infty$ admits a trading strategy $\pi^H\in\mathcal{A}$ such that $V^{\,v^H,\pi^H}_T=H$ $P$-a.s., then we necessarily have $d=N$. Moreover, it can be shown that the completeness of the financial market is \emph{equivalent} to the existence of a unique martingale deflator and this holds true even in more general models based on continuous semimartingales. For details, we refer the interested reader to Chapter 4 of \cite{Fo}.
\end{enumerate}
\end{Rem}

\section{Contingent claim valuation without ELMMs}	\label{S4}

The main goal of this Section is to show how one can proceed to the valuation of contingent claims in financial market models which may not necessarily admit an ELMM. Since the non-existence of a properly defined martingale measure precludes the whole machinery of risk-neutral pricing, this appears as a non-trivial issue. Here we concentrate on the situation of a complete financial market, as considered at the end of the last Section (see Section \ref{Concl} for possible extensions to incomplete markets). A major focus of this Section is on providing a mathematical justification for the so-called \emph{real-world pricing approach}, according to which the valuation of contingent claims is performed under the original (or \emph{real-world}) probability measure $P$ using the GOP as the natural numeraire. 

\begin{Rem}
In this Section we shall be concerned with the problem of \emph{pricing} contingent claims. However, one should be rather careful with the terminology and distinguish between a \emph{value} assigned to a contingent claim and its prevailing \emph{market price}. Indeed, the former represents the outcome of an a priori chosen valuation rule, while the latter is the price determined by supply and demand forces in the financial market. Since the choice of the valuation criterion is a subjective one, the two concepts of \emph{value} and \emph{market price} do not necessarily coincide. This is especially true when arbitrage opportunities and/or bubble phenomena are not excluded from the financial market. In this Section, we use the word ``price'' only to be consistent with the standard terminology in the literature.
\end{Rem}

\subsection{Real-world pricing and the \emph{benchmark approach}}	\label{S4.2}

We start by introducing the concept of \emph{real-world price}, which is at the core of the so-called \emph{benchmark approach} to the valuation of contingent claims.

\begin{Def}	\label{D4-2}
Let $H$ be a positive $\F$-measurable contingent claim such that $\E\left[\frac{\widehat{Z}_T}{S_T^0}\,H\right]<\infty$. 
If there exists a fair portfolio process $V^{\,v^H,\pi^H}=\bigl(V^{\,v^H,\pi^H}_t\bigr)_{0\leq t \leq T}$ such that $V_T^{\,v^H,\pi^H}=H$ $P$-a.s., for some $\left(v^H,\pi^H\right)\in\left(0,\infty\right)\times\mathcal{A}^F$, then the \emph{real-world price} of $H$ at time $t$, denoted as $\Pi_t^H$, is defined as follows:
\be	\label{E4-1}
\Pi_t^H := V_t^{\,\pi^*}\,\E\left[\frac{H}{V_T^{\,\pi^*}}\Bigl|\F_t\right]
\ee
for every $t\in\left[0,T\right]$ and where $V^{\,\pi^*}=\left(V_t^{\,\pi^*}\right)_{0\leq t \leq T}$ denotes the GOP.
\end{Def}

The terminology \emph{real-world price} is used to indicate that, unlike in the traditional setting, all contingent claims are valued under the original real-world probability measure $P$ and not under an equivalent risk-neutral measure. This allows us to extend the valuation of contingent claims to financial markets for which no ELMM may exist. The concept of \emph{real-world price} gives rise to the so-called \emph{benchmark approach} to the valuation of contingent claims in view of the fact that the GOP plays the role of the natural numeraire portfolio (compare Remark \ref{Rem-GO-num}). For this reason we shall refer to it as the \emph{benchmark} portfolio. We refer the reader to \cite{P1}, \cite{P2} and \cite{PH} for a thorough presentation of the benchmark approach.

Clearly, if there exists a fair portfolio process $V^{\,v^H,\pi^H}$ such that $V^{\,v^H,\pi^H}_T=H$ $P$-a.s. for $\left(v^H,\pi^H\right)\in\left(0,\infty\right)\times\mathcal{A}^F$, then the real-world price coincides with the value of the fair replicating portfolio. In fact, for all $t\in\left[0,T\right]$:
$$
\Pi^H_t = V^{\,\pi^*}_tE\left[\frac{H}{V^{\,\pi^*}_T}\Bigr|\F_t\right]
= V^{\,\pi^*}_t E\biggl[\frac{V^{\,v^H,\pi^H}_T}{V^{\,\pi^*}_T}\Bigr|\F_t\biggr]
= V^{\,v^H,\pi^H}_t
\qquad \text{$P$-a.s.}
$$
where the last equality is due to the fairness of $V^{\,v^H,\pi^H}$, see Definition \ref{D2-3}. Moreover, the second part of Proposition \ref{P1-2} gives an economic rationale for the use of the real-world pricing formula \eqref{E4-1}, since it shows that the latter gives the value of the least expensive replicating portfolio. This property has been called the \emph{law of the minimal price} (see \cite{P2}, Section 4). The following simple Proposition immediately follows from Theorem \ref{T1-1}.

\begin{Prop}	\label{P4-2}
Suppose that Assumption \ref{A1-2} holds.
Let $H$ be a positive $\F$-measurable contingent claim such that $\E\left[\frac{\widehat{Z}_T}{S^0_T}\,H\right]<\infty$. 
Then, under the assumptions of Theorem \ref{T1-1}, the following hold:
\begin{itemize}
\item[(a)] there exists a fair portfolio process $V^{\,v^H,\pi^H}=\bigl(V_t^{\,v^H,\pi^H}\bigr)_{0\leq t \leq T}$ such that $V^{\,v^H,\pi^H}_T=H$ $P$-a.s.;
\item[(b)] the real-world price (at time $t=0$) is given by $\Pi^H_0=E\left[\frac{H}{V^{\,\pi^*}_T}\right]=E\left[\frac{\widehat{Z}_T}{S^0_T}H\right]=v^H$.
\end{itemize}
\end{Prop}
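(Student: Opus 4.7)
The proposition is essentially a direct packaging of results already established, so my plan is to assemble the pieces rather than to prove anything genuinely new. Part (a) is nothing but a restatement of the conclusion of Theorem \ref{T1-1}: under Assumption \ref{A1-2}, the hypothesis $\mathbb{F}=\mathbb{F}^W$ and $d=N$, every positive $\mathcal{F}$-measurable $H$ with $E[\widehat{Z}_T H/S^0_T]<\infty$ admits an admissible replicating strategy $\pi^H$ with initial capital $v^H=E[\widehat{Z}_T H/S^0_T]$, and moreover $\pi^H\in\mathcal{A}^F$, i.e.\ the replicating portfolio is fair in the sense of Definition \ref{D2-3}. So I would simply invoke Theorem \ref{T1-1} and close part (a) in one sentence.

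For part (b), the plan is to rewrite the expectation in the real-world pricing formula using the identification of the inverse discounted GOP with $\widehat{Z}$ provided by Proposition \ref{P2-1}. More precisely, Proposition \ref{P2-1} gives $\bar{V}_t^{\,\pi^*}=1/\widehat{Z}_t$ for all $t\in[0,T]$, which is the same as $1/V_T^{\,\pi^*}=\widehat{Z}_T/S_T^0$. Plugging into Definition \ref{D4-2} at $t=0$, and using $V_0^{\,\pi^*}=1$, I obtain
\[
\Pi^H_0 = V_0^{\,\pi^*}\,E\!\left[\frac{H}{V_T^{\,\pi^*}}\right] = E\!\left[\frac{\widehat{Z}_T}{S^0_T}H\right] = v^H,
\]
the last equality being the definition of $v^H$ from part (a) (or from the statement of Theorem \ref{T1-1}).

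There is no real obstacle here; the only thing to verify is that the real-world price in Definition \ref{D4-2} is well-defined, which requires the existence of the fair replicating portfolio, and that is exactly what part (a) supplies. One could additionally remark that, for $t\in[0,T]$, the same calculation combined with the martingale property of the benchmarked replicating portfolio $\hat V^{\,v^H,\pi^H}$ yields $\Pi_t^H = V_t^{\,v^H,\pi^H}$, recovering the observation stated after Definition \ref{D4-2} that the real-world price coincides with the value of the fair replicating portfolio; but for the proposition as stated only the $t=0$ identity is needed.
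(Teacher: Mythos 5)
Your proposal is correct and matches the paper's (implicit) argument: the paper gives no separate proof, stating that the Proposition follows immediately from Theorem \ref{T1-1}, with part (b) obtained exactly as you do via Proposition \ref{P2-1} (i.e. $1/V_T^{\,\pi^*}=\widehat{Z}_T/S_T^0$) and $V_0^{\,\pi^*}=S_0^0=1$. Your remark that part (a) is what makes the real-world price in Definition \ref{D4-2} well-defined is also the right observation.
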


\clearpage

\begin{Rem} $ $
\begin{enumerate}
\item
Notice that, due to Proposition \ref{P2-1}, the real-world pricing formula \eqref{E4-1} can be rewritten as follows, for any $t\in\left[0,T\right]$:
\be	\label{E4-2}
\Pi_t^H=\frac{S_t^0}{\widehat{Z}_t}\,\E\biggl[\frac{\widehat{Z}_T}{S_T^0}\,H\Bigr|\F_t\biggr]
\ee
Suppose now that $\E\bigl[\widehat{Z}_T\bigr]=1$, so that $\widehat{Z}$ represents the density process of the ELMM $\widehat{Q}$ (see Remark \ref{Rem-S1.1bis}). Due to the Bayes formula, equation \eqref{E4-2} can then be rewritten as follows:
$$
\Pi_t^H=S_t^0\,\E^{\widehat{Q}}\left[\frac{H}{S_T^0}\Bigl|\F_t\right]
$$
and we recover the usual risk-neutral pricing formula (see also \cite{P2}, Section 5, and \cite{PH}, Section 10.4). In this sense, the real-world pricing approach can be regarded as a consistent extension of the usual risk-neutral valuation approach to a financial market for which an ELMM may fail to exist.
\item
Let us suppose for a moment that $H$ and the final value of the GOP $V_T^{\,\pi^*}$ are conditionally independent given the $\sigma$-field $\F_t$, for all $t\in\left[0,T\right]$. The real-world pricing formula \eqref{E4-1} can then be rewritten as follows:
\be	\label{E4-3}
\Pi_t^H=V_t^{\,\pi^*}\,\E\left[\frac{1}{V_T^{\,\pi^*}}\Bigl|\F_t\right]\E\left[H|\F_t\right] 
=: P\left(t,T\right)\E\left[H|\F_t\right]
\ee
where $P\left(t,T\right)$ denotes the \emph{fair value} at time $t$ of a \emph{zero coupon bond} with maturity $T$ (i.e. a contingent claim which pays the deterministic amount $1$ at time $T$). 
This shows that, under the (rather strong) assumption of conditional independence, one can recover the well-known \emph{actuarial pricing formula} (see also \cite{P1}, Corollary 3.4, and \cite{P2}, Section 5).
\item
We want to point out that part \emph{(b)} of Proposition \ref{P4-2} can be easily generalised to any time $t\in\left[0,T\right]$; compare for instance Proposition 10 in \cite{RG}.
\end{enumerate}
\end{Rem}

In view of the above Remarks, it is interesting to observe how several different valuation approaches which have been widely used in finance and insurance, such as risk-neutral pricing and actuarial pricing, are both generalised and unified under the concept of real-world pricing. We refer to Section 10.4 of \cite{PH} for related comments on the unifying aspects of the benchmark approach.

\subsection{The \emph{upper hedging price} approach}	\label{S4.1}

The \emph{upper hedging price} (or \emph{super-hedging price}) is a classical approach to the valuation of contingent claims (see e.g. \cite{KS2}, Section 5.5.3). The intuitive idea is to find the smallest initial capital which allows one to obtain a final wealth which is greater or equal than the payoff at maturity of a given contingent claim. 

\begin{Def}	\label{D4-1} 
Let $H$ be a positive $\F$-measurable contingent claim. The \emph{upper hedging price} $\mathcal{U}\left(H\right)$ of $H$ is defined as follows:
$$
\mathcal{U}\left(H\right):=\inf\bigl\{v\in\left[0,\infty\right):\exists\;\pi\in\mathcal{A}\text{ such that } V_T^{\,v,\pi}\geq H \text{ $P$-a.s.}\bigr\}
$$
with the usual convention $\inf\emptyset=\infty$.
\end{Def}

The next Proposition shows that, in a complete diffusion-based financial market, the upper hedging price takes a particularly simple and natural form. This result is an immediate consequence of the supermartingale property of benchmarked portfolio processes together with the completeness of the financial market but, for the reader's convenience, we give a detailed proof.

\begin{Prop}	\label{T4-1}
Let $H$ be a positive $\F$-measurable contingent claim such that $\E\left[\frac{\widehat{Z}_T}{S_T^0}\,H\right]<\infty$. Then, under the assumptions of Theorem \ref{T1-1}, the upper hedging price of $H$ is explicitly given by:
\be	\label{T4-1-1}
\mathcal{U}\left(H\right)=\E\biggl[\frac{\widehat{Z}_T}{S_T^0}\,H\biggr]
\ee
\end{Prop}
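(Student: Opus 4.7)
The plan is to prove two inequalities.

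For the inequality $\mathcal{U}(H)\leq E\bigl[\frac{\widehat{Z}_T}{S^0_T}H\bigr]$, I would invoke completeness directly. By Theorem \ref{T1-1}, since Assumption \ref{A1-2} holds and $\FF=\FF^W$ with $d=N$, there exists a fair replicating strategy $\pi^H\in\mathcal{A}^F$ with initial capital $v^H=E\bigl[\frac{\widehat{Z}_T}{S^0_T}H\bigr]$ such that $V^{\,v^H,\pi^H}_T=H$ $P$-a.s. In particular $V^{\,v^H,\pi^H}_T\geq H$ $P$-a.s., so $v^H$ is an admissible candidate in the infimum defining $\mathcal{U}(H)$, which gives the upper bound.

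For the reverse inequality $\mathcal{U}(H)\geq E\bigl[\frac{\widehat{Z}_T}{S^0_T}H\bigr]$, I would take any $v\in[0,\infty)$ and any $\pi\in\mathcal{A}$ with $V^{\,v,\pi}_T\geq H$ $P$-a.s. and show $v\geq E\bigl[\frac{\widehat{Z}_T}{S^0_T}H\bigr]$. By Corollary \ref{C2-1}, the benchmarked portfolio process $\hat{V}^{\,v,\pi}=V^{\,v,\pi}/V^{\,\pi^*}$ is a non-negative supermartingale, and by Proposition \ref{P2-1} we have $V^{\,\pi^*}_t=S^0_t/\widehat{Z}_t$. Hence
\begin{equation*}
v=\hat{V}^{\,v,\pi}_0\geq E\bigl[\hat{V}^{\,v,\pi}_T\bigr]
=E\left[\frac{V^{\,v,\pi}_T}{V^{\,\pi^*}_T}\right]
\geq E\left[\frac{H}{V^{\,\pi^*}_T}\right]
=E\left[\frac{\widehat{Z}_T}{S^0_T}H\right],
\end{equation*}
and taking the infimum over all such $v$ yields the lower bound.

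Combining the two inequalities gives \eqref{T4-1-1}. I do not anticipate a genuine obstacle here: the result is essentially a packaging of Corollary \ref{C2-1} (supermartingale property of benchmarked wealth) together with the existence of an exact replicating portfolio guaranteed by Theorem \ref{T1-1}. The only minor subtlety is to use Proposition \ref{P2-1} to convert $1/V^{\,\pi^*}_T$ into $\widehat{Z}_T/S^0_T$ so that the lower bound matches the super-replication initial cost furnished by completeness.
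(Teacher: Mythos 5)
Your proposal is correct and follows essentially the same argument as the paper: the lower bound via the supermartingale property of benchmarked wealth (Corollary \ref{C2-1} combined with Proposition \ref{P2-1}), and the upper bound via the exact fair replication furnished by Theorem \ref{T1-1} with initial capital $v^H=\E\bigl[\frac{\widehat{Z}_T}{S_T^0}H\bigr]$. The only cosmetic difference is that the paper explicitly notes the trivial case where the super-replication set is empty, which your infimum argument handles implicitly anyway.
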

\begin{proof}
In order to prove \eqref{T4-1-1}, we show both directions of inequality.
\begin{itemize}
\item[($\,\geq$):]
If $\left\{v\in\left[0,\infty\right):\exists\;\pi\in\mathcal{A}\text{ such that } V_T^{\,v,\pi}\geq H \text{ $P$-a.s.}\right\}=\emptyset$ then we have $\E\left[\frac{\widehat{Z}_T}{S_T^0}\,H\right]<\mathcal{U}\left(H\right)=\infty$. So, let us assume there exists a couple $\left(v,\pi\right)\in\left[0,\infty\right)\times\mathcal{A}$ such that $V_T^{\,v,\pi}\geq H$ $P$-a.s. Under Assumption \ref{A1-2}, due to Corollary \ref{C2-1}, the benchmarked portfolio process $\hat{V}^{\,v,\pi}=\bigl(V^{\,v,\pi}_t/V^{\,\pi^*}_t\bigr)_{0\leq t \leq T}$ is a supermartingale and so, recalling also Proposition \ref{P2-1}:
$$
v = \hat{V}_0^{\,v,\pi} \geq \E\bigl[\hat{V}_T^{\,v,\pi}\bigr]
=\E\biggl[\frac{\widehat{Z}_T}{S_T^0}\,V_T^{\,v,\pi}\biggr]
\geq \E\biggl[\frac{\widehat{Z}_T}{S_T^0}\,H\biggr]
$$
This implies that $\mathcal{U}\left(H\right)\geq\E\left[\frac{\widehat{Z}_T}{S_T^0}\,H\right]$.
\item[($\,\leq$):]
Under the present assumptions, Theorem \ref{T1-1} yields the existence of a couple $\left(v^H,\pi^H\right)\in\left(0,\infty\right)\times\mathcal{A}^F$ such that $V_T^{\,v^H,\pi^H}=H$ $P$-a.s. and where $v^H=\E\left[\frac{\widehat{Z}_T}{S_T^0}\,H\right]$. 
Hence:
$$
\E\biggl[\frac{\widehat{Z}_T}{S_T^0}\,H\biggr]=v^H\in
\bigl\{v\in\left[0,\infty\right):\exists\;\pi\in\mathcal{A}\text{ such that } V_T^{\,v,\pi}\geq H \text{ $P$-a.s.}\bigr\}
$$ 
This implies that $\mathcal{U}\left(H\right)\leq\E\left[\frac{\widehat{Z}_T}{S_T^0}\,H\right]$.
\end{itemize}
\end{proof}

An analogous result can be found in Proposition 5.3.2 of \cite{KS2} (compare also \cite{FK}, Section 10). We want to point out that Definition \ref{D4-1} can be easily generalised to an arbitrary time point $t\in\left[0,T\right]$ in order to define the upper hedging price at $t\in\left[0,T\right]$. The result of Proposition \ref{T4-1} carries over to this slightly generalised setting with essentially the same proof, see e.g. Theorem 3 in \cite{RG}.

\begin{Rem} $ $
\begin{enumerate}
\item
Notice that, due to Proposition \ref{P2-1}, equation \eqref{T4-1-1} can be rewritten as follows:
$$
\mathcal{U}\left(H\right)=\E\biggl[\frac{\widehat{Z}_T}{S_T^0}H\biggr]
=\E\biggl[\frac{H}{V_T^{\,\pi^*}}\biggr]
$$
This shows that the upper hedging price can be obtained by computing the expectation of the benchmarked value (in the sense of Definition \ref{D2-3}) of the contingent claim $H$ under the real-world probability measure $P$ and thus coincides with the real-world price (evaluated at $t=0$), see part \emph{(b)} of Proposition \ref{P4-2}. 
\item
Suppose that $\E\bigl[\widehat{Z}_T\bigr]=1$. As explained in Remark \ref{Rem-S1.1bis}, the process $\widehat{Z}$ represents then the density process of the ELMM $\widehat{Q}$. In this case, the upper hedging price $\mathcal{U}\left(H\right)$ yields the usual risk-neutral valuation formula, i.e. we have $\mathcal{U}\left(H\right)=\E^{\widehat{Q}}\left[H/S_T^0\right]$. 
\end{enumerate}
\end{Rem}

\subsection{Utility indifference valuation}	\label{S4.3}

The real-world valuation approach has been justified so far on the basis of replication arguments, as can be seen from Propositions \ref{P4-2} and \ref{T4-1}. We now present a different approach which uses the idea of \emph{utility indifference valuation}. To this effect, let us first consider the problem of maximising an expected utility function of the discounted final wealth. Recall that, due to Theorem \ref{ThmKK}, we can meaningfully consider portfolio optimisation problems even in the absence of an ELMM for $\left(S^0,S^1,\ldots,S^N\right)$.

\begin{Def}	\label{D4-3}
We call \emph{utility function} $U$ a function $U:\left[0,\infty\right)\rightarrow\left[0,\infty\right)$ such that:
\begin{enumerate}
\item $U$ is strictly increasing and strictly concave, continuously differentiable;
\item $\underset{x\rightarrow\infty}{\lim}U'\left(x\right)=0$ and $\underset{x\rightarrow 0}{\lim}\,U'\left(x\right)=\infty$.
\end{enumerate}
\end{Def}

\begin{Prob}[\textbf{expected utility maximisation}]
Let $U$ be as in Definition \ref{D4-3} and let $v\in\left(0,\infty\right)$. The expected utility maximisation problem consists in the following:
\be	\label{ProbEUM}
\text{maximise $\E\left[U\left(\bar{V}_T^{\,v,\pi}\right)\right]$ over all $\pi\in\mathcal{A}$}
\ee
\end{Prob}

The following Lemma shows that, in the case of a complete financial market, there is no loss of generality in restricting our attention to fair strategies only. Recall that, due to Definition \ref{D2-3}, $\mathcal{A}^F$ denotes the set of all fair trading strategies in $\mathcal{A}$.

\begin{Lem}	\label{fair}
Under the assumptions of Theorem \ref{T1-1}, for any utility function $U$ and for any $v\in\left(0,\infty\right)$, the following holds:
\be	\label{faireq}
\underset{\pi\in\mathcal{A}}{\sup}\,E\left[U\left(\bar{V}^{\,v,\pi}_T\right)\right]
= \underset{\pi\in\mathcal{A}^F}{\sup}E\left[U\left(\bar{V}^{\,v,\pi}_T\right)\right]
\ee
\end{Lem}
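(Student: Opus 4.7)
The inequality $\sup_{\pi\in\mathcal{A}^F}E[U(\bar V^{v,\pi}_T)]\le \sup_{\pi\in\mathcal{A}}E[U(\bar V^{v,\pi}_T)]$ is trivial because $\mathcal{A}^F\subseteq\mathcal{A}$. The plan for the reverse inequality is to show that every admissible strategy can be dominated at the terminal date by a fair strategy starting from the same initial capital $v$; monotonicity of $U$ then finishes the argument.

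First, I fix an arbitrary $\pi\in\mathcal{A}$ and set $H:=V_T^{\,v,\pi}$. This is a positive $\F$-measurable random variable (portfolio processes are strictly positive by construction). By Corollary \ref{C2-1} the benchmarked process $\hat V^{\,v,\pi}$ is a nonnegative supermartingale, so invoking Proposition \ref{P2-1} to identify $\hat V^{\,v,\pi}_T=\widehat Z_T V_T^{\,v,\pi}/S_T^0$ gives
\begin{equation*}
v^H:=E\!\left[\frac{\widehat Z_T}{S_T^0}\,H\right]=E\bigl[\hat V^{\,v,\pi}_T\bigr]\;\le\;\hat V^{\,v,\pi}_0=v<\infty.
\end{equation*}
In particular, $H$ is a contingent claim of the integrability class covered by Theorem \ref{T1-1}.

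Next, I would build the auxiliary claim $H^{\circ}:=H+(v-v^H)\,V_T^{\,\pi^*}$, which is positive and $\F$-measurable. Using Proposition \ref{P2-1} once more, $\widehat Z_T V_T^{\,\pi^*}/S_T^0\equiv 1$, so
\begin{equation*}
E\!\left[\frac{\widehat Z_T}{S_T^0}\,H^{\circ}\right]=v^H+(v-v^H)=v<\infty.
\end{equation*}
Theorem \ref{T1-1} then provides a fair trading strategy $\tilde\pi\in\mathcal{A}^F$ with $V_T^{\,v,\tilde\pi}=H^{\circ}$ $P$-a.s., the replicating initial capital being exactly $v$.

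Finally, since $v-v^H\ge 0$ and $V_T^{\,\pi^*}>0$ $P$-a.s., we have $H^{\circ}\ge H$, hence $\bar V^{\,v,\tilde\pi}_T\ge\bar V^{\,v,\pi}_T$ $P$-a.s. Monotonicity of $U$ yields $E[U(\bar V^{\,v,\tilde\pi}_T)]\ge E[U(\bar V^{\,v,\pi}_T)]$, and taking the supremum over $\pi\in\mathcal{A}$ proves \eqref{faireq}. I do not anticipate a serious obstacle: the one point that needs verification is that the replicability of $H^{\circ}$ can be applied even when $v^H=0$ (so $H\equiv 0$ $P$-a.s.), but the same construction still works, producing the trivial fair dominating strategy that invests the entire capital $v$ in the GOP.
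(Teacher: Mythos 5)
Your proof is correct and follows essentially the same route as the paper: compute the benchmarked expectation $v^H\leq v$ via the supermartingale property, pad the claim by investing the slack $v-v^H$ in the GOP (the paper writes this as adding $\bar v/\widehat Z_T$, which equals $(v-v^H)\bar V_T^{\,\pi^*}$ by Proposition \ref{P2-1}), replicate with a fair strategy via Theorem \ref{T1-1}, and conclude by monotonicity of $U$. Your worry about $v^H=0$ is vacuous, since $V_T^{\,v,\pi}>0$ and $\widehat Z_T>0$ $P$-a.s.\ force $v^H>0$, but this does not affect the argument.
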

\begin{proof}
It is clear that ``$\geq$'' holds in \eqref{faireq}, since $\mathcal{A}^F\subseteq\mathcal{A}$. To show the reverse inequality, let us consider an arbitrary strategy $\pi\in\mathcal{A}$. The benchmarked portfolio process $\hat{V}^{\,v,\pi}=\bigl(V^{\,v,\pi}_t/V^{\,\pi^*}_t\bigr)_{0\leq t \leq T}$ is a supermartingale, due to Corollary \ref{C2-1}, and hence:
$$
v':=E\biggl[\frac{\widehat{Z}_T}{S^0_T}V^{\,v,\pi}_T\biggr]
= E\left[\frac{V^{\,v,\pi}_T}{V^{\,\pi^*}_T}\right] 
\leq v
$$
with equality holding if and only if $\pi\in\mathcal{A}^F$. Let $\bar{v}:=v-v'\geq 0$. It is clear that the positive $\F$-measurable random variable $\bar{H}:=\bar{V}^{\,v,\pi}_T+\bar{v}/\widehat{Z}_T$ satisfies $E\bigl[\widehat{Z}_T\bar{H}\bigr]=v$ and so, due to Theorem \ref{T1-1}, there exists an element $\pi^H\in\mathcal{A}^F$ such that $\bar{V}^{\,v,\pi^H}_T=\bar{H}\geq\bar{V}^{\,v,\pi}_T$ $P$-a.s., with equality holding if and only if the strategy $\pi$ is fair. We then have, due to the monotonicity of $U$:
$$
E\bigl[U\left(\bar{V}^{\,v,\pi}_T\right)\bigr]
\leq E\bigl[U\left(\bar{H}\right)\bigr]
= E\bigl[U\bigl(\bar{V}^{\,v,\pi^H}_T\bigr)\bigr]
\leq \underset{\pi\in\mathcal{A}^F}{\sup}E\bigl[U\left(\bar{V}^{\,v,\pi}_T\right)\bigr]
$$
Since $\pi\in\mathcal{A}$ was arbitrary, this shows the ``$\leq$'' inequality in \eqref{faireq}.
\end{proof}

In particular, Lemma \ref{fair} shows that, in the context of portfolio optimisation problems, restricting the class of admissible trading strategies to \emph{fair} admissible strategies is not only ``reasonable'', as argued in Chapter 11 of \cite{PH}, but exactly yields the same optimal value of the problem in its original formulation. 
The following Theorem gives the solution to Problem \eqref{ProbEUM}, in the case of a complete financial market. Related results can be found in Lemma 5 of \cite{RG} and Theorem 3.7.6 of \cite{KS2}.

\begin{Thm}	\label{T4-2}
Let the assumptions of Theorem \ref{T1-1} hold and let $U$ be a utility function. For $v\in\left(0,\infty\right)$, assume that the function $\mathcal{W}\left(y\right):=\E\left[\widehat{Z}_T\,I\bigl(y/\bar{V}_T^{\,v,\pi^*}\bigr)\right]$ is finite for every $y\in\left(0,\infty\right)$, where $I$ is the inverse function of $U'$.
Then the function $\mathcal{W}$ is invertible and the optimal discounted final wealth $\bar{V}^{\,v,\pi^U}_T$ for Problem \eqref{ProbEUM} is explicitly given as follows:
\be	\label{T4-2-1}
\bar{V}^{\,v,\pi^U}_T = I\left(\frac{\mathcal{Y}\left(v\right)}{\bar{V}_T^{\,v,\pi^*}}\right)
\ee
where $\mathcal{Y}$ denotes the inverse function of $\mathcal{W}$. The optimal strategy $\pi^U\in\mathcal{A}^F$ is given by the replicating strategy for the right hand side of \eqref{T4-2-1}.
\end{Thm}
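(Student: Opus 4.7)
The plan is to solve Problem \eqref{ProbEUM} by reducing it, via market completeness, to a static optimisation problem over terminal random variables and then applying a Lagrange/concave-duality argument pointwise. First, by Lemma \ref{fair} it is enough to maximise $E[U(\bar V_T^{v,\pi})]$ over $\pi\in\mathcal A^F$. For any $\pi\in\mathcal A^F$ the benchmarked process $\widehat Z_t\bar V_t^{v,\pi}=\hat V_t^{v,\pi}$ is a martingale starting at $v$, so $E[\widehat Z_T\bar V_T^{v,\pi}]=v$; conversely, given any positive $\mathcal F_T$-measurable $X$ with $E[\widehat Z_T X]=v<\infty$, Theorem \ref{T1-1} applied to $H:=S_T^0 X$ produces a fair replicating strategy $\pi^H\in\mathcal A^F$ with $\bar V_T^{v,\pi^H}=X$. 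Recalling from Proposition \ref{P2-1} that $\bar V_T^{v,\pi^*}=v/\widehat Z_T$, the problem reduces to
$$
\text{maximise}\;E[U(X)]\quad\text{over positive $\mathcal F_T$-measurable $X$ with}\;E[\widehat Z_T X]=v.
$$

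Next I would show that $\mathcal W:(0,\infty)\to(0,\infty)$ is a continuous, strictly decreasing bijection, so the inverse $\mathcal Y$ is well-defined. Strict monotonicity is inherited pointwise from the strict monotonicity of $I=(U')^{-1}$, which follows from the strict concavity of $U$. The Inada conditions $I(0+)=\infty$, $I(\infty)=0$ together with $\widehat Z_T>0$ $P$-a.s.\ yield $\mathcal W(0+)=\infty$ by monotone convergence and $\mathcal W(\infty)=0$ by dominated convergence (using any $\mathcal W(y_0)<\infty$ as an integrable envelope, since $I$ is decreasing); the same envelope gives continuity of $\mathcal W$ on $(0,\infty)$. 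Setting $y^*:=\mathcal Y(v)$ and
$$
X^*:=I\bigl(y^*/\bar V_T^{v,\pi^*}\bigr)=I(y^*\widehat Z_T/v),
$$
one has $E[\widehat Z_T X^*]=\mathcal W(y^*)=v$ by construction. The standard supergradient inequality for concave $U$, namely $U(x)\le U(I(a))+a(x-I(a))$ for all $x\ge 0$, $a>0$ (use $U'(I(a))=a$), applied at $a=y^*\widehat Z_T/v$ gives, for \emph{any} admissible $\pi\in\mathcal A$,
$$
E[U(\bar V_T^{v,\pi})]\le E[U(X^*)]+\frac{y^*}{v}\bigl(E[\widehat Z_T\bar V_T^{v,\pi}]-E[\widehat Z_T X^*]\bigr)\le E[U(X^*)],
$$
where the last inequality uses the supermartingale property $E[\widehat Z_T\bar V_T^{v,\pi}]\le v$ coming from Proposition \ref{P1-1}(b). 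This identifies $X^*$ as the optimal discounted terminal wealth.

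Finally, since $X^*>0$ $P$-a.s.\ and $E[\widehat Z_T X^*]=v<\infty$, Theorem \ref{T1-1} provides a fair strategy $\pi^U\in\mathcal A^F$ with $\bar V_T^{v,\pi^U}=X^*$, which is the sought-after replicating strategy for the right-hand side of \eqref{T4-2-1}. The main obstacle is the invertibility of $\mathcal W$: the boundary behaviour and continuity must be extracted from the sole standing hypothesis that $\mathcal W(y)<\infty$ for every $y>0$, but this follows cleanly from the monotonicity of $I$ together with the strict positivity of $\widehat Z_T$, which supplies global dominating envelopes for monotone/dominated convergence. The remainder of the argument is routine concave-duality bookkeeping.
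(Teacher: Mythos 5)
Your proof is correct and follows essentially the same route as the paper's: the same concave-duality (supergradient) inequality applied at $a=\mathcal{Y}\left(v\right)/\bar{V}_T^{\,v,\pi^*}$, the same use of Theorem \ref{T1-1} to replicate the candidate terminal wealth by a fair strategy, and the invertibility of $\mathcal{W}$, which the paper outsources to Lemma 3.6.2 of \cite{KS2} and you verify directly via monotone/dominated convergence. The only variation is that you obtain optimality over all of $\mathcal{A}$ in one stroke from the supermartingale bound $E\bigl[\widehat{Z}_T\bar{V}_T^{\,v,\pi}\bigr]\leq v$ of Proposition \ref{P1-1}, whereas the paper argues over $\mathcal{A}^F$ (where the correction term vanishes by fairness) and then invokes Lemma \ref{fair}; both are sound.
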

\begin{proof}
Note first that, due to Definition \ref{D4-3}, the function $U'$ admits a strictly decreasing continuous inverse function $I:\left[0,\infty\right]\rightarrow\left[0,\infty\right]$ with $I\left(0\right)=\infty$ and $I\left(\infty\right)=0$. We have then the following well-known result from convex analysis (see e.g. \cite{KS2}, Section 3.4):
\be	\label{T4-2-2}
U\bigl(I\left(y\right)\bigr)-yI\left(y\right) \geq U\left(x\right)-xy
\qquad \text{ for }\quad 0\leq x<\infty,\; 0<y<\infty
\ee
As in Lemma 3.6.2 of \cite{KS2}, it can be shown that the function $\mathcal{W}:\left[0,\infty\right]\rightarrow\left[0,\infty\right]$ is strictly decreasing and continuous and, hence, it admits an inverse function $\mathcal{Y}:\left[0,\infty\right]\rightarrow\left[0,\infty\right]$. Since $\mathcal{W}\bigl(\mathcal{Y}\left(v\right)\bigr)=v$, for any $v\in\left(0,\infty\right)$, Theorem \ref{T1-1} shows that there exists a fair strategy $\pi^U\in\mathcal{A}^F$ such that $\bar{V}^{\,v,\pi^U}_T=I\bigl(\mathcal{Y}\left(v\right)/\bar{V}_T^{\,v,\pi^*}\bigr)$ $P$-a.s. Furthermore, for any $\pi\in\mathcal{A}^F$, the inequality \eqref{T4-2-2} with $y=\mathcal{Y}\left(v\right)/\bar{V}_T^{\,v,\pi^*}$ and $x=\bar{V}^{\,v,\pi}_T$ gives that:
$$	\ba
E\left[U\left(\bar{V}^{\,v,\pi^U}_T\right)\right]
&= E\left[U\left(I\left(\frac{\mathcal{Y}\left(v\right)}{\bar{V}_T^{\,v,\pi^*}}\right)\right)\right]
\geq E\bigl[U\left(\bar{V}^{\,v,\pi}_T\right)\bigr]
+\mathcal{Y}\left(v\right)
E\left[\frac{1}{\bar{V}^{\,v,\pi^*}_T}\left(I\left(\frac{\mathcal{Y}\left(v\right)}{\bar{V}_T^{\,v,\pi^*}}\right)-\bar{V}^{\,v,\pi}_T\right)\right] \\
&= E\bigl[U\left(\bar{V}^{\,v,\pi}_T\right)\bigr]
+\mathcal{Y}\left(v\right)
E\left[\frac{1}{\bar{V}^{\,v,\pi^*}_T}\left(\bar{V}^{\,v,\pi^U}_T-\bar{V}^{\,v,\pi}_T\right)\right]
= E\bigl[U\left(\bar{V}^{\,v,\pi}_T\right)\bigr]
\ea	$$
thus showing that, based also on Lemma \ref{fair}, $\pi^U\in\mathcal{A}^F$ solves Problem \eqref{ProbEUM}.
\end{proof}

\mbox{}

\begin{Rem} $ $
\begin{enumerate}
\item
It is important to observe that Theorem \ref{T4-2} does not rely on the existence of an ELMM. This amounts to saying that we can meaningfully solve expected utility maximisation problems even when no ELMM exists or, equivalently, when the traditional (NFLVR) no-arbitrage-type condition fails to hold. The crucial assumption for the validity of Theorem \ref{T4-2} is Assumption \ref{A1-2}, which ensures that the financial market is viable, in the sense that there are no arbitrages of the first kind (compare Theorem \ref{ThmKK} and Corollary \ref{NOarb}). 
\item
The assumption that the function $\mathcal{W}\left(y\right):=\E\left[\widehat{Z}_T\,I\bigl(y/\bar{V}_T^{\,v,\pi^*}\bigr)\right]$ be finite for every $y\in\left(0,\infty\right)$ can be replaced by suitable technical conditions on the utility function $U$ and on the processes $\mu$ and $\sigma$ (see Remarks 3.6.8 and 3.6.9 in \cite{KS2} for more details).
\end{enumerate}
\end{Rem}

Having solved in general the expected utility maximisation problem, we are now in a position to give the definition of the \emph{utility indifference price}, in the spirit of \cite{D} (compare also \cite{RG}, Section 4.2, \cite{PH}, Definition 11.4.1, and \cite{PR2}, Definition 10)\footnote{In \cite{RG} and \cite{PR2} the authors generalise Definition \ref{D4-4} to an arbitrary time $t\in\left[0,T\right]$. However, since the results and the techniques remain essentially unchanged, we only consider the basic case $t=0$.}. Until the end of this Section, we let $U$ be a utility function, in the sense of Definition \ref{D4-3}, such that all expected values below exist and are finite.

\begin{Def}	\label{D4-4}
Let $H$ be a positive $\F$-measurable contingent claim and $v\in\left(0,\infty\right)$. For $p\geq 0$, let us define, for a given utility function $U$, the function $W^U_p:\left[0,1\right]\rightarrow\left[0,\infty\right)$ as follows:
\be	\label{E4-4}
W_p^U\left(\varepsilon\right) := \E\left[U\left(\left(v-\varepsilon p\right)\bar{V}_T^{\,\pi^U}+\,\varepsilon\bar{H}\right)\right]
\ee
where $\pi^U\in\mathcal{A}^F$ solves Problem \eqref{ProbEUM} for the utility function $U$. 
The \emph{utility indifference price} of the contingent claim $H$ is defined as the value $p\left(H\right)$ which satisfies the following condition:
\be	\label{E4-5}
\underset{\varepsilon\rightarrow 0}{\lim}\,
\frac{W_{p\left(H\right)}^U\left(\varepsilon\right)-W_{p\left(H\right)}^U\left(0\right)}{\varepsilon}=0
\ee
\end{Def}

Definition \ref{D4-4} is based on a ``marginal rate of substitution'' argument, as first pointed out in \cite{D}. In fact, $p\left(H\right)$ can be thought of as the value at which an investor is marginally indifferent between the two following alternatives:
\begin{itemize}
\item invest an infinitesimal part $\varepsilon p\left(H\right)$ of the initial endowment $v$ into the contingent claim $H$ and invest the remaining wealth $\bigl(v-\varepsilon p\left(H\right)\bigr)$ according to the optimal trading strategy $\pi^U$;
\item ignore the contingent claim $H$ and simply invest the whole initial endowment $v$ according to the optimal trading strategy $\pi^U$.
\end{itemize}  

The following simple result, essentially due to \cite{D} (compare also \cite{PH}, Section 11.4), gives a general representation of the utility indifference price $p\left(H\right)$.

\begin{Prop}	\label{P4-3}
Let $U$ be a utility function and $H$ a positive $\F$-measurable contingent claim.  The utility indifference price $p\left(H\right)$ can be represented as follows:
\be	\label{P4-3-1}
p\left(H\right)=\frac{\E\left[U'\left(\bar{V}_T^{\,v,\pi^U}\right)\,\bar{H}\right]}
{\E\left[U'\left(\bar{V}_T^{\,v,\pi^U}\right)\,\bar{V}_T^{\,\pi^U}\right]}
\ee
\end{Prop}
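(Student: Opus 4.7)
The plan is to read the defining condition \eqref{E4-5} as the vanishing of the right-hand derivative at $\varepsilon = 0$ of the smooth function $\varepsilon \mapsto W^U_p(\varepsilon)$, compute this derivative explicitly by differentiating under the expectation, and then solve the resulting \emph{linear} equation in $p$ to extract the representation \eqref{P4-3-1}.

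First, I would record the scaling property $\bar{V}_T^{\,v,\pi^U} = v\,\bar{V}_T^{\,\pi^U}$ noted after Definition \ref{D1-1}, so that $W^U_p(0) = E[U(\bar{V}_T^{\,v,\pi^U})]$ and, for general $\varepsilon$, the integrand in \eqref{E4-4} depends on $\varepsilon$ in an affine way through the positive random variable $(v - \varepsilon p)\,\bar{V}_T^{\,\pi^U} + \varepsilon \bar{H}$. Differentiating pointwise in $\varepsilon$ yields
\be
\frac{\partial}{\partial \varepsilon} U\bigl((v - \varepsilon p)\,\bar{V}_T^{\,\pi^U} + \varepsilon \bar{H}\bigr)
= U'\bigl((v - \varepsilon p)\,\bar{V}_T^{\,\pi^U} + \varepsilon \bar{H}\bigr)\bigl(\bar{H} - p\,\bar{V}_T^{\,\pi^U}\bigr).
\ee
Evaluated at $\varepsilon = 0$ this becomes $U'(\bar{V}_T^{\,v,\pi^U})(\bar{H} - p\,\bar{V}_T^{\,\pi^U})$.

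Second, I would interchange derivative and expectation, which is the one genuinely technical step. Under the running assumption that all the relevant expectations in Definition \ref{D4-4} exist and are finite, together with the monotonicity of $U'$ (which allows one to dominate the difference quotient on a small interval $\varepsilon \in [0,\varepsilon_0]$ by $U'((v - \varepsilon_0 p)\,\bar{V}_T^{\,\pi^U})(\bar{H} + p\,\bar{V}_T^{\,\pi^U})$, an integrable random variable), dominated convergence gives
\be
\frac{d}{d\varepsilon} W^U_p(\varepsilon)\bigg|_{\varepsilon = 0}
= E\bigl[U'(\bar{V}_T^{\,v,\pi^U})\,\bar{H}\bigr] - p\, E\bigl[U'(\bar{V}_T^{\,v,\pi^U})\,\bar{V}_T^{\,\pi^U}\bigr].
\ee
This interchange is the main obstacle: one has to verify integrability of the differentiated integrand uniformly on a one-sided neighborhood of $0$, which is where the implicit integrability hypothesis on $U$, $\pi^U$ and $H$ is really used.

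Finally, setting this derivative equal to zero in accordance with \eqref{E4-5} yields a single linear equation in $p(H)$. The denominator $E[U'(\bar{V}_T^{\,v,\pi^U})\,\bar{V}_T^{\,\pi^U}]$ is strictly positive, since $U' > 0$ on $(0,\infty)$ by Definition \ref{D4-3} and $\bar{V}_T^{\,\pi^U} > 0$ $P$-a.s. (compare Remark \ref{positive} applied to the martingale deflator $\widehat{Z}$, noting $\bar{V}_T^{\,\pi^U} = 1/\widehat{Z}_T$ by Proposition \ref{P2-1}). Dividing through gives \eqref{P4-3-1}, completing the proof.
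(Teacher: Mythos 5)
Your argument is correct and is essentially the paper's own proof: the paper packages the same computation as a first-order Taylor expansion of $W^U_p\left(\varepsilon\right)$ inside the expectation in \eqref{E4-4} and inserts it into \eqref{E4-5}, while you justify the interchange of derivative and expectation by domination and then solve the resulting linear equation in $p$, noting the denominator is strictly positive. One small slip in your justification: the identity $\bar{V}_T^{\,\pi^U}=1/\widehat{Z}_T$ from Proposition \ref{P2-1} holds for the growth-optimal strategy $\pi^*$ (hence for $\pi^U$ only in the logarithmic case); the strict positivity of $\bar{V}_T^{\,\pi^U}$ that you actually need follows directly from the exponential form \eqref{E1-5} in Definition \ref{D1-1}.
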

\begin{proof}
Using equation \eqref{E4-4}, let us write the following Taylor's expansion:
\be	\label{P4-3-2}	\ba
W_p^U\left(\varepsilon\right)&=
\E\left[U\left(\bar{V}_T^{\,v,\pi^U}\right)
+\varepsilon\,U'\left(\bar{V}_T^{\,v,\pi^U}\right)\left(\bar{H}-p\,\bar{V}_T^{\,\pi^U}\right)
+o\left(\varepsilon\right)\right]	\\
&= W_p^U\left(0\right)+\varepsilon\,\E\left[U'\left(\bar{V}_T^{\,v,\pi^U}\right)\left(\bar{H}-p\,\bar{V}_T^{\,\pi^U}\right)\right]+o\left(\varepsilon\right)
\ea	\ee
If we insert \eqref{P4-3-2} into \eqref{E4-5} we get:
$$
\E\left[U'\left(\bar{V}_T^{\,v,\pi^U}\right)\left(\bar{H}-p\left(H\right)\,\bar{V}_T^{\,\pi^U}\right)\right]=0
$$
from which \eqref{P4-3-1} immediately follows.
\end{proof}

By combining Theorem \ref{T4-2} with Proposition \ref{P4-3}, we can easily prove the following Corollary, which yields an explicit and ``universal'' representation of the utility indifference price $p\left(H\right)$ (compare also \cite{RG}, Theorem 8, \cite{PH}, Section 11.4, and \cite{PR2}, Proposition 11). 

\begin{Cor}	\label{C4-1}
Let $H$ be a positive $\F$-measurable contingent claim. Then, under the assumptions of Theorem \ref{T4-2}, for any utility function $U$ the utility indifference price coincides with the real-world price (at $t=0$), namely:
$$
p\left(H\right)=\E\left[\frac{H}{V_T^{\,\pi^*}}\right]
$$
\end{Cor}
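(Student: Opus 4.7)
The plan is to combine the explicit form of the optimal benchmarked wealth given by Theorem \ref{T4-2} with the representation formula of Proposition \ref{P4-3}, and then exploit the fairness of $\pi^U$ together with the scaling property of portfolio processes. Everything should reduce to a simple algebraic manipulation inside the expectations; no fine analytic arguments are needed.

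The first step is to apply $U'$ to both sides of \eqref{T4-2-1} from Theorem \ref{T4-2}. Since $I$ is the inverse of $U'$, this yields
$$U'\bigl(\bar{V}_T^{\,v,\pi^U}\bigr) = \frac{\mathcal{Y}(v)}{\bar{V}_T^{\,v,\pi^*}} = \frac{\mathcal{Y}(v)}{v\,\bar{V}_T^{\,\pi^*}},$$
where in the second equality I use the scaling $\bar{V}_T^{\,v,\pi^*}=v\,\bar{V}_T^{\,\pi^*}$ noted after Definition \ref{D1-1}. Substituting this expression into the numerator and denominator of the representation \eqref{P4-3-1} from Proposition \ref{P4-3} factors out the common constant $\mathcal{Y}(v)/v$, so it suffices to show that
$$E\left[\frac{\bar{H}}{\bar{V}_T^{\,\pi^*}}\right]\Big/E\left[\frac{\bar{V}_T^{\,\pi^U}}{\bar{V}_T^{\,\pi^*}}\right] \;=\; E\!\left[\frac{H}{V_T^{\,\pi^*}}\right].$$

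The numerator is immediate: since $\bar H = H/S_T^0$ and $\bar V_T^{\,\pi^*}=V_T^{\,\pi^*}/S_T^0$, the factor $S_T^0$ cancels and one obtains exactly $E[H/V_T^{\,\pi^*}]$. For the denominator, the key observation is that by construction $\pi^U\in\mathcal{A}^F$, so by Definition \ref{D2-3} the benchmarked process $\hat V^{\,1,\pi^U}=\bar V^{\,\pi^U}/\bar V^{\,\pi^*}$ is a true martingale, hence
$$E\!\left[\frac{\bar{V}_T^{\,\pi^U}}{\bar{V}_T^{\,\pi^*}}\right] \;=\; \frac{\bar{V}_0^{\,\pi^U}}{\bar{V}_0^{\,\pi^*}} \;=\; 1.$$
Putting the two computations together gives $p(H)=E[H/V_T^{\,\pi^*}]$.

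There is no real obstacle in this proof: the content sits entirely in the earlier results. The only point that deserves care is keeping track of the scaling $\bar V^{\,v,\pi}=v\bar V^{\,\pi}$ (which must be applied both to $\pi^*$ and to $\pi^U$) and of the fact that the martingale, rather than merely supermartingale, property of $\bar V^{\,\pi^U}/\bar V^{\,\pi^*}$ is exactly what licences the denominator to equal $1$ — this is precisely why Lemma \ref{fair} and Theorem \ref{T4-2} place the optimiser $\pi^U$ in $\mathcal{A}^F$ rather than in the larger class $\mathcal{A}$.
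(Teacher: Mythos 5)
Your argument is correct and follows essentially the same route as the paper's own proof: substitute the explicit optimal wealth \eqref{T4-2-1} into the representation \eqref{P4-3-1} using $U'(I(y))=y$, cancel the constant $\mathcal{Y}(v)/v$ via the scaling $\bar{V}_T^{\,v,\pi^*}=v\,\bar{V}_T^{\,\pi^*}$, and use the fairness of $\pi^U$ to reduce the denominator to $\bar{V}_0^{\,\pi^U}/\bar{V}_0^{\,\pi^*}=1$, while the factor $S_T^0$ cancels in the numerator. No gaps; your closing remark on why the martingale (not merely supermartingale) property of the benchmarked optimal portfolio is the essential ingredient matches the paper's use of $\pi^U\in\mathcal{A}^F$.
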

\begin{proof}
The present assumptions imply that, due to \eqref{T4-2-1}, we can rewrite \eqref{P4-3-1} as follows:
\be	\label{C4-1-1}
p\left(H\right)=
\frac{\E\left[U'\biggl(I\left(\frac{\mathcal{Y}\left(v\right)}{\bar{V}_T^{\,v,\pi^*}}\right)\biggr)\,\bar{H}\right]}
{\E\left[U'\biggl(I\left(\frac{\mathcal{Y}\left(v\right)}{\bar{V}_T^{\,v,\pi^*}}\right)\biggr)\,\bar{V}_T^{\,\pi^U}\right]}
= \frac{\E\left[\frac{\mathcal{Y}\left(v\right)}{\bar{V}_T^{\,v,\pi^*}}\,\bar{H}\right]}
{\E\left[\frac{\mathcal{Y}\left(v\right)}{\bar{V}_T^{\,v,\pi^*}}\,\bar{V}_T^{\,\pi^U}\right]}	
= \frac{\frac{1}{v}\,\E\left[\frac{\bar{H}}{\bar{V}_T^{\,\pi^*}}\right]}
{\frac{1}{v}\frac{\bar{V}_0^{\,\pi^U}}{\bar{V}_0^{\,\pi^*}}}	
= \E\left[\frac{H}{V_T^{\,\pi^*}}\right]
\ee
where the third equality uses the fact that $\pi^U\in\mathcal{A}^F$.
\end{proof}

\begin{Rem}
As can be seen from Definition \ref{D4-4}, the utility indifference price $p\left(H\right)$ depends a priori both on the initial endowment $v$ and on the chosen utility function $U$. The remarkable result of Corollary \ref{C4-1} consists in the fact that, under the hypotheses of Theorem \ref{T4-2}, the utility indifference price $p\left(H\right)$ represents an ``universal'' pricing rule, since it depends neither on $v$ nor on the utility function $U$ and, furthermore, it coincides with the real-world pricing formula.
\end{Rem}

\section{Conclusions, extensions and further developments}	\label{Concl}

In this work, we have studied a general class of diffusion-based models for financial markets, weakening the traditional assumption that the (NFLVR) no-arbitrage-type condition holds or, equivalently, that there exists an ELMM. We have shown that the financial market may still be viable, in the sense that arbitrages of the first kind are not permitted, as soon as the market price of risk process satisfies a crucial square-integrability condition. In particular, we have shown that the failure of the existence of an ELMM does not preclude the completeness of the financial market and the solvability of portfolio optimisation problems. Furthermore, in the context of a complete market, contingent claims can be consistently evaluated by relying on the real-world pricing formula.

We have chosen to work in the context of a multi-dimensional diffusion-based modelling structure since this allows us to consider many popular and widely employed financial models and, at the same time, avoid some of the technicalities which arise in more general settings. However, most of the results of the present paper carry over to a more general and abstract setting based on continuous semimartingales, as shown in Chapter 4 of \cite{Fo}. In particular, the latter work also deals with the robustness of the absence of arbitrages of the first kind with respect to several changes in the underlying modelling structure, namely changes of numéraire, absolutely continuous changes of the reference probability measure and restrictions and enlargements of the reference filtration.

The results of Section \ref{S4.3} on the valuation of contingent claims have been obtained under the assumption of a complete financial market. These results, namely that the real-world pricing formula \eqref{E4-1} coincides with the utility indifference price, can be extended to the more general context of an incomplete financial market, provided that we choose a logarithmic utility function.

\begin{Prop}	\label{T4-3}
Suppose that Assumption \ref{A1-2} holds. Let $H$ be a positive $\F$-measurable contingent claim such that $\E\left[\frac{\widehat{Z}_T}{S_T^0}\,H\right]<\infty$ and let $U\left(x\right)=\log\left(x\right)$. Then, the log-utility indifference price $p_{\,\log}\left(H\right)$ is explicitly given as follows:
$$
p_{\,\log}\left(H\right)=\E\left[\frac{H}{V_T^{\,\pi^*}}\right]
$$
\end{Prop}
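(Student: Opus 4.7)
The plan is to reduce the claim to Proposition \ref{P4-3} by identifying the log-utility optimal strategy with the growth-optimal strategy $\pi^*$ (which exists and is admissible in the incomplete setting under Assumption \ref{A1-2}, by Theorem \ref{T2-1}). Once this identification is made, Proposition \ref{P4-3} applies directly, since its derivation only rests on a first-order expansion of $W_p^U(\varepsilon)$ and nowhere uses market completeness.

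The first step is to show that for $U(x) = \log x$ the optimiser of Problem \eqref{ProbEUM} is $\pi^U = \pi^*$, so that $\bar{V}_T^{\,v,\pi^U} = v\,\bar{V}_T^{\,\pi^*}$. For any $\pi\in\mathcal{A}$, the scaling property $\bar{V}^{\,v,\pi^*} = v\,\bar{V}^{\,\pi^*}$ gives $\bar{V}_T^{\,v,\pi}/\bar{V}_T^{\,v,\pi^*} = \hat{V}_T^{\,v,\pi}/v$, where $\hat{V}^{\,v,\pi}$ is the benchmarked portfolio. By Corollary \ref{C2-1} the latter is a non-negative supermartingale, so $E[\hat{V}_T^{\,v,\pi}] \leq v$. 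Jensen's inequality for $\log$ then yields
\be
E\!\left[\log\bar{V}_T^{\,v,\pi}\right] - E\!\left[\log\bar{V}_T^{\,v,\pi^*}\right] = E\!\left[\log\bigl(\hat{V}_T^{\,v,\pi}/v\bigr)\right] \leq \log\!\left(E\bigl[\hat{V}_T^{\,v,\pi}\bigr]/v\right) \leq 0,
\ee
which shows that $\pi^*$ is log-optimal over all of $\mathcal{A}$ (hence in particular over $\mathcal{A}^F$, into which we may restrict by Lemma \ref{fair} if completeness were assumed; no such restriction is needed here).

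The second step is to apply the indifference-price recipe of Definition \ref{D4-4} with $U(x)=\log x$ and $\bar{V}_T^{\,v,\pi^U}=v\,\bar{V}_T^{\,\pi^*}$. Writing
\be
W^{\log}_p(\varepsilon) = E\!\left[\log\bigl((v-\varepsilon p)\bar{V}_T^{\,\pi^*} + \varepsilon\bar{H}\bigr)\right],
\ee
the integrand is concave in $\varepsilon$ and differentiable at $\varepsilon=0$ with derivative $(\bar{H}-p\,\bar{V}_T^{\,\pi^*})/(v\,\bar{V}_T^{\,\pi^*})$. The standing integrability hypothesis $E\bigl[\widehat{Z}_T H/S_T^0\bigr] = E\bigl[\bar{H}/\bar{V}_T^{\,\pi^*}\bigr] < \infty$ (using Proposition \ref{P2-1}) supplies the $L^1$-domination needed to pass the limit $\varepsilon\to 0$ under the expectation. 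Hence
\be
\underset{\varepsilon\to 0}{\lim}\,\frac{W^{\log}_p(\varepsilon)-W^{\log}_p(0)}{\varepsilon} = \frac{1}{v}\,E\!\left[\frac{\bar{H}}{\bar{V}_T^{\,\pi^*}}\right] - \frac{p}{v}.
\ee
Setting this to zero and using $\bar{H}/\bar{V}_T^{\,\pi^*} = H/V_T^{\,\pi^*}$ yields $p_{\log}(H) = E\bigl[H/V_T^{\,\pi^*}\bigr]$, as required.

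The main obstacle is the log-optimality argument in the first step: it must be carried out \emph{pathwise via the supermartingale property}, rather than through an Euler/verification approach, because in the incomplete setting we have no ELMM and no martingale-representation shortcut as in Theorem \ref{T4-2}. A secondary technical point is the dominated-convergence justification in step two; this is routine once one notes that $\log\bigl((v-\varepsilon p)\bar{V}_T^{\,\pi^*}+\varepsilon\bar{H}\bigr)$ is squeezed, for small $\varepsilon$, between integrable functions built from $\log\bar{V}_T^{\,\pi^*}$, which is integrable since $\pi^*$ solves the log-utility problem, and $\bar{H}/\bar{V}_T^{\,\pi^*}$.
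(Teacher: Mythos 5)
Your proof is correct, but it takes a genuinely different route from the paper's in the key first step. The paper identifies the log-optimal strategy by recycling the machinery of Theorem \ref{T4-2}: with $I(x)=x^{-1}$ one gets $\mathcal{W}(y)=v/y$ via Proposition \ref{P2-1}, hence $\mathcal{Y}(v)=1$ and the optimal terminal wealth $I\bigl(\mathcal{Y}(v)/\bar{V}_T^{\,v,\pi^*}\bigr)=\bar{V}_T^{\,v,\pi^*}$, i.e.\ the GOP wealth itself; it then reads off $p_{\log}(H)$ from the representation of Proposition \ref{P4-3} exactly as in \eqref{C4-1-1}. You instead prove log-optimality of $\pi^*$ directly from the supermartingale (numeraire) property of benchmarked portfolios (Corollary \ref{C2-1}) together with Jensen's inequality, and then re-derive the first-order condition by differentiating $W^{\log}_p$ at $\varepsilon=0$, which is precisely the content of Proposition \ref{P4-3} specialised to $U=\log$. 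Your step 1 is arguably the cleaner argument for this incomplete-market statement: Theorem \ref{T4-2} was proved under the completeness assumptions of Theorem \ref{T1-1}, and the paper's appeal to it here works only because the candidate optimal wealth is the GOP wealth itself, so no replication is actually required; your supermartingale/Jensen argument makes this point explicit and avoids any reference to completeness. Two minor caveats: the Jensen comparison needs the expected log-utilities to be well defined (to rule out an $\infty-\infty$ cancellation when you split the difference), and the dominated-convergence bound you invoke uses integrability of $\log\bar{V}_T^{\,\pi^*}$, which does not follow merely from the fact that $\pi^*$ is optimal (the optimal value could be infinite under Assumption \ref{A1-2} alone). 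Both points are covered by the paper's standing convention, stated just before Definition \ref{D4-4}, that all expected values involved exist and are finite, and the paper's own proof (as well as that of Proposition \ref{P4-3}, which is a formal Taylor expansion) relies on the same convention, so your level of rigour matches the paper's.
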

\begin{proof}
Note first that $U\left(x\right)=\log\left(x\right)$ is a well-defined utility function in the sense of Definition \ref{D4-3}. Let us first consider Problem \eqref{ProbEUM} for $U\left(x\right)=\log\left(x\right)$. Using the notations introduced in the proof of Theorem \ref{T4-2}, the function $I$ is now given by $I\left(x\right)=x^{-1}$, for $x\in\left(0,\infty\right)$. 
Due to Proposition \ref{P2-1}, we have $\mathcal{W}\left(y\right)=v/y$ for all $y\in\left(0,\infty\right)$ and, hence, $\mathcal{Y}\left(v\right)=1$. Then, equation \eqref{T4-2-1} implies that $\bar{V}^{\,v,\pi^U}_T=\bar{V}^{\,v,\pi^*}_T$, meaning that the growth-optimal strategy $\pi^*\in\mathcal{A}^F$ solves Problem \eqref{ProbEUM} for a logarithmic utility function.
The same computations as in \eqref{C4-1-1} imply then the following:
$$
p_{\,\log}\left(H\right)
=\frac{\E\left[\frac{\bar{H}}{\bar{V}_T^{\,v,\pi^*}}\right]}
{\E\left[\frac{1}{\bar{V}_T^{\,v,\pi^*}}\bar{V}_T^{\,\pi^*}\right]}
=\E\left[\frac{H}{V_T^{\,\pi^*}}\right]
$$
\end{proof}

The interesting feature of Proposition \ref{T4-3} is that the claim $H$ does not need to be replicable. However, Proposition \ref{T4-3} depends on the choice of the logarithmic utility function and does not hold for a generic utility function $U$, unlike the ``universal'' result shown in Corollary \ref{C4-1}. Of course, the result of Proposition \ref{T4-3} is not surprising, due to the well-known fact that the growth-optimal portfolio solves the log-utility maximisation problem, see e.g. \cite{Be}, \cite{CL} and \cite{KK}.

\begin{Rem}
Following Section 11.3 of \cite{PH}, let us suppose that the discounted GOP process $\bar{V}^{\,\pi^*}=\left(\bar{V}^{\,\pi^*}_t\right)_{0\leq t \leq T}$ has the Markov property under $P$. Under this assumption, one can obtain an analogous version of Theorem \ref{T4-2} also in the case of an incomplete financial market model (see \cite{PH}, Theorem 11.3.3). In fact, the first part of the proof of Theorem \ref{T4-2} remains unchanged. One then proceeds by considering the martingale $M=\left(M_t\right)_{0\leq t \leq T}$ defined by $M_t:=\E\bigl[\widehat{Z}_T\,I\bigl(\mathcal{Y}\left(v\right)/\bar{V}_T^{\,v,\pi^*}\bigr)\bigl|\F_t\bigr]=\E\bigl[1/\bar{V}_T^{\,\pi^*}\,I\bigl(\mathcal{Y}\left(v\right)/\bar{V}_T^{\,v,\pi^*}\bigr)\bigl|\F_t\bigr]$, for $t\in\left[0,T\right]$.
Due to the Markov property, $M_t$ can be represented as $g\bigl(t,\bar{V_t}^{\,\pi^*}\bigr)$, for every $t\in\left[0,T\right]$. 
If the function $g$ is sufficiently smooth one can apply It\^o's formula and express $M$ as the value process of a benchmarked fair portfolio. If one can shown that the resulting strategy satisfies the admissibility conditions (see Definition \ref{D1-1}), Proposition \ref{P4-3} and Corollary \ref{C4-1} can then be applied to show that the real-world pricing formula coincides with the utility indifference price (for any utility function!). Always in a diffusion-based Markovian context, a detailed analysis to this effect can also be found in the recent paper \cite{Ru}.
\end{Rem}

We want to point out that the modeling framework considered in this work is not restricted to stock markets, but can also be applied to the valuation of fixed income products. In particular, in \cite{BLNSP} and \cite{PH}, Section 10.4, the authors develop a version of the Heath-Jarrow-Morton approach to the modeling of the term structure of interest rates without relying on the existence of a martingale measure. In this context, they derive a real-world version of the classical Heath-Jarrow-Morton drift condition, relating the drift and diffusion terms in the system of SDEs describing the evolution of forward interest rates. Unlike in the traditional setting, this real-world drift condition explicitly involves the market price of risk process.  

Finally, we want to mention that the concept of real-world pricing has also been studied in the context of incomplete information models, meaning that investors are supposed to have access only to the information contained in a sub-filtration of the original full-information filtration $\FF$, see \cite{RG}, \cite{PR1} and \cite{PR2}.

\vspace{1.5cm}
\noindent
\textbf{Acknowledgements:}
Part of this work has been inspired by a series of research seminars organised by the second author at the Department of Mathematics of the Ludwig-Maximilians-Universit\"at M\"unchen during the Fall Semester 2009. The first author gratefully acknowledges financial support from the ``Nicola Bruti-Liberati'' scholarship for studies in Quantitative Finance. We thank an anonymous referee for the careful reading and for several comments that contributed to improve the paper.
\vspace{1cm}


\begin{thebibliography}{99}

\bibitem{AS1}
\textsc{Ansel, J.P. \& Stricker, C.} (1992), Lois de martingale, densités et décomposition de F\"ollmer Schweizer, \emph{Annales de l'Institut Henri Poincaré (B)}, 28(3): 375--392.

\bibitem{AS2}
\textsc{Ansel, J.P. \& Stricker, C.} (1993a), Unicité et existence de la loi minimale, \emph{Séminaire de probabilités}, XXVII: 22-29.

\bibitem{AS3}
\textsc{Ansel, J.P. \& Stricker, C.} (1993b), Décomposition de Kunita-Watanabe, \emph{Séminaire de probabilités}, XXVII: 30--32.

\bibitem{BT}
\textsc{Ball, C.A. \& Torous, W.N.} (1983), Bond price dynamics and options, \emph{Journal of Financial and Quantitative Analysis}, 18(4): 517--531.

\bibitem{Be}
\textsc{Becherer, D.} (2001), The numeraire portfolio for unbounded semimartingales, \emph{Finance and Stochastics}, 5: 327-341.

\bibitem{BLNSP}
\textsc{Bruti-Liberati, N., Nikitopoulos-Sklibosios, C. \& Platen, E.} (2010), Real-world jump-diffusion term structure models, \emph{Quantitative Finance}, 10(1): 23--37.

\bibitem{CL}
\textsc{Christensen, M.M. \& Larsen, K.} (2007), No arbitrage and the growth optimal portfolio, \emph{Stochastic Analysis and Applications}, 25: 255--280.

\bibitem{CH}
\textsc{Cox, A.M.G. \& Hobson, D.G.} (2005), Local martingales, bubbles and options prices, \emph{Finance and Stochastics}, 9: 477--492.

\bibitem{D}
\textsc{Davis, M.H.A.} (1997), Option pricing in incomplete markets, in: Dempster, M.A.H. \& Pliska, S.R. (eds.): \emph{Mathematics of Derivative Securities}, 227--254, Cambridge University Press.

\bibitem{DS}
\textsc{Delbaen, F. \& Schachermayer, W.} (1994), A general version of the fundamental theorem of asset pricing, \emph{Mathematische Annalen}, 300: 463--520.

\bibitem{DS3}
\textsc{Delbaen, F. \& Schachermayer, W.} (1995a), Arbitrage possibilities in Bessel processes and their relations to local martingales, \emph{Probability Theory and Related Fields}, 102: 357--366.

\bibitem{DS4}
\textsc{Delbaen, F. \& Schachermayer, W.} (1995b), The existence of absolutely continuous local martingale measures, \emph{Annals of Applied Probability}, 5(4): 926--945.

\bibitem{DS2}
\textsc{Delbaen, F. \& Schachermayer, W.} (2006), \emph{The Mathematics of Arbitrage}, Springer, Berlin - Heidelberg - New York.

\bibitem{FK}
\textsc{Fernholz, R. \& Karatzas, I.} (2009), Stochastic portfolio theory: an overview, in: Bensoussan, A. \& Zhang, Q. (eds.): \emph{Mathematical Modeling and Numerical Methods in Finance}, Handbook of Numerical Analysis, vol. XV, 89--167, North-Holland, Oxford.

\bibitem{Fo}
\textsc{Fontana, C.} (2012), \emph{Four Essays in Financial Mathematics}, PhD Thesis, University of Padova.

\bibitem{RG}
\textsc{Galesso, G. \& Runggaldier, W.J.} (2010), Pricing without equivalent martingale measures under complete and incomplete observation, in: Chiarella, C. \& Novikov, A. (eds.), \emph{Contemporary Quantitative Finance: Essays in Honour of Eckhard Platen}, 99--121, Springer, Berlin - Heidelberg.

\bibitem{HLW}
\textsc{Heston, S.L., Loewenstein, M. \& Willard, G.A.} (2007), Options and bubbles, \emph{Review of Financial Studies}, 20(2): 359--390.

\bibitem{H}
\textsc{Hulley, H.} (2010), The economic plausibility of strict local martingales in financial modelling, in: Chiarella, C. \& Novikov, A. (eds.): \emph{Contemporary Quantitative Finance: Essays in Honour of Eckhard Platen}, 53--75, Springer, Berlin - Heidelberg.

\bibitem{HS}
\textsc{Hulley, H. \& Schweizer, M.} (2010), M$^{\text{6}}$ - on minimal market models and minimal martingale measures, in: Chiarella, C. \& Novikov, A. (eds.): \emph{Contemporary Quantitative Finance: Essays in Honour of Eckhard Platen}, 35--51, Springer, Berlin - Heidelberg.

\bibitem{JM}
\textsc{Jarrow, R.A. \& Madan, D.B.} (1999), Hedging contingent claims on semimartingales, \emph{Finance and Stochastics}, 3: 111--134.

\bibitem{JPS1}
\textsc{Jarrow, R.A., Protter, P. \& Shimbo, K.} (2007), Asset price bubbles in complete markets, in: Fu, M.C., Jarrow, R.A., Yen, J.-Y.J. \& Elliott, R.J. (eds.), \emph{Advances in Mathematical Finance}, 97--122, Birkh\"auser, Boston.

\bibitem{JPS2}
\textsc{Jarrow, R.A., Protter, P. \& Shimbo, K.} (2010), Asset price bubbles in incomplete markets, \emph{Mathematical Finance}, 20(2): 145--185. 

\bibitem{Ka}
\textsc{Kabanov, Y.} (1997), On the FTAP of Kreps-Delbaen-Schachermayer, in: Kabanov, Y., Rozovskii, B.L. \& Shiryaev, A.N. (eds.), \emph{Statistics and Control of Stochastic Processes: The Liptser Festschrift}, 191--203, World Scientific, Singapore.

\bibitem{KK}
\textsc{Karatzas, I. \& Kardaras, K.} (2007), The numeraire portfolio in semimartingale financial models, \emph{Finance and Stochastics}, 11: 447--493.

\bibitem{KS1}
\textsc{Karatzas, I. \& Shreve, S.E.} (1991), \emph{Brownian Motion and Stochastic Calculus}, second edition, Springer, New York.

\bibitem{KS2}
\textsc{Karatzas, I. \& Shreve, S.E.} (1998), \emph{Methods of Mathematical Finance}, Springer, New York.

\bibitem{Ka1}
\textsc{Kardaras, C.} (2010), Finitely additive probabilities and the fundamental theorem of asset pricing, in: Chiarella, C. \& Novikov, A. (eds.): \emph{Contemporary Quantitative Finance: Essays in Honour of Eckhard Platen}, 19--34, Springer, Berlin - Heidelberg.

\bibitem{Ka2}
\textsc{Kardaras, C.} (2012), Market viability via absence of arbitrage of the first kind, \emph{Finance and Stochastics}, 16: 651--667.

\bibitem{LS}
\textsc{Levental, S. \& Skorohod, A.S.} (1995), A necessary and sufficient condition for absence of arbitrage with tame portfolios, \emph{Annals of Applied Probability}, 5(4): 906--925.

\bibitem{LW}
\textsc{Loewenstein, M. \& Willard, G.A.} (2000), Local martingales, arbitrage, and viability: Free snacks and cheap thrills, \emph{Economic Theory}, 16(1): 135--161.

\bibitem{Lo}
\textsc{Londono, J.A.} (2004), State tameness: a new approach for credit constraints, \emph{Electronic Communications in Probability}, 9: 1--13.

\bibitem{MTZ}
\textsc{MacLean, L.C., Thorp, E.O. \& Ziemba, W.T.} (2010), Long-term capital growth: the good and bad properties of the Kelly and fractional Kelly capital growth criteria, \emph{Quantitative Finance}, 10(7): 681--687.

\bibitem{P0}
\textsc{Platen, E.} (2002), Arbitrage in continuous complete markets, \emph{Advances in Applied Probability}, 34: 540--558.

\bibitem{P1}
\textsc{Platen, E.} (2006), A benchmark approach to finance, \emph{Mathematical Finance}, 16(1): 131--151.

\bibitem{P2}
\textsc{Platen, E.} (2011), A benchmark approach to investing and pricing, in: MacLean, L.C., Thorp, E.O. \& Ziemba, W.T. (eds.), \emph{The Kelly Capital Growth Investment Criterion}, 409--427, World Scientific, Singapore.

\bibitem{PH}
\textsc{Platen, E. \& Heath, D.} (2006), \emph{A Benchmark Approach to Quantitative Finance}, Springer, Berlin - Heidelberg.

\bibitem{PR1}
\textsc{Platen, E. \& Runggaldier, W.J.} (2005), A benchmark approach to filtering in finance, \emph{Asia-Pacific Financial Markets}, 11: 79--105.

\bibitem{PR2}
\textsc{Platen, E. \& Runggaldier, W.J.} (2007), A benchmark approach to portfolio optimization under partial information, \emph{Asia-Pacific Financial Markets}, 14: 25--43.

\bibitem{RY}
\textsc{Revuz, D. \& Yor, M.} (1999), \emph{Continuous Martingales and Brownian Motion}, third edition, Springer, Berlin - Heidelberg.

\bibitem{Ru}
\textsc{Ruf, J.} (2012), Hedging under arbitrage, to appear in: \emph{Mathematical Finance}.

\bibitem{S1}
\textsc{Schweizer, M.} (1992), Martingale densities for general asset prices, \emph{Journal of Mathematical Economics}, 21: 363--378.

\bibitem{S2}
\textsc{Schweizer, M.} (1995), On the Minimal Martingale Measure and the F\"ollmer-Schweizer decomposition, \emph{Stochastic Analysis and Applications}, 13: 573--599.

\bibitem{Str}
\textsc{Strasser, E.} (2005), Characterization of arbitrage-free markets, \emph{Annals of Applied Probability}, 15(1A): 116--124.

\end{thebibliography}
\end{document}